\newcommand{\undertext}[3][0]{\underset{\text{\parbox{\widthof{$#3$} + #1cm}{\small \centering \setstretch{0.7}{#2}}}}{\underbrace{#3}}}
\renewcommand{\vec}[1]{\ensuremath{\mathbf{#1}}}
\newcommand{\entr}[1]{h(#1)}
\newcommand{\bin}[2]{\mathrm{bin}\left(#1,#2\right)}
\newcommand{\trin}[3]{\mathrm{trin}\left(#1,#2,#3\right)}
\newcommand{\quadrin}[4]{\mathrm{quadrin}\left(#1,#2,#3,#4\right)}
\newcommand{\bigO}[1]{ \mathcal{O} \left(#1 \right)}
\newcommand{\bigOt}[1]{ \widetilde{\mathcal{O}} \left(#1 \right)}
\newcommand{\zo}{\{0,1\}}
\newcommand{\zom}{\{-1,0,1\}}
\newcommand{\zomd}{\{-1,0,1,2\}}
\newcommand{\ket}[1]{\left| #1 \right\rangle}
\newcommand{\poly}{\operatorname{poly}}
\newcommand{\negl}{\operatorname{negl}}
\newcommand{\card}[1]{\left|#1\right|}
\newcommand{\dotprod}{\boldsymbol{\cdot}}
\newcommand{\Exp}[2]{\mathbb{E}_{#1}\bracket{#2}}
\newcommand{\Ber}{\operatorname{Ber}}
\newcommand{\COMMENT}[1]{}
\newcommand{\pfilterhgj}[2]{\mathsf{pf}_1 \left(#1, #2 \right)}
\newcommand{\pfilterbcj}[3]{\mathsf{pf}_2 \left(#1, #2, #3 \right)}
\newcommand{\pfilternous}[5]{\mathsf{pf}_3 \left(#1, #2, #3, #4, #5 \right)}
\newcommand{\pf}{\mathsf{pf}}
\newcommand{\prob}[2][]{\Pr
\ifthenelse{\isempty{#1}}{}{_{#1}}
\ifthenelse{\isempty{#2}}{}{\left[ #2 \right]}
}
\DeclarePairedDelimiter{\set}{\{}{\}}
\DeclarePairedDelimiter{\bracket}{[}{]}
\newif\ifeprint
\newcommand{\AS}[1]{\textcolor{red}{[{\bf Andr\'{e} S:} #1]}}
\newcommand{\XB}[1]{\textcolor{blue}{[{\bf Xavier B:} #1]}}
\newcommand{\YS}[1]{\textcolor{green}{[{\bf Yixin S:} #1]}}
\newcommand{\RB}[1]{\textcolor{orange}{[{\bf Rémi B:} #1]}}
 \renewcommand{\AS}[1]{}
 \renewcommand{\XB}[1]{}
 \renewcommand{\YS}[1]{}
 \renewcommand{\RB}[1]{}
\newtheorem{heuristic}{Heuristic}
\newtheorem{fact}{Fact}
\title{Improved Classical and Quantum Algorithms for Subset-Sum}
\author{Xavier Bonnetain\inst{1} \and Rémi Bricout\inst{2,3} \and André Schrottenloher\inst{3} \and Yixin Shen\inst{4}}
\institute{Institute for Quantum Computing, Department of Combinatorics and Optimization, University of Waterloo, Waterloo, ON, Canada \and Sorbonne Université, Collège Doctoral, F-75005 Paris, France\and Inria, France \and  Université de Paris, IRIF, CNRS, F-75006 Paris, France}
\begin{document}
\maketitle
\renewcommand{\labelitemi}{$\bullet$}

\begin{abstract}
We present new classical and quantum algorithms for solving random subset-sum instances. First, we improve over the Becker-Coron-Joux algorithm (EUROCRYPT 2011) from $\bigOt{2^{0.291 n}}$ down to $\bigOt{2^{0.283 n}}$, using more general representations with values in $\zomd$.

Next, we improve the state of the art of quantum algorithms for this problem in several directions. By combining the Howgrave-Graham-Joux algorithm (EUROCRYPT 2010) and quantum search, we devise an algorithm with asymptotic running time $\bigOt{2^{0.236 n}}$, lower than the cost of the quantum walk based on the same classical algorithm proposed by Bernstein, Jeffery, Lange and Meurer (PQCRYPTO 2013). This algorithm has the advantage of using \emph{classical} memory with quantum random access, while the previously known algorithms used the quantum walk framework, and required \emph{quantum} memory with quantum random access.

We also propose new quantum walks for subset-sum, performing better than the previous best time complexity of $\bigOt{2^{0.226 n}}$ given by Helm and May (TQC 2018). We combine our new techniques to reach a time $\bigOt{2^{0.216 n}}$. This time is dependent on a heuristic on quantum walk updates, formalized by Helm and May, that is also required by the previous algorithms. We show how to partially overcome this heuristic, and we obtain an algorithm with quantum time $\bigOt{2^{0.218 n}}$ requiring only the standard classical subset-sum heuristics.


\end{abstract}

\keywords{subset-sum, representation technique, quantum search, quantum walk, list merging.}

\section{Introduction}
\label{section:intro}

We study the \emph{subset-sum problem}, also known as \emph{knapsack problem}: given $n$ integers $\vec{a} = (a_1, \ldots a_n)$, and a target integer $S$, find an $n$-bit vector $\vec{e} = (e_1, \ldots e_n) \in \zo^n$ such that $\vec{e} \cdot \vec{a} = \sum_i e_i a_i = S$. The \emph{density} of the knapsack instance is defined as $d = n / (\log_2 \max_i a_i)$, and for a random instance $\vec{a}$, it is related to the number of solutions that one can expect.

The decision version of the knapsack problem is NP-complete~\cite{DBLP:books/fm/GareyJ79}. Although certain densities admit efficient algorithms, related to lattice reduction~\cite{DBLP:conf/focs/LagariasO83,DBLP:conf/approx/Lyubashevsky05}, the best algorithms known for the knapsack problem when the density is close to 1 are  exponential-time, which is why we name these instances ``hard'' knapsacks. This problem underlies some cryptographic schemes aiming at post-quantum security (see \emph{e.g.}~\cite{DBLP:conf/tcc/LyubashevskyPS10}), 
and is used as a building block in some quantum hidden shift algorithms~\cite{Arxiv:Bonnetain19}, which have some applications in quantum cryptanalysis of isogeny-based~\cite{BonSch20} and symmetric cryptographic schemes~\cite{DBLP:conf/asiacrypt/BonnetainN18}. 

In this paper, we focus on the case where $d = 1$, where expectedly a single solution exists. Instead of naively looking for the solution $\vec{e}$ via exhaustive search, in time $2^n$, Horowitz and Sahni~\cite{DBLP:journals/jacm/HorowitzS74} proposed to use a meet-in-the-middle approach in $2^{n/2}$ time and memory. The idea is to find a collision between two lists of $2^{n/2}$ subknapsacks, \emph{i.e.} to merge these two lists for a single solution. Schroeppel and Shamir~\cite{DBLP:journals/siamcomp/SchroeppelS81} later improved this to a 4-list merge, in which the memory complexity can be reduced down to $2^{n/4}$.

\paragraph{The Representation Technique.}
At EUROCRYPT 2010, Howgrave-Graham and Joux~\cite{DBLP:conf/eurocrypt/Howgrave-GrahamJ10} (HGJ) proposed a heuristic algorithm solving \emph{random} subset-sum instances in time $\bigOt{2^{0.337 n}}$, thereby breaking the $2^{n/2}$ bound. Their key idea was to represent the knapsack solution ambiguously as a sum of vectors in $\zo^n$. This \emph{representation technique} increases the search space size, allowing to merge more lists, with new arbitrary  constraints, thereby allowing for a more time-efficient algorithm. The time complexity exponent is obtained by numerical optimization of the list sizes and constraints, assuming that the individual elements obtained in the merging steps are well-distributed. This is the standard heuristic of classical and quantum subset-sum algorithms. Later, Becker, Coron and Joux~\cite{DBLP:conf/eurocrypt/BeckerCJ11} (BCJ) improved the asymptotic runtime down to $\bigOt{2^{0.291 n}}$ by allowing even more representations, with vectors in $\zom^n$.

The BCJ representation technique is not only a tool for subset-sums, as it has been used to speed up generic decoding algorithms, classically~\cite{DBLP:conf/asiacrypt/MayMT11,DBLP:conf/eurocrypt/BeckerJMM12,DBLP:conf/eurocrypt/0001O15} and quantumly~\cite{DBLP:conf/pqcrypto/KachigarT17}. Therefore, the subset-sum problem serves as the simplest application of representations, and improving our understanding of the classical and quantum algorithms may have consequences on these other generic problems.

\paragraph{Quantum Algorithms for the Subset-Sum Problem.}
Cryptosystems based on hard subset-sums are natural candidates for post-quantum cryptography, but to understand precisely their security, we have to study the best generic algorithms for solving subset-sums. The first quantum time speedup for this problem was obtained in~\cite{DBLP:conf/pqcrypto/BernsteinJLM13}, with a quantum time $\bigOt{2^{0.241 n}}$. The algorithm was based on the HGJ algorithm. Later on,~\cite{DBLP:conf/tqc/HelmM18} devised an algorithm based on BCJ, running in time $\bigOt{2^{0.226 n}}$. Both algorithms use the corresponding classical merging structure, wrapped in a quantum walk on a Johnson graph, in the MNRS quantum walk framework ~\cite{mnrs11}. However, they suffer from two limitations. 

First, both use the model of \emph{quantum memory with quantum random-access} (QRAQM), which is stronger than the standard quantum circuit model, as it allows unit-time lookups in superposition of all the qubits in the circuit. The QRAQM model is used in most\XB{Most?} quantum walk algorithms to date, but its practical realizations are still unclear. With a more restrictive model, i.e. \emph{classical memory with quantum random-access} (QRACM), no quantum time speedup over BCJ was previously known. This is not the case for some other hard problems in post-quantum cryptography, \emph{e.g.} heuristic lattice sieving for the Shortest Vector Problem, where the best quantum algorithms to date require only QRACM~\cite{laarhoven2015search}.

Second, both use a conjecture (implicit in~\cite{DBLP:conf/pqcrypto/BernsteinJLM13}, made explicit in~\cite{DBLP:conf/tqc/HelmM18}) about quantum walk updates. In short, the quantum walk maintains a data structure, that contains a merging tree similar to HGJ (resp. BCJ), with lists of smaller size. A quantum walk step is made of updates that changes an element in the lowest-level lists, and requires to modify the upper levels accordingly, \emph{i.e.} to track the partial collisions that must be removed or added. In order to be efficient, the update needs to run in polynomial time. Moreover, the resulting data structure shall be a function of the lowest-level list, and not depend on the path taken in the walk. The conjecture states that it should be possible to guarantee sound updates without impacting the time complexity exponent. However, it does not seem an easy task and the current literature on subset-sums lacks further justification or workarounds.




\paragraph{Contributions.}
In this paper, we improve classical and quantum subset-sum algorithms based on representations. We write these algorithms as sequences of ``merge-and-filter'' operations, where lists of subknapsacks are first merged with respect to an arbitrary constraint, then \emph{filtered} to remove the subknapsacks that cannot be part of a solution.

First, we propose a more time-efficient classical subset-sum algorithm based on representations. We have two classical improvements: we revisit the previous algorithms and show that some of the constraints they enforced were not needed, and we use more general distributions by allowing ``2''s in the representations. Overall, we obtain a better time complexity exponent of $0.283$.


Most of our contributions concern quantum algorithms. As a generic tool, we introduce \emph{quantum filtering}, which speeds up the filtering of representations with a quantum search. We use this improvement in all our new quantum algorithms.

We give an improved quantum walk based on quantum filtering and our extended $\{-1,0,1,2\}$ representations. Our best runtime exponent is $0.216$, under the quantum walk update heuristic of~\cite{DBLP:conf/tqc/HelmM18}. Next, we show how to overcome this heuristic, by designing a new data structure for the vertices in the quantum walk, and a new update procedure with guaranteed time. We remove this heuristic from the previous algorithms~\cite{DBLP:conf/pqcrypto/BernsteinJLM13,DBLP:conf/tqc/HelmM18} with no additional cost. However, we find that removing it from our quantum walk increases its cost to $0.218$.

In a different direction, we devise a new quantum subset-sum algorithm based on HGJ, with time $\bigOt{2^{0.236 n}}$. It is the first quantum time speedup on subset-sums that is \emph{not} based on a quantum walk. The algorithm performs instead a depth-first traversal of the HGJ tree, using quantum search as its only building block. Hence, by construction, it does not require the additional heuristic of~\cite{DBLP:conf/tqc/HelmM18} \emph{and} it only uses \emph{classical memory with quantum random-access}, giving also the first quantum time speedup for subset-sum in this memory model.

A summary of our contributions is given in Table~\ref{table:results}\footnote{After this work, Alexander May has informed us that the thesis~\cite{bohme11} contains unpublished results using more symbols, with the best exponent of 0.2871 obtained with the symbol set $\{-2, -1,0,1,2\}$. }. 
All these complexity exponents are obtained by numerical optimization. Our code is available at \url{https://github.com/xbonnetain/optimization-subset-sum}.


\begin{table}
\centering
\caption{Previous and {\bf new} algorithms for subset-sum, classical and quantum, with time and memory exponents rounded upwards. We note that the removal of Heuristic~\ref{heuristic:helm-may} in~\cite{DBLP:conf/pqcrypto/BernsteinJLM13,DBLP:conf/tqc/HelmM18} comes from our new analysis in Section~\ref{subsection:time-without-heuristics}. QW: Quantum Walk. QS: Quantum Search. CF: Constraint filtering (not studied in this paper). QF: Quantum filtering.}
\label{table:results}

\begin{tabular}{ccccccc}
\toprule
 \begin{tabular}{c}  Time \\ exp. \end{tabular} &  \begin{tabular}{c}  Memory \\ exp. \end{tabular}  & \begin{tabular}{c}Represen-\\tations\end{tabular} & \begin{tabular}{c}
 Memory \\ model
\end{tabular} & Techniques &\begin{tabular}{c}
Requires \\
 Heur.~\ref{heuristic:helm-may}
\end{tabular}  & Reference \\
\midrule
\multicolumn{6}{c}{Classical} \\
\midrule
 0.3370 & 0.3113 & $\zo$ & RAM & &   & \cite{DBLP:conf/eurocrypt/Howgrave-GrahamJ10} \\
0.2909 & 0.2909 & $\zom$ & RAM &  & & \cite{DBLP:conf/eurocrypt/BeckerCJ11} \\
0.287 &  & $\zom$ & RAM & CF && \cite{ozerovphd} \\
\bf 0.2830 & \bf  0.2830 &  $\zomd$ & RAM & &  & Sec.~\ref{subsection:new-classical-bcj} \\
 \midrule
 \multicolumn{6}{c}{Quantum} \\
 \midrule
0.241 & 0.241 & $\zo$ & QRAQM & QW & \bf No & \cite{DBLP:conf/pqcrypto/BernsteinJLM13} + Sec.~\ref{subsection:time-without-heuristics}  \\
0.226 & 0.226 & $\zom$ & QRAQM & QW & \bf No & \cite{DBLP:conf/tqc/HelmM18} + Sec.~\ref{subsection:time-without-heuristics} \\
\bf 0.2356 & \bf 0.2356 & $\zo$ & \bf QRACM &QS + QF & No & Sec~\ref{subsection:hgj-quantum-second} \\
\bf 0.2156 & \bf 0.2110 & $\zomd$ & QRAQM & QW + QF &Yes & Sec.~\ref{subsection:new-quantum-bcj} \\
\bf 0.2182 & \bf 0.2182 & $\zomd$ & QRAQM & QW + QF &No & Sec.~\ref{subsection:time-without-heuristics} \\
\bottomrule
\end{tabular}
\end{table}



\paragraph{Outline.}
In Section~\ref{section:classical-algos}, we study classical algorithms. We review the representation technique, the HGJ algorithm and introduce our new $\{-1,0,1,2\}$ representations to improve over~\cite{DBLP:conf/eurocrypt/BeckerCJ11}. In Section~\ref{section:quantum-prelim}, we move to the quantum setting, introduce some preliminaries and the previous quantum algorithms for subset-sum. In Section~\ref{section:hgj-asymmetric}, we present and study our new quantum algorithm based on HGJ and quantum search. We give different optimizations and time-memory tradeoffs. In Section~\ref{section:new-qw}, we present our new quantum algorithm based on a quantum walk. Finally, in Section~\ref{section:fix-qw} we show how to overcome the quantum walk update conjecture, up to a potential increase in the update cost. We conclude, and give a summary of our new results in Section~\ref{section:conclusion}.

\section{List Merging and Classical Subset-sum Algorithms}
\label{section:classical-algos}

In this section, we remain in the classical realm. We introduce the standard subset-sum notations and heuristics and give a new presentation of the HGJ algorithm, putting an emphasis on the \emph{merge-and-filter} operation. We introduce our extended $\{-1,0,1,2\}$ representations and detail our improvements over BCJ.

\subsection{Notations and Conventions}


Hereafter and in the rest of the paper, all time and memory complexities, classical and quantum, are exponential in $n$. We use the soft-O notation $\widetilde{\mathcal{O}}$ which removes polynomial factors in $n$, and focus on the asymptotic exponent, relative to $n$. We use $\negl(n)$ for any function that vanishes inverse-exponentially in $n$. We often replace asymptotic exponential time and memory complexities (\emph{e.g.} $\bigOt{2^{\alpha n}}$) by their exponents (\emph{e.g.} $\alpha$). We use capital letters (\emph{e.g.} $L$) and corresponding letters (\emph{e.g.} $\ell$) to denote the same value, in $\log_2$ and relatively to $n$: $\ell = \log_2(L) / n$. 

\begin{definition}[Entropies and multinomial functions]
We define the following functions:
 \begin{enumerate}[topsep=0pt,leftmargin=3.6cm]
  \item[\bf Hamming Entropy:]$\entr{x} = - x \log_2 x - (1-x) \log_2 (1-x)$
  \item[\bf Binomial:]$\bin{\omega}{\alpha}  = \entr{\alpha / \omega} \omega$
  \item[\bf 2-way Entropy:]$g(x,y) = -x \log_2 x - y \log_2 y - (1-x-y) \log_2 (1-x-y)$
  \item[\bf Trinomial:]$\trin{\omega}{\alpha}{\beta} = g(\alpha / \omega, \beta / \omega) \omega$
  \item[\bf 3-way Entropy:]$f(x, y, z) =  - x \log_2 x - y \log_2 y - z \log_2 z -$\\
  \phantom{a} \hfill $(1-x-y-z) \log_2 (1-x-y-z)$
  \item[\bf Quadrinomial:] $\quadrin{\omega}{\alpha}{\beta}{\gamma} = f(\alpha/\omega, \beta/\omega, \gamma/\omega)\omega$
 \end{enumerate}
\end{definition}

\begin{property}[Standard approximations]%
 We have the following approximations, asymptotically in $n$:
\begin{center}$\bin{\omega}{\alpha} \simeq \frac{1}{n} \log_2{\omega n \choose \alpha n} ~~;~~ \trin{\omega}{\alpha}{\beta} \simeq \frac{1}{n} \log_2{\omega n \choose \alpha n, \beta n} $ $ \quadrin{\omega}{\alpha}{\beta}{\gamma} \simeq \frac{1}{n} \log_2 { \omega n \choose \alpha n, \beta n, \gamma n }$
\end{center}

\end{property}

%

\begin{definition}[Distributions of knapsacks]
A \emph{knapsack} or \emph{subknapsack} is a vector $\vec{e} \in \zomd^n$. The set of $\vec{e}$ with $\alpha n$ ``-1'', $(\alpha + \beta - 2\gamma) n$ ``1'', $\gamma n$ ``2'' and $(1 - 2\alpha - \beta + \gamma)n$ ``0'' is denoted $D^n[\alpha, \beta, \gamma]$. 
If $\gamma = 0$, we may omit the third parameter. This coincides with the notation $D^n[\alpha, \beta]$ from~\cite{DBLP:conf/tqc/HelmM18}.
\end{definition}

Note that we always add vectors \emph{over the integers}, and thus, the sum of two vectors of $D^n[*,*,*]$ may contain unwanted symbols $-2, 3$ or $4$.

\begin{property}[Size of knapsack sets]
\label{property:knapsack_size}
We have:
\begin{center}
$\frac{1}{n} \log_2 | D^n[0, \beta,0] | \simeq h(\beta) ~~; ~~\frac{1}{n} \log_2 | D^n[\alpha, \beta, 0] | \simeq  g(\alpha, \alpha + \beta)$
\vspace{1em}

$ \frac{1}{n} \log_2 | D^n[\alpha, \beta, \gamma] | \simeq f(\alpha, \alpha + \beta - 2\gamma, \gamma) \enspace. $
\end{center}

\end{property}

\paragraph{Subset-sum.}The problem we will solve is defined as follows:
\begin{definition}[Random subset-sum instance of weight {$n/2$}]
Let $\vec{a}$ be chosen uniformly at random from $\left(\mathbb{Z}_{N}\right)^n$, where $N \simeq 2^n$. Let $\vec{e}$ be chosen uniformly at random from $D^n[0, 1/2, 0]$. Let $t = \vec{a} \cdot \vec{e} \pmod{N}$. Then $(\vec{a}, t)$ is a \emph{random subset-sum instance}. A \emph{solution} is a vector $\vec{e'}$ such that $\vec{a} \cdot \vec{e'} = t \pmod{N}$.
\end{definition}


\paragraph{Sampling.}
Throughout this paper, we assume that we can classically sample uniformly at random from $D^n[\alpha, \beta, \gamma]$ in time $\poly(n)$. (Since $\alpha n$, $\beta n$ and $\gamma n$ will in general not be integer, we suppose to have them rounded to the nearest integer.)
This comes from an efficient bijection between representations and integers (see \ifeprint Appendix~A\else Appendix A in the full version of the paper~\cite{DBLP:journals/iacr/BonnetainBSS20}\fi).
In addition, we can efficiently produce the uniform superposition of vectors of $D^n[\alpha, \beta, \gamma]$, using $\poly(n)$ quantum gates, and we can perform a quantum search among representations.




\subsection{Merging and Filtering}
\label{subsection:merging-filtering}

In all subset-sum algorithms studied in this paper, we repeatedly sample vectors with certain distributions $D^n[*, *, *]$, then combine them. Let $D_1 = D^n[\alpha_1, \beta_1, \gamma_1]$, $D_2 = D^n[\alpha_2, \beta_2, \gamma_2]$ be two input distributions and $D = D^n[\alpha, \beta, \gamma]$ be a target. Given two lists $L_1 \in D_1^{|L_1|}$ and $L_2 \in D_2^{|L_2|}$, we define:
\begin{itemize}
\item the \emph{merged list} $L = L_1 \bowtie_c L_2$ containing all vectors $\vec{e} = \vec{e_1} + \vec{e_2}$ such that: $\vec{e_1} \in L_1, \vec{e_2} \in L_2$, $(\vec{e}_1 + \vec{e}_2) \cdot \vec{a} = s \mod{M}$, $s \leq M$ is an arbitrary integer and $M \approx 2^{cn}$ (we write $L_1 \bowtie_c L_2$ because $s$ is an arbitrary value, whose choice is without incidence on the algorithm)
\item the \emph{filtered list} $L^f = (L \cap D) \subseteq L$, containing the vectors with the target distribution of $1, -1, 2$ (the target $D$ will always be clear from context).
\end{itemize}
In general, $L$ is exponentially bigger than $L^f$ and does not need to be written down, as vectors can be filtered on the fly. The algorithms then repeat the merge-and-filter operation on multiple levels, moving towards the distribution $D^n[0, 1/2]$ while increasing the bit-length of the modular constraint, until we satisfy $\vec{e} \cdot \vec{a} = t \mod{2^n}$ and obtain a solution. Note that this merging-and-filtering view that we adopt, where the merged list is repeatedly sampled before an element passes the filter, has some similarities with the ideas developed in the withdrawn article~\cite{DBLP:journals/corr/abs-1907-04295}.

The standard subset-sum heuristic assumes that vectors in $L^f$ are drawn independently, uniformly at random from $D$. It simplifies the complexity analysis of both classical and quantum algorithms studied in this paper. Note that this heuristic, which is backed by experiments, actually leads to provable probabilistic algorithms in the classical setting (see~\cite[Theorem 2]{DBLP:conf/eurocrypt/BeckerCJ11}). We adopt the version of~\cite{DBLP:conf/tqc/HelmM18}.

\begin{heuristic}
\label{heur1}
If input vectors are uniformly distributed in $D_1 \times D_2$, then the filtered pairs are uniformly distributed in $D$ (more precisely, among the subset of vectors in $D$ satisfying the modular condition).
\end{heuristic}

\paragraph{Filtering Representations.} We note $\ell = (1/n)\log_2|L|$, and so on for $\ell_1, \ell_2, \ell^f$.
By Heuristic~\ref{heur1}, the average sizes of $L_1$, $L_2$, $L$ and $L^f$ are related by:
\begin{itemize}
\item $\ell = \ell_1 + \ell_2 - c$
\item $\ell^f = \ell + \pf$, where $\pf$ is negative and $2^{\pf n}$ is the probability that a pair $(\vec{e_1}, \vec{e_2})$, drawn uniformly at random from $D_1 \times D_2$, has $(\vec{e_1} + \vec{e_2}) \in D$.
\end{itemize}
In particular, the occurrence of collisions in $L^f$ is a negligible phenomenon, unless $\ell^f$ approaches ($\log_2 |D| / n) - c$, which is the maximum number of vectors in $D$ with constraint $c$. For a given random knapsack problem, with high probability, the size of any list built by sampling, merging and filtering remains very close to its average (by a Chernoff bound and a union bound on all lists).

Here, $\pf$ depends only on $D_1, D_2$ and $D$. Working with this \emph{filtering probability} is especially useful for writing down our algorithm in Section~\ref{section:hgj-asymmetric}. We give its formula for $\zo$ representations below. Two similar results for $\zom$ and $\zomd$ can be found in \ifeprint Appendix\else the full version of the paper~\cite{DBLP:journals/iacr/BonnetainBSS20}\fi.


\begin{lemma}[Filtering HGJ-style representations]
\label{lemma:hgj-filter}
Let $\vec{e_1} \in D^n[0, \alpha]$ and $\vec{e_2} \in D^n[0, \beta]$ be drawn uniformly at random. The probability that $\vec{e_1} + \vec{e_2} \in D^n[0, \alpha + \beta]$ is $0$ if $\alpha + \beta > 1$, and $2^{\pfilterhgj{\alpha}{\beta}n}$ otherwise, with 
$$\pfilterhgj{\alpha}{\beta} = \bin{1-\alpha}{\beta} - \entr{\beta} = \bin{1-\beta}{\alpha} - \entr{\alpha} \enspace.$$
\end{lemma}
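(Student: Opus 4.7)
The plan is to treat this as a straightforward counting problem, using the fact that $\zo$-valued vectors add without carry only when their supports are disjoint. I would first separate the cases. If $\alpha + \beta > 1$, then $\vec{e_1}$ has only $(1-\alpha)n < \beta n$ zero coordinates, so there is necessarily at least one position where both $\vec{e_1}$ and $\vec{e_2}$ equal $1$, forcing a coordinate $2$ in the sum; hence the sum cannot lie in $D^n[0,\alpha+\beta]$ and the probability is $0$.

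For the main case $\alpha + \beta \leq 1$, I would argue that $\vec{e_1} + \vec{e_2}$ belongs to $D^n[0, \alpha+\beta]$ if and only if the supports of $\vec{e_1}$ and $\vec{e_2}$ are disjoint: this is both necessary (any overlap produces a $2$) and sufficient (the sum then has exactly $\alpha n + \beta n$ ones, $0$ minus-ones, and $0$ twos, which matches $D^n[0,\alpha+\beta]$). From there, I count the good pairs by fixing $\vec{e_1}$ (there are $\binom{n}{\alpha n}$ choices) and then choosing the $\beta n$ ones of $\vec{e_2}$ among the $(1-\alpha)n$ zero-coordinates of $\vec{e_1}$; this contributes $\binom{(1-\alpha)n}{\beta n}$. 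Dividing by the total number $\binom{n}{\alpha n}\binom{n}{\beta n}$ of pairs in $D^n[0,\alpha] \times D^n[0,\beta]$ gives the probability
\[
\frac{\binom{(1-\alpha)n}{\beta n}}{\binom{n}{\beta n}}.
\]

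Taking $\frac{1}{n}\log_2$ and applying the standard approximations from Property 1 ($\frac{1}{n}\log_2\binom{n}{\beta n} \simeq h(\beta)$ and $\frac{1}{n}\log_2\binom{(1-\alpha)n}{\beta n} \simeq \bin{1-\alpha}{\beta}$) yields the claimed exponent $\pfilterhgj{\alpha}{\beta} = \bin{1-\alpha}{\beta} - h(\beta)$. For the alternative expression, I would either invoke the symmetry of the counting argument (fixing $\vec{e_2}$ first instead of $\vec{e_1}$ gives $\binom{(1-\beta)n}{\alpha n}/\binom{n}{\alpha n}$), or verify the identity directly by expanding the binomials into factorials, which makes the symmetry in $(\alpha,\beta)$ manifest.

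There is essentially no serious obstacle here; the only point that deserves a brief remark is the necessity of the disjoint-support condition, which relies on the absence of a $-1$ symbol in both inputs (so no cancellation can bring a $2$ back down to $1$). This is exactly why the $\zo$ case is so simple compared to the $\zom$ and $\zomd$ analogues handled in the appendix.
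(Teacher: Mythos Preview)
Your argument is correct and matches the paper's own proof essentially line for line: both fix the $\alpha n$ positions of $\vec{e_1}$, observe that $\vec{e_2}$ must place its $\beta n$ ones among the remaining $(1-\alpha)n$ positions, and arrive at the ratio $\binom{(1-\alpha)n}{\beta n}/\binom{n}{\beta n}$ (the paper also records the symmetric form $\binom{(1-\beta)n}{\alpha n}/\binom{n}{\alpha n}$). Your treatment is simply more explicit about the $\alpha+\beta>1$ case and the disjoint-support characterization, which the paper leaves implicit.
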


\begin{proof}
The probability that a $\vec{e}_1 + \vec{e}_2$ survives the filtering is:
$$ {n - \alpha n \choose \beta n} / {n \choose \beta n} = {n - \beta n \choose \alpha n} / {n \choose \alpha n} \enspace.$$
Indeed, given a choice of $\alpha n$ bit positions among $n$, the other $\beta n$ bit positions must be compatible, hence chosen among the $(1-\alpha)n$ remaining positions. By taking the $\log_2$, we obtain the formula for the filtering probability. \qed
\end{proof}

\paragraph{Time Complexity of Merging.}
Classically, the time complexity of the merge-and-filter operation is related to the size of the \emph{merged list}.

\begin{lemma}[Classical merging with filtering]
\label{lemma:classical-merging}
Let $L_1$ and $L_2$ be two sorted lists stored in classical memory with random access. 
In $\log_2$, relatively to $n$, and discarding logarithmic factors, merging and filtering $L_1$ and $L_2$ costs a time $\max( \min(\ell_1, \ell_2), \ell_1 + \ell_2 - c )$ and memory $\max( \ell_1, \ell_2, \ell^f)$, assuming that we must store the filtered output list.
\end{lemma}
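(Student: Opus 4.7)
The plan is to establish the lemma by describing an explicit merge-and-filter procedure and accounting for its cost. Without loss of generality assume $\ell_1 \le \ell_2$, so that $\min(\ell_1,\ell_2) = \ell_1$, and iterate over the smaller list $L_1$. For each $\vec{e}_1 \in L_1$, I would compute the required modular target $s - \vec{e}_1 \cdot \vec{a} \bmod M$ and use binary search in the sorted list $L_2$ to locate the contiguous block of elements matching it. Each colliding pair $(\vec{e}_1, \vec{e}_2)$ is then tested against the target distribution $D$ on the fly, and only those lying in $D$ are appended to the output list $L^f$. In this way, the intermediate merged list $L$ is never materialized.

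The time analysis then splits into two contributions. First, there is a fixed overhead of $|L_1|$ binary-search probes in $L_2$, each of cost $\bigOt{1}$, contributing an exponent of $\min(\ell_1,\ell_2)$ in the $\log_2$ scale. Second, enumerating and filtering all colliding pairs costs time proportional to their number, which, by Heuristic~\ref{heur1} together with the concentration remarks already made before the statement of the lemma, is $\bigOt{2^{(\ell_1 + \ell_2 - c)n}}$ up to polynomial factors. Summing these, the total running time is dominated by the larger of the two, yielding the claimed exponent $\max(\min(\ell_1, \ell_2),\, \ell_1 + \ell_2 - c)$.

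For the memory bound, the sorted input lists already account for $\max(\ell_1, \ell_2)$ in the exponent. Because the filter is applied on the fly, we only store the surviving pairs, whose expected number is $2^{\ell^f n}$. Adding a polylogarithmic workspace for the binary searches and iteration state leaves the total memory at $\max(\ell_1, \ell_2, \ell^f)$ in the exponent, as claimed.

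The main, though mild, obstacle is the degenerate regime $\ell_1 + \ell_2 < c$, where the expected number of colliding pairs is subexponential: one still pays $\min(\ell_1, \ell_2)$ to sweep $L_1$, which is precisely why the formula takes a maximum rather than simply the merged-list size. A secondary subtlety is the reliance on Heuristic~\ref{heur1} to guarantee that the filtered-list size concentrates around its expectation; as already noted in the paragraph preceding the lemma, a Chernoff plus union bound argument handles this uniformly for every list constructed during the algorithm, so no new probabilistic work is required here.
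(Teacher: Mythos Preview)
Your proof is correct and follows essentially the same approach as the paper: iterate over one list, locate matches in the other by binary (dichotomy) search, and stream-filter the resulting pairs without storing the merged list. The paper phrases the choice of which list to iterate over as ``two symmetric ways'' rather than your explicit WLOG $\ell_1\le\ell_2$, but the resulting cost formula and memory argument are identical; your discussion of the degenerate regime $\ell_1+\ell_2<c$ and the reliance on Heuristic~\ref{heur1} are welcome clarifications that the paper leaves implicit.
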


\begin{proof}Assuming sorted lists, there are two symmetric ways to produce a stream of elements of $L_1 \bowtie_c L_2$: we can go through the elements of $L_1$, and for each one, find the matching elements in $L_2$ by dichotomy search (time $\ell_1 + \max(0, \ell_2 - c)$) or we can exchange the role of $L_1$ and $L_2$. Although we do not need to store $L_1 \bowtie_c L_2$, we need to examine all its elements in order to filter them.\qed
\end{proof}

\subsection{Correctness of the Algorithms}

While the operation of \emph{merging and filtering} is the same as in previous works, our complexity analysis differs~\cite{DBLP:conf/eurocrypt/Howgrave-GrahamJ10,DBLP:conf/eurocrypt/BeckerCJ11,DBLP:conf/pqcrypto/BernsteinJLM13,DBLP:conf/tqc/HelmM18}. We enforce the constraint that the final list contains a single solution, hence if it is of size $2^{n \ell_0}$, we constrain $\ell_0 = 0$. Next, we limit the sizes of the lists so that they do not contain duplicate vectors: these are \emph{saturation constraints}. A list of size $2^{n \ell}$, of vectors sampled from a distribution $D$, with a constraint of $cn$ bits, has the constraint: $\ell \leq \frac{1}{n}\log_2{|D|} - c$. This says that there are not more than $|D| / 2^{cn}$ vectors $\vec{e}$ such that $\vec{e} \cdot \vec{a} = r \pmod{2^{cn}}$ for the (randomly chosen) arbitrary constraint $r$.

Previous works focus on the solution vector $\vec{e}$ and compute the number of \emph{representations} of $\vec{e}$, that is, the number of ways it can be decomposed as a sum: $\vec{e} = \vec{e}_1 + \ldots + \vec{e}_t$ of vectors satisfying the constraints on the distributions. Then, they compare this with the probability that a given representation passes the arbitrary constraints imposed by the algorithm. As their lists contains all the subknapsacks that fulfill the constraint, this really reflects the number of duplicates, and it suffices to enforce that the number of representations is equal to the inverse probability that a representation fulfills the constraint. If the two lists we merge are not of maximal size, the size of the merged list is the number of elements that fulfill the corresponding distribution times the probability that such an element is effectively the sum of two elements in the initial lists.

The two approaches are strictly equivalent, as the probability that the sum of two subknapsacks is valid is exactly the number of representations of the sum, divided by the number of pairs of subknapsacks.

%
%

\subsection{The HGJ Algorithm}

We start our study of classical subset-sum by recalling the algorithm of Howgrave-Graham and Joux~\cite{DBLP:conf/eurocrypt/Howgrave-GrahamJ10}, with the corrected time complexity of~\cite{DBLP:conf/eurocrypt/BeckerCJ11}.
The algorithm builds a merging tree of lists of subknapsacks, with four levels, numbered $3$ down to $0$. Level $j$ contains $2^j$ lists. In total, $8$ lists are merged together into one.
\begin{description}
\item[Level 3.] We build 8 lists denoted $L_0^3 \ldots L_7^3$. They contain \emph{all} subknapsacks of weight $\frac{n}{16}$ on $\frac{n}{2}$ bits, either left or right:
\begin{align*}
\left\{ \begin{matrix}
L_{2i}^3 & = D^{n/2}[0, 1/8] \times \{ 0^{n/2} \} \\
L_{2i+1}^3 & =  \{ 0^{n/2} \} \times D^{n/2}[0, 1/8] 
\end{matrix} \right.
\end{align*}

From Property \ref{property:knapsack_size}, these level-3 lists have size $\ell_3 =  \entr{1/8} / 2$. As the positions set to 1 cannot interfere, these is no filtering when merging $L_{2i}^3$ and $L_{2i+1}^3$.
\item[Level 2.] We merge the lists pairwise with a (random) constraint on $c_2 n$ bits, and obtain 4 filtered lists. The size of the filtered lists plays a role in the memory complexity of the algorithm, but the time complexity depends on the size of the unfiltered lists.

In practice, when we say ``with a constraint on $c_j n$ bits'', we assume that given the subset-sum objective $t$ modulo $2^n$, random values $r_i^j$ such that $\sum_i r_i^j = t \mod{2^{c_jn}}$ are selected at level $j$, and the $r_i^j$ have $c_j n$ bits only. Hence, at this step, we have selected 4 integers on $c_2 n$ bits $r^1_0, r^1_1, r^1_2, r^1_3$ such that $r^1_0 + r^1_1 + r^1_2 + r^1_3 = t \mod{2^{c_2n}}$. The 4 level-2 lists $L_0^2, L_1^2, L_2^2, L_3^2$ have size $\ell_2 = (\entr{1/8} - c_2)$, they contain subknapsacks of weight $\frac{n}{8}$ on $n$ bits.

\begin{remark}
The precise values of these $r_i$ are irrelevant, since they cancel out each other in the end. They are selected at random during a run of the algorithm, and although there could be ``bad'' values of them that affect significantly the computation, this is not expected to happen.
\end{remark}

\item[Level 1.] We merge the lists pairwise with $(c_1 - c_2) n$ new bits of constraint, ensuring that the constraint is compatible with the previous ones. We obtain two filtered lists $L_0^1, L_1^1$, containing subknapsacks of weight $n/4$. They have size:
$$ \ell_1 = 2 \ell_2 - (c_1 - c_2) + \pfilterhgj{1/8}{1/8} $$
where $\pfilterhgj{1/8}{1/8}$ is given by Lemma~\ref{lemma:hgj-filter}.

\item[Level 0.] We find a solution to the subset-sum problem with the complete constraint on $n$ bits. This means that the list $L^0$ must have expected length $\ell_0 = 0$. Note that there remains $(1 - c_1) n$ bits of constraint to satisfy, and the filtering term is similar as before, so:
$$ \ell_0 = 2 \ell_1 - (1-c_1) + \pfilterhgj{1/4}{1/4} \enspace.$$%
\end{description}

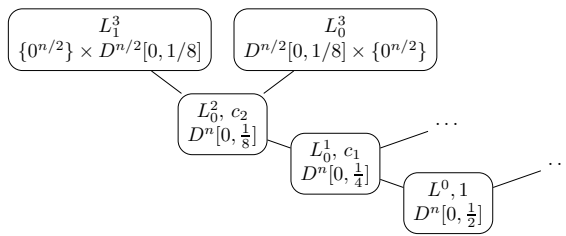
\begin{figure}[htb]
\centering
\scalebox{0.75}{\begin{tikzpicture}[grow=up,nodes={draw,rectangle,rounded corners=.25cm,->}, level 1/.style={sibling distance=40mm, level distance=7mm}, level 2/.style={sibling distance=40mm}, level 3/.style={sibling distance=40mm, level distance=15mm}]
\node{ \begin{tabular}{c} $L^0, 1 $\\ $D^n[0, \frac{1}{2}]$ \end{tabular}}
	child { node[draw=none] {\dots }}
    child { node {\begin{tabular}{c} $L_0^1$, $c_1$ \\ $D^n[0, \frac{1}{4}]$ \end{tabular} }
    	child { node[draw=none] {\dots } }
    	child { node {\begin{tabular}{c} $L_0^2 $, $c_2$ \\ $D^n[0, \frac{1}{8}]$ \end{tabular} }
			child { node {\begin{tabular}{c} $L_0^3 $ \\ $D^{n/2}[0, 1/8] \times \{ 0^{n/2} \}$ \end{tabular} }}
			child { node {\begin{tabular}{c} $L_1^3 $ \\ $\{ 0^{n/2} \} \times D^{n/2}[0, 1/8] $ \end{tabular} }} }
    	};
\end{tikzpicture}}
\caption{The HGJ algorithm (duplicate lists are omitted)}
\label{fig:hgj-original}
\end{figure}

By Lemma~\ref{lemma:classical-merging}, the time complexity of this algorithm is determined by the sizes of the unfiltered lists:~~$ \max\left( \ell_3, 2\ell_3 - c_2, 2\ell_2 - (c_1 - c_2), 2 \ell_1 - (1-c_1) \right)$.

The memory complexity depends of the sizes of the filtered lists: $\max\left( \ell_3, \ell_2, \ell_1\right)$. By a numerical optimization, one obtains a time exponent of $0.337 n$.

\subsection{The BCJ Algorithm and our improvements}\label{subsection:new-classical-bcj}

The HGJ algorithm uses representations to increase artificially the search space. The algorithm of Becker, Coron and Joux~\cite{DBLP:conf/eurocrypt/BeckerCJ11} improves the runtime exponent down to $0.291$ by allowing even more freedom in the representations, which can now contain ``$-1$''s. The ``$-1$''s have to cancel out progressively, to ensure the validity of the final knapsack solution.

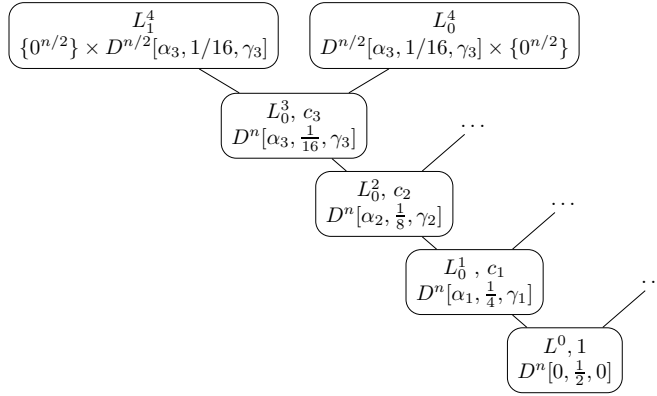
\begin{figure}[htb]
\centering
\scalebox{0.8}{\begin{tikzpicture}[grow=up,nodes={draw,rectangle,rounded corners=.25cm,->}, level 1/.style={sibling distance=30mm, level distance=13mm}, level 2/.style={sibling distance=30mm}, level 3/.style={sibling distance=30mm}, level 4/.style={sibling distance=50mm, level distance=15mm}]
\node{ \begin{tabular}{c} $L^0, 1 $\\ $D^n[0, \frac{1}{2},0]$ \end{tabular}}
    child {node[draw=none] {\dots}}
    child { node {\begin{tabular}{c} $L_0^1$ , $c_1$ \\ $D^n[\alpha_1, \frac{1}{4}, \gamma_1]$ \end{tabular} }
    	child { node[draw=none] {\dots} }
    	child { node {\begin{tabular}{c} $L_0^2 $, $c_2$ \\ $D^n[\alpha_2, \frac{1}{8}, \gamma_2]$ \end{tabular} }
    		child { node[draw=none] {\dots}}
    		child { node {\begin{tabular}{c} $L_0^3 $,  $c_3$ \\ $D^n[\alpha_3, \frac{1}{16}, \gamma_3]$ \end{tabular} }
			child { node {\begin{tabular}{c} $L_0^4 $ \\ $D^{n/2}[\alpha_3, 1/16, \gamma_3] \times \{ 0^{n/2} \}$ \end{tabular} }}
			child { node {\begin{tabular}{c} $L_1^4 $ \\ $\{ 0^{n/2} \} \times D^{n/2}[\alpha_3, 1/16, \gamma_3] $ \end{tabular} }}
	}}};
\end{tikzpicture}}
\caption{Our improved algorithm (duplicate lists are omitted).}
\label{fig:bcj-original}
\end{figure}

We improve over this algorithm in two different ways. First, we relax the constraints $\ell_j + c_j = g(\alpha_j, 1/2^{j+1})$ enforced in~\cite{DBLP:conf/eurocrypt/BeckerCJ11}, as only the inequalities $\ell_j + c_j \leqslant g(\alpha_j, 1/2^{j+1})$ are necessary: they make sure the lists are not larger than the number of distinct elements they can contain. This idea was also implicitly used in~\cite{DBLP:conf/sacrypt/BricoutCDL19}, in the context of syndrome decoding. When optimizing the parameters under these new constraints, we bring the asymptotic time exponent down to $0.289n$. 

\paragraph{$\zomd$ representations.} Next, we allow the value ``2'' in the subknapsacks. This allows us to have more representations for the final solution from the same initial distributions. Indeed, in BCJ, if on a bit the solution is the sum of a ``-1'' and two ``1''s, then it can only pass the merging steps if we first have the ``-1'' that cancels a ``1'', and then the addition of the second ``1''. When allowing ``2''s, we can have the sum of the two ``1's and then at a later step the addition of a ``-1''. The algorithm builds a merging tree with five levels, numbered $4$ down to $0$. Level $j$ contains $2^j$ lists. In total, $16$ lists are merged together into one.
\begin{description}
\item[Level 4.] We build 16 lists $L_0^4 \ldots L_{15}^4$. They contain complete distributions on $\frac{n}{2}$ bits, either left or right, with $\frac{n}{32} + \frac{\alpha_3 n}{2} - \gamma_3 n$ ``1'',  $\frac{\alpha_3 n}{2}$ ``-1'' and $\frac{\gamma_3 n}{2}$ ``$2$'':
\begin{align*}
\left\{ \begin{matrix}
L_{2i}^4 & = D^{n/2}[\alpha_3, 1/16, \gamma_3] \times \{ 0^{n/2} \} \\
L_{2i+1}^4 & =  \{ 0^{n/2} \} \times D^{n/2}[\alpha_3, 1/16, \gamma_3] 
\end{matrix} \right.
\end{align*}
As before, this avoids filtering at the first level. These lists have size: $\ell_4 = f(\alpha_3, 1/16 + \alpha_3 - 2\gamma_3, \gamma_3) / 2$.

\item[Level 3.] We merge into 8 lists $L_0^3 \ldots L_7^3$, with a constraint on $c_3$ bits. As there is no filtering, these lists have size: $\ell_3 = f(\alpha_3, 1/16 + \alpha_3 - 2\gamma_3, \gamma_3) - c_3$.

\item[Level 2.] We now merge and filter. We force a target distribution $D^{n}[\alpha_2, 1/8, \gamma_2]$, with $\alpha_2$ and $\gamma_2$ to be optimized later. There is a first filtering probability $p_2$.
We have $\ell_2 = 2\ell_3 - (c_2 - c_3) + p_2$.
\item[Level 1.] Similarly, we have: $\ell_1 = 2\ell_2 - (c_1 - c_2) + p_1$.
\item[Level 0.] We have $\ell_0 = 2\ell_1 - (1-c_1) + p_0 = 0$, since the goal is to obtain one solution in the list $L_0$.
\end{description}

With these constraints, we find a time $\bigOt{2^{0.2830n}}$ (rounded upwards) with the following parameters:
\begin{align*}
&\alpha_1 = 0.0340, \alpha_2 = 0.0311, \alpha_3 = 0.0202, \gamma_1 = 0.0041, \gamma_2 = 0.0006, \gamma_3 = 0.0001 \\
&c_1 = 0.8067, c_2 = 0.5509, c_3 = 0.2680, p_0 = -0.2829, p_1 = -0.0447, p_2 = -0.0135 \\
&\ell_1 = 0.2382, \ell_2 = 0.2694, \ell_3 = 0.2829, \ell_4 = 0.2755 
\end{align*}


\begin{remark}[On numeric optimizations]
All algorithms since HGJ, including quantum ones, rely on (nonlinear) numeric optimizations. Their correctness is easy to check, since the obtained parameters satisfy the constraints, but there is no formal proof that the parameters are indeed optimal for a given constraint set. The same goes for all algorithms studied in this paper. In order to gain confidence in our results, we tried many different starting points and several equivalent rewriting of the constraints.
\end{remark}


\begin{remark}[Adding more symbols]
In general, adding more symbols (``-2''s, ``3''s, \emph{etc.}) can only increase the parameter search space and improve the optimal time complexity. However, we expect that the improvements from adding more symbols will become smaller and smaller, while the obtained constraints will become more difficult to write down and the parameters harder to optimize. Note that adding ``-1''s decreases the time complexity exponent by $0.048$, while adding ``2''s decreases it only by $0.006$.
\end{remark}

\begin{remark}[On the number of levels]
Algorithms based on merging-and-filtering, classical and quantum, have a number of levels (say, 4 or 5) which must be selected before writing down the constraints. The time complexity is a decreasing function of the number of levels, which quickly reaches a minimum. In all algorithms studied in this paper, adding one more level does not change the cost of the upper levels, which will remain the most expensive. 
\end{remark}

\section{Quantum Preliminaries and Previous Work}
\label{section:quantum-prelim}

In this section, we recall some preliminaries of quantum computation (quantum search and quantum walks) that will be useful throughout the rest of this paper. We also recall previous quantum algorithms for subset-sum. As we consider all our algorithms from the point of view of asymptotic complexities, and neglect polynomial factors in $n$, a high-level overview is often enough, and we will use quantum building blocks as black boxes. The interested reader may find more details in~\cite{nielsen2002quantum}.


\subsection{Quantum Preliminaries}



All the quantum algorithms considered in this paper run in the quantum circuit model, with quantum random-access memory, often denoted as qRAM. ``Baseline'' quantum circuits are simply built using a universal gate set. Many quantum algorithms use qRAM access, and require the circuit model to be augmented with the so-called ``qRAM gate''. This includes subset-sum, lattice sieving and generic decoding algorithms that obtain time speedups with respect to their classical counterparts. Given an input register $1 \leq i \leq r$, which represents the index of a memory cell, and many quantum registers $\ket{x_1 , \ldots x_r}$, which represent stored data, the qRAM gate fetches the data from register $x_i$:
$$ \ket{i} \ket{x_1 , \ldots x_r} \ket{y} \mapsto \ket{i} \ket{x_1 , \ldots x_r} \ket{y \oplus x_i} \enspace.$$


We will use the terminology of~\cite{DBLP:conf/tqc/Kuperberg13} for the qRAM gate:
\begin{itemize}
\item If the input $i$ is classical, then this is the plain quantum circuit model (with classical RAM);
\item If the $x_j$ are classical, we have \emph{quantum-accessible classical memory} (QRACM)
\item In general, we have \emph{quantum-accessible quantum memory} (QRAQM)
\end{itemize}

All known quantum algorithms for subset-sum with a quantum time speedup over the best classical one require QRAQM. For comparison, speedups on heuristic lattice sieving algorithms exist in the QRACM model~\cite{DBLP:journals/dcc/LaarhovenMP15,DBLP:conf/asiacrypt/KirshanovaMPM19}, including the best one to date~\cite{laarhoven2015search}. While no physical architecture for quantum random access has been proposed that would indeed produce a constant or negligible overhead in time, some authors~\cite{DBLP:conf/tqc/Kuperberg13} consider the separation meaningful. If we assign a cost $\bigO{N}$ to a QRACM query of $N$ cells, then we can replace it by \emph{classical} memory. Subset-sum algorithms were studied in this setting by Helm and May~\cite{DBLP:conf/pqcrypto/Helm020}.




\subsubsection{Quantum Search.}

One of the most well-known quantum algorithms is Grover's unstructured search algorithm~\cite{DBLP:conf/stoc/Grover96}. We present here its generalization, amplitude amplification~\cite{brassard2002quantum}.

\begin{lemma}[Amplitude amplification, from \cite{brassard2002quantum}]\label{lemma:grover}
Let $\mathcal{A}$ be a reversible quantum circuit, $f$ a computable boolean function over the output of $\mathcal{A}$, $O_f$ its implementation as a quantum circuit, and $a$ be the initial success probability of $\mathcal{A}$, that is, the probability that $O_f\mathcal{A}\ket0$ outputs ``true''. There exists a quantum reversible algorithm that calls $\bigO{\sqrt{1/a}}$ times $\mathcal{A}$, $\mathcal{A}^\dagger$ and $O_f$, uses as many qubits as $\mathcal{A}$ and $O_f$, and produces an output that passes the test $f$ with probability greater than $\max(a,1-a)$.
\end{lemma}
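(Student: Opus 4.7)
The plan is to follow the standard construction of Brassard, Høyer, Mosca and Tapp. Let $P$ denote the orthogonal projector onto the ``good'' subspace, i.e.\ the states on which $f$ evaluates to true. I would first decompose the initial state as $\mathcal{A}\ket{0} = \sin\theta\,\ket{\psi_g} + \cos\theta\,\ket{\psi_b}$, where $\ket{\psi_g}$ and $\ket{\psi_b}$ are the normalized projections of $\mathcal{A}\ket{0}$ onto $\mathrm{Im}(P)$ and $\mathrm{Im}(I-P)$, and $\sin^2\theta = a$. The amplitude amplification iterate is $Q := -\mathcal{A}\, S_0\, \mathcal{A}^\dagger\, S_f$, where $S_f := I - 2P$ is realised using one call to $O_f$, a Z gate on the output qubit, and one uncomputation of $O_f$, while $S_0 := I - 2\ket{0}\bra{0}$ needs no queries and acts on the same register as $\mathcal{A}$. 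A constant-size ancilla suffices for the phase flip, so that the total qubit count does not exceed that of $\mathcal{A}$ and $O_f$.

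Next, I would show that the two-dimensional subspace $V := \mathrm{span}(\ket{\psi_g}, \ket{\psi_b})$ is invariant under $Q$ and that $Q$ acts on $V$ as a rotation by angle $2\theta$. This is the standard ``two reflections make a rotation'' argument: inside $V$, $S_f$ is the reflection about the axis $\ket{\psi_b}$, while $-\mathcal{A}\, S_0\, \mathcal{A}^\dagger = 2\,\mathcal{A}\ket{0}\bra{0}\mathcal{A}^\dagger - I$ is the reflection about $\mathcal{A}\ket{0}$; the angle between these two axes is $\theta$. An easy induction then gives $Q^k \mathcal{A}\ket{0} = \sin((2k+1)\theta)\,\ket{\psi_g} + \cos((2k+1)\theta)\,\ket{\psi_b}$, so measuring the output after $k$ iterations returns a good element with probability $\sin^2((2k+1)\theta)$.

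To finish, I would split on the value of $a$ in order to obtain the $\max(a, 1-a)$ guarantee. If $a \geq 1/2$, then running $\mathcal{A}$ once (zero iterations of $Q$) already succeeds with probability $a = \max(a,1-a)$, using $\bigO{1} = \bigO{\sqrt{1/a}}$ calls. Otherwise $\theta \in (0, \pi/4)$ and $\theta \geq \sin\theta = \sqrt{a}$, so choosing $k := \lfloor \pi/(4\theta) \rfloor$ places $(2k+1)\theta$ within distance $\theta$ of $\pi/2$, which yields success probability at least $\cos^2\theta = 1-a = \max(a,1-a)$. Each of the $k = \bigO{1/\theta} = \bigO{\sqrt{1/a}}$ iterations uses exactly one call to each of $\mathcal{A}$, $\mathcal{A}^\dagger$, and $O_f$.

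The main obstacle is that $a$ is usually unknown, in which case the optimal $k$ cannot be computed exactly. I would handle this by the standard exponential-guess strategy from~\cite{brassard2002quantum}: successively try $K = 1, 2, 4, \ldots$, each time picking $k$ uniformly in $\{0, \ldots, K-1\}$ and running the iteration. Once $K$ exceeds $1/(2\sin\theta) = \bigO{1/\sqrt{a}}$, a standard averaging argument shows that the success probability per trial is bounded below by a constant, and a constant number of repetitions then boosts this above $\max(a, 1-a)$, while the total number of calls remains $\bigO{\sqrt{1/a}}$.
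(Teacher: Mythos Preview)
The paper does not give its own proof of this lemma; it is stated as a black-box citation from Brassard, H{\o}yer, Mosca and Tapp~\cite{brassard2002quantum} and is immediately followed only by the remark that the bound is optimal. Your proposal is exactly the standard argument from that reference (good/bad decomposition, two-reflections-make-a-rotation, case split on $a$, and exponential search when $a$ is unknown), so it is correct and matches the source the paper relies on. One minor inconsistency: you describe $S_f$ as using one call to $O_f$ plus an uncomputation (i.e.\ two calls), yet later count ``exactly one call'' to $O_f$ per iteration; this is cosmetic and does not affect the $\bigO{\sqrt{1/a}}$ bound.
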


This is known to be optimal when the functions are black-box oracles~\cite{DBLP:journals/siamcomp/BennettBBV97}.

As we will use quantum search as a subprocedure, we make some remarks similar to~\cite[Appendix A.2]{DBLP:journals/iacr/Naya-PlasenciaS19} and~\cite[Section 5.2]{DBLP:journals/tosc/BonnetainNS19} to justify that, up to additional polynomial factors in time, we can consider it runs with no errors and allows to return all the solutions efficiently.

\begin{remark}[Error in a sequence of quantum searches]
Throughout this paper, we will assume that a quantum search in a search space of size $S$ with $T$ solutions runs in exact time $\sqrt{S/T}$. In practice, there is a constant overhead, but since $S$ and $T$ are always exponential in $n$, the difference is negligible. Furthermore, this is a probabilistic procedure, and it will return a wrong result with a probability of the order $\sqrt{T/S}$. As we can test if an error occurs, we can make it negligible by redoing the quantum search polynomially many times.
\end{remark}

\begin{remark}[Finding all solutions]\label{remark:coupon-coll}
Quantum search returns a solution among the $T$ possibilities, selected uniformly at random. Finding all solutions is then an instance of the coupon collector problem with $T$ coupons~\cite{newman1960double}; all coupons are collected after on average $\bigO{T\log(T)}$ trials. However, \emph{in the QRACM model}, which is assumed in this paper, this logarithmic factor disappears. We can run the search of Lemma~\ref{lemma:grover} with a new test function that returns $0$ if the output of $\mathcal{A}$ is incorrect, \emph{or} if it is correct but has already been found. The change to the runtime is negligible, and thus, we collect all solutions with only $\bigO{T}$ searches.
\end{remark}

\subsubsection{Quantum Walks.}\label{subsection:mnrs}


Quantum walks can be seen as a generalization of quantum search. They allow to obtain polynomial speedups on many unstructured problems, with sometimes optimal results (\emph{e.g.} Ambainis' algorithm for element distinctness~\cite{DBLP:journals/siamcomp/Ambainis07}). In this paper, we consider walks in the MNRS framework~\cite{mnrs11}.

Let $G = (V,E)$ be an undirected, connected, regular graph, such that some vertices of $G$ are ``marked''. Let $\epsilon$ be the fraction of marked vertices, that is, a random vertex has a probability $\epsilon$ of being marked. Let $\delta$ be the spectral gap of $G$, which is defined as the difference between its two largest eigenvalues.

In a \emph{classical} random walk on $G$, we can start from any vertex and reach the stationary distribution in approximately $\frac{1}{\delta}$ random walk steps. Then, such a random vertex is marked with probability $\epsilon$. Assume that we have a procedure \texttt{Setup} that samples a random vertex to start with in time $\mathsf{S}$, \texttt{Check} that verifies if a vertex is marked or not in time $\mathsf{C}$ and \texttt{Update} that performs a walk step in time $\mathsf{U}$, then we will have found a marked vertex in expected time:  $ \mathsf{S} + \frac{1}{\epsilon} \left( \frac{1}{\delta} \mathsf{U} + \mathsf{C} \right) \enspace. $

Quantum walks reproduce the same process, except that their internal state is not a vertex of $G$, but a superposition of vertices. The walk starts in the uniform superposition $\sum_{v \in V} \ket{v}$, which must be generated by the \texttt{Setup} procedure. It repeats $\sqrt{1/ \epsilon}$ iterations that, similarly to amplitude amplification, move the amplitude towards the marked vertices. An update produces, from a vertex, the superposition of its neighbors. Each iteration does not need to repeat $\frac{1}{\delta}$ vertex updates and, instead, takes a time equivalent to $\sqrt{1/\delta}$ updates to achieve a good mixing.
Thanks to the following theorem~, we will only need to specify the setup, checking and update unitaries.

\begin{theorem}[Quantum walk on a graph (adapted from~\cite{mnrs11})]\label{thm:qwgraph}
Let $G=(V,E)$ be a regular graph with spectral gap $\delta>0$. Let $\epsilon>0$ be a lower bound on the probability that a vertex chosen randomly of $G$ is marked. For a random walk on $G$, let $\mathsf{S},\mathsf{U}, \mathsf{C}$ be the setup, update and checking cost. Then there exists a quantum algorithm that with high probability finds a marked vertex in time $$ \bigO{ \mathsf{S} + \frac{1}{\sqrt{\epsilon}} \left( \frac{1}{\sqrt{\delta}}\mathsf{U} + \mathsf{C} \right)} .$$
\end{theorem}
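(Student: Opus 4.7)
The plan is to construct a quantum walk operator whose spectrum encodes the mixing properties of the classical random walk on $G$, and then to amplify the marked fraction via a Grover-type iteration built from approximate reflections. I would follow the route of Magniez--Nayak--Roland--Santha, starting with Szegedy's quantization. Given the symmetric transition matrix $P$ of the uniform walk on $G$, define the walk unitary $W(P)$ acting on the edge space $\mathbb{C}^{V}\otimes\mathbb{C}^{V}$, built as the product of two reflections $R_A = 2\Pi_A - I$ and $R_B = 2\Pi_B - I$, where $\Pi_A,\Pi_B$ project onto the subspaces spanned by $\ket{v}\otimes\ket{\psi_v}$ and $\ket{\psi_u}\otimes\ket{u}$ respectively, with $\ket{\psi_v}=\sum_u\sqrt{P_{vu}}\ket{u}$. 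One step of $W$ is implemented with $O(\mathsf{U})$ gates using the Update procedure, and the uniform superposition $\ket{\pi}=\tfrac{1}{\sqrt{|V|}}\sum_v\ket{v}\otimes\ket{\psi_v}$ is a $+1$ eigenvector.

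Next I would invoke Szegedy's spectral correspondence: every other eigenvalue of $W(P)$ has the form $e^{\pm i\theta}$ with $\cos\theta$ an eigenvalue of $P$ different from $1$. Hence the phase gap of $W$ is at least $\sqrt{2\delta}$, and $O(1/\sqrt{\delta})$ applications of $W$ suffice for phase estimation to distinguish, up to inverse-polynomial error, the $+1$ eigenstate $\ket{\pi}$ from its orthogonal complement. Applying phase estimation, conditionally flipping a flag on phase zero, and uncomputing yields an approximate reflection $R_\pi \approx 2\ket{\pi}\bra{\pi}-I$ at cost $O(\mathsf{U}/\sqrt{\delta})$. The Check procedure gives an exact reflection $R_M$ around the subspace $\mathrm{span}\{\ket{v}\otimes\ket{\psi_v}:v\text{ marked}\}$ at cost $O(\mathsf{C})$, and Setup prepares $\ket{\pi}$ at cost $O(\mathsf{S})$.

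The algorithm is then amplitude amplification driven by the product $R_\pi R_M$, starting from $\ket{\pi}$. Writing $\ket{\pi}=\sqrt{\epsilon}\,\ket{M}+\sqrt{1-\epsilon}\,\ket{M^\perp}$ where $\ket{M}$ is the normalized projection onto marked edges, the initial overlap with the marked subspace is $\sqrt{\epsilon}$, so $O(1/\sqrt{\epsilon})$ iterations of $R_\pi R_M$ rotate the state to have constant overlap with $\ket{M}$, after which measuring the first register produces a marked vertex with high probability. The total cost is
\[
O\!\left(\mathsf{S} + \frac{1}{\sqrt{\epsilon}}\left(\frac{\mathsf{U}}{\sqrt{\delta}} + \mathsf{C}\right)\right),
\]
matching the claim.

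The main obstacle is the error analysis of the approximate reflection $R_\pi$: phase estimation yields $R_\pi$ only up to an operator-norm error $\eta$, and these errors accumulate additively across the $O(1/\sqrt{\epsilon})$ amplitude-amplification iterations. One must choose the phase-estimation precision (hence the hidden constant inside the $1/\sqrt{\delta}$ factor) so that the total error $O(\eta/\sqrt{\epsilon})$ remains bounded by a constant; this is where the exact $\sqrt{\delta}$ scaling, rather than a worse power, is obtained by careful choice of precision $O(\log(1/\epsilon)/\sqrt{\delta})$ and by exploiting the fact that the bad subspace is spanned by eigenvectors with phase bounded away from zero. Once this error bookkeeping is done, everything else is a routine composition of the Setup, Update, and Check unitaries in the stated time budget. \qed
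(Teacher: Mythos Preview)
The paper does not give its own proof of this theorem: it is stated as a black-box result ``adapted from~\cite{mnrs11}'' and used without further justification. Your sketch is a faithful outline of the standard MNRS argument (Szegedy quantization, phase-estimation-based approximate reflection about $\ket{\pi}$, Grover-style amplification with $R_\pi R_M$), and it is correct at the level of detail appropriate for a cited result. One small remark: with the precision choice $O(\log(1/\epsilon)/\sqrt{\delta})$ you would pick up a logarithmic overhead; MNRS remove this via a more careful error analysis, but since the paper only uses the theorem up to polynomial factors this is immaterial here.
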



\subsection{Solving Subset-sum with Quantum Walks}\label{subsection:subsumqw}

In 2013, Bernstein, Jeffery, Lange and Meurer~\cite{DBLP:conf/pqcrypto/BernsteinJLM13} constructed quantum subset sum algorithms inspired by Schroeppel-Shamir~\cite{DBLP:journals/siamcomp/SchroeppelS81} and HGJ~\cite{DBLP:conf/eurocrypt/Howgrave-GrahamJ10}. We briefly explain the idea of their quantum walk for HGJ. The graph $G$ that they consider is a product Johnson graph. We recall formal definitions from~\cite{DBLP:conf/pqcrypto/KachigarT17}.

\begin{definition}[Johnson graph]
A Johnson graph $J(N,R)$ is an undirected graph whose vertices are the subsets of $R$ elements among a set of size $N$, and there is an edge between two vertices $S$ and $S'$ iff $|S \cap S'| = R-1$, in other words, if $S'$ can be obtained from $S$ by replacing an element. Its spectral gap is given by $\delta = \frac{N}{R(N-R)}$.
\end{definition}

\begin{theorem}[Cartesian product of Johnson graphs~\cite{DBLP:conf/pqcrypto/KachigarT17}]
Let $J^m(N, R)$ be defined as the cartesian product of $m$ Johnson graphs $J(N,R)$, \emph{i.e.}, a vertex in $J^m(N, R)$ is a tuple of $m$ subsets $S_1, \ldots S_m$ and there is an edge between $S_1, \ldots S_m$ and $S_1', \ldots S_m'$ iff all subsets are equal at all indices except one index $i$, which satisfies $|S_i \cap S_i'| = R-1$. Then it has ${N \choose R}^{m}$ vertices and its spectral gap is greater than $\frac{1}{m} \frac{N}{R(N-R)}$.
\end{theorem}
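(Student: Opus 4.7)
The plan is to handle the vertex count and the spectral gap separately. The vertex count is immediate from the definition of Cartesian product: a vertex is a tuple $(S_1,\dots,S_m)$ where each $S_i$ is a vertex of $J(N,R)$, so there are $\binom{N}{R}^m$ of them. The graph is also clearly regular, of degree $m \cdot R(N-R)$, since from any vertex one may modify exactly one of the $m$ coordinates along any edge of a copy of $J(N,R)$.

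For the spectral gap, the main tool is the standard fact that the adjacency matrix of $G_1 \Box G_2$ equals $A_1 \otimes I + I \otimes A_2$, whose eigenvectors are tensor products of eigenvectors of $A_1$ and $A_2$, and whose eigenvalues are therefore all sums $\lambda_i + \mu_j$ of an eigenvalue of $A_1$ with one of $A_2$. By induction on $m$, the eigenvalues of $J^m(N,R)$ are exactly the $m$-fold sums of eigenvalues of $J(N,R)$.

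Given this, I would use the known spectrum of the single Johnson graph. Since $J(N,R)$ is $R(N-R)$-regular, its largest adjacency eigenvalue is $R(N-R)$, and the assumption (or known result) that its normalized spectral gap equals $N/(R(N-R))$ forces its second-largest eigenvalue to be $R(N-R) - N$. Then the largest eigenvalue of $J^m(N,R)$ is $m \cdot R(N-R)$, attained by choosing the top eigenvalue in every factor. The second-largest is achieved by keeping the top eigenvalue in $m-1$ factors and taking the second-largest in one factor, yielding $(m-1)R(N-R) + (R(N-R) - N) = m R(N-R) - N$. No other combination can exceed this while staying strictly below the maximum, since all eigenvalues of $J(N,R)$ are at most $R(N-R) - N$ except for the top one.

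Taking the normalized gap (one minus the ratio of the second-largest to the largest, which is the relevant definition for the MNRS framework) gives
\[
\delta(J^m(N,R)) \;=\; 1 - \frac{m R(N-R) - N}{m R(N-R)} \;=\; \frac{N}{m R(N-R)} \;=\; \frac{1}{m}\cdot\frac{N}{R(N-R)},
\]
which in particular is greater than or equal to the stated bound. The only delicate point is keeping track of conventions (normalized versus unnormalized gap, and checking that the second-largest eigenvalue of the product is really given by the described combination rather than by some other tuple of eigenvalues of $J(N,R)$); once those are pinned down, the argument is mechanical.
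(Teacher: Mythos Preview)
The paper does not actually prove this theorem: it is quoted from~\cite{DBLP:conf/pqcrypto/KachigarT17} and used as a black box, with no argument given beyond the statement. So there is no ``paper's own proof'' to compare against.

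Your proof is correct and is the standard one. A small remark on the one point you flag as delicate: the paper's convention is indeed the normalized gap (eigenvalue gap of the random-walk transition matrix $A/d$), as you can confirm from the preceding Definition, where the spectral gap of a single $J(N,R)$ is stated to be $\frac{N}{R(N-R)}$ rather than $N$. With that convention, your computation gives the exact value $\frac{1}{m}\cdot\frac{N}{R(N-R)}$, which matches (and slightly sharpens) the ``greater than'' in the statement. Your justification that the second-largest adjacency eigenvalue of the product is $mR(N-R)-N$ is also sound: any non-top tuple must have at least one coordinate at most $R(N-R)-N$, and every other coordinate at most $R(N-R)$, so the sum is at most $mR(N-R)-N$, with equality achieved.
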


In ~\cite{DBLP:conf/pqcrypto/BernsteinJLM13}, a vertex contains a product of 8 sublists $L_0'^3 \subset L_0^3, \ldots, L_7'^3 \subset L_7^3$ of a smaller size than the classical lists: $\ell < \ell_3$. There is an edge between two vertices if we can transform one into the other by replacing only one element in one of the sublists. The spectral gap of such a graph is (in $\log_2$, relative to $n$) $- \ell$.

In addition, each vertex has an internal data structure which reproduces the HGJ merging tree, from level 3 to level 0. Since the initial lists are smaller, the list $L^0$ is now of expected size $8(\ell - \ell_3)$ (in $\log_2$, relative to $n$), \emph{i.e.} the walk needs to run for $4(\ell_3 - \ell)$ steps. Each step requires $\ell / 2$ updates.

In the \texttt{Setup} procedure, we simply start from all choices for the sublists and build the tree by merging and filtering. Assuming that the merged lists have decreasing sizes, the setup time is $\ell$. The vertex is marked if it contains a solution at level 0. Hence, checking if a vertex is marked takes time $\mathsf{C} = 1$, but the update procedure needs to ensure the consistency of the data structure. Indeed, when updating, we remove an element $\vec{e}$ from one of the lists $L_i'^3$ and replace it by a $\vec{e'}$ from $L_i^3$. We then have to track all subknapsacks in the upper levels where $\vec{e}$ intervened, to remove them, and to add the new collisions where $\vec{e'}$ intervenes.

Assuming that the update can run in $\poly(n)$, an optimization with the new parameter $\ell$ yields an exponent 0.241. In~\cite{DBLP:conf/pqcrypto/BernsteinJLM13}, the parameters are such that on average, a subknapsack intervenes only in a single sum at the next level. The authors propose to simply limit the number of elements to be updated at each level, in order to guarantee a constant update time.

\paragraph{Quantum Walk Based on BCJ.}
In~\cite{DBLP:conf/tqc/HelmM18}, Helm and May quantize, in the same way, the BCJ algorithm. They add ``-1'' symbols and a new level in the merging tree data structure, reaching a time exponent of 0.226. But they remark that this result depends on a conjecture, or a heuristic, that was implicit in~\cite{DBLP:conf/pqcrypto/BernsteinJLM13}.

\begin{heuristic}[Helm-May]\label{heuristic:helm-may}
In these quantum walk subset-sum algorithms, an update with expected constant time $\mathsf{U}$ can be replaced by an update with exact time $\mathsf{U}$ without affecting the runtime of the algorithm, up to a polynomial factor.
\end{heuristic}

Indeed, it is easy to construct ``bad'' vertices and edges for which an exact update, \emph{i.e.} the complete reconstruction of the merging tree, will take exponential time: by adding a single new subknapsack $\vec{e}$, we find an exponential number of pairs $\vec{e} + \vec{e'}$ to include at the next level. So we would like to update only a few elements among them. But in the MNRS framework, the data structure of a vertex must depend solely on the vertex itself (\emph{i.e.} on the lowest-level lists in the merging tree). And if we do as proposed in~\cite{DBLP:conf/pqcrypto/BernsteinJLM13}, we add a dependency on the path that lead to the vertex, and lose the consistency of the walk.

In a related context, the problem of ``quantum search with variable times'' was studied by Ambainis~\cite{DBLP:journals/mst/Ambainis10}. In a quantum search for some $x$ such that $f(x) = 1$, in a set of size $N$, if the time to evaluate $f$ on $x$ is always $1$, then the search requires time $\bigO{\sqrt{N}}$. Ambainis showed that if the elements have different evaluation times $t_1, \ldots t_N$, then the search now requires $\widetilde{\mathcal{O}}(\sqrt{ t_1^2 + \ldots + t_N^2 })$, the geometric mean of $t_1, \ldots t_N$. As quantum search can be seen as a particular type of quantum walk, this shows that Heuristic~\ref{heuristic:helm-may} is wrong in general, as we can artificially create a gap between the geometric mean and expectation of the update time $\mathsf{U}$; but also, that it may be difficult to actually overcome. In this paper, we will obtain different heuristic and non-heuristic times.

\section{Quantum Asymmetric HGJ}
\label{section:hgj-asymmetric}

In this section, we give the first quantum algorithm for the subset-sum problem, \emph{in the QRACM model}, with an asymptotic complexity smaller than BCJ.

\subsection{Quantum Match-and-Filter}

We open this section with some technical lemmas that replace the classical merge-and-filter Lemma~\ref{lemma:classical-merging}. In this section, we will consider a merging tree as in the HGJ algorithm, but this tree will be built using quantum search. The following lemmas bound the expected time of merge-and-filter \emph{and} match-and-filter operations performed quantumly, in the QRACM model. This will have consequences both in this section and in the next one.


First, we remark that we can use a much more simple data structure than the ones in~\cite{DBLP:journals/siamcomp/Ambainis07,DBLP:conf/pqcrypto/BernsteinJLM13}. In this data structure, we store pairs $\vec{e}, \vec{e} \cdot \vec{a}$ indexed by $\vec{e} \cdot \vec{a} \mod M$ for some $M \simeq 2^m$. 

\begin{definition}[Unique modulus list]\label{def:unique-mod}
A \emph{unique modulus list} is a qRAM data structure $\mathcal{L}(M)$ that stores at most $M$ entries $(\vec{e}, \vec{e} \cdot \vec{a})$, indexed by $\vec{e} \cdot \vec{a} \mod M$, and supports the following operations:
\begin{itemize}
\item Insertion: inserts the entry $(\vec{e}, \vec{e} \cdot \vec{a})$ if the modulus is not already occupied;
\item Deletion: deletes $(\vec{e}, \vec{e} \cdot \vec{a})$ (not necessary in this section)
\item Query in superposition: returns the superposition of all entries $(\vec{e}, \vec{e} \cdot \vec{a})$ with some modular condition on $\vec{e} \cdot \vec{a}$, \emph{e.g.} $\vec{e} \cdot \vec{a} = t \mod M'$ for some $t$ and some modulus $M'$.
\end{itemize}
\end{definition}

Note that all of these operations, \emph{including} the query in superposition of all the entries with a given modulus, cost $\bigO{1}$ qRAM gates only. For the latter, we need only some Hadamard gates to prepare the adequate superposition of indices. Furthermore, the list remains sorted by design.

Next, we write a lemma for quantum \emph{matching} with filtering, in which one of the lists is not written down. We start from a unitary that produces the uniform superposition of the elements of a list $L_1$, and we wrap it into an amplitude amplification, in order to obtain a unitary that produces the uniform superposition of the elements of the merged-and-filtered list.

\begin{lemma}[Quantum matching with filtering]\label{lemma:quantum-matching-filtering}
Let $L_2$ be a list stored in QRACM (with the \emph{unique modulus list} data structure of Definition~\ref{def:unique-mod}). Assume given a unitary $U$ that produces in time $t_{L_1}$ the uniform superposition of $L_1 = x_0, \ldots x_{2^m -1}$ where $x_i = (\vec{e_i}, \vec{e_i} \cdot \vec{a})$.
We merge $L_1$ and $L_2$ with a modular condition of $cn$ bits and a filtering probability $p$. Let $L$ be the merged list and $L^f$ the filtered list. Assume $|L^f| \geq 1$. Then there exists a unitary $U'$ producing the uniform superposition of $L^f$ in time: $\bigO{ \frac{t_{L_1}}{\sqrt{p}} \max( \sqrt{2^{cn} / |L_2|} , 1 ) }$.
\end{lemma}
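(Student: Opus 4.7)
The plan is to apply amplitude amplification (Lemma~\ref{lemma:grover}) to an inner procedure $\mathcal{A}$ that, with some non-trivial probability, produces a pair $(\vec{e}_1,\vec{e}_2) \in L_1 \times L_2$ satisfying both the modular merging condition and the distributional filter. Concretely, I would define $\mathcal{A}$ as follows: first call $U$ to prepare $\frac{1}{\sqrt{|L_1|}}\sum_{\vec{e}_1 \in L_1} \ket{\vec{e}_1,\vec{e}_1\cdot \vec{a}}$ in time $t_{L_1}$; then, using the value $\vec{e}_1 \cdot \vec{a}$ in the register, compute the modular target $s - \vec{e}_1\cdot\vec{a} \bmod 2^{cn}$ and issue one superposition query to the unique-modulus list $\mathcal{L}$ storing $L_2$ (Definition~\ref{def:unique-mod}) with modulus $M' = 2^{cn}$. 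This query costs $\bigO{1}$ qRAM gates and returns the superposition of all $\vec{e}_2 \in L_2$ matching the $cn$-bit condition. Define the checking oracle $O_f$ to test whether $\vec{e}_1 + \vec{e}_2 \in D$ (the target distribution), which is a $\poly(n)$ classical predicate.

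Next I would estimate the success probability $a$ of $\mathcal{A}$. Two regimes appear naturally:
\begin{itemize}
\item If $|L_2| \leq 2^{cn}$, then on average only a fraction $|L_2|/2^{cn}$ of $\vec{e}_1 \in L_1$ have a match in $\mathcal{L}$; by Heuristic~\ref{heur1} the conditional probability that the pair passes the filter is $p$, so $a = p\cdot |L_2|/2^{cn}$.
\item If $|L_2| \geq 2^{cn}$, every $\vec{e}_1$ has at least one partner in $\mathcal{L}$, the query yields a non-trivial superposition, and by Heuristic~\ref{heur1} the fraction of good pairs is $a = p$.
\end{itemize}
Applying Lemma~\ref{lemma:grover} with this $\mathcal{A}$ and $O_f$ gives a unitary $U'$ running in time $\bigO{t_{\mathcal{A}}/\sqrt{a}}$, where $t_{\mathcal{A}} = \bigO{t_{L_1}}$. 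Combining the two regimes yields exactly $\bigO{(t_{L_1}/\sqrt{p})\max(\sqrt{2^{cn}/|L_2|},1)}$, which is the claimed bound.

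The main obstacle is interpretational rather than computational: amplitude amplification naturally outputs the uniform superposition over \emph{good pairs} $(\vec{e}_1,\vec{e}_2)$, whereas the lemma states that $U'$ produces the uniform superposition of $L^f$, whose elements are the sums $\vec{e} = \vec{e}_1 + \vec{e}_2$. To justify this, I would invoke the standard subset-sum heuristic (Heuristic~\ref{heur1}) to argue that at the parameter regimes considered each $\vec{e} \in L^f$ arises from a single representation $(\vec{e}_1,\vec{e}_2) \in L_1 \times L_2$ with overwhelming probability, so the map pair$\,\mapsto\,$sum is a bijection onto $L^f$; the output state is then uniform over $L^f$ once we coherently compute $\vec{e}$ and $\vec{e}\cdot\vec{a}$ into output registers and, if needed, keep $(\vec{e}_1,\vec{e}_2)$ in ancilla registers. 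Polynomial overheads from repeating the amplitude amplification to drive its error below $\negl(n)$, as well as the $\bigO{1}$ QRACM/qRAM cost per query, are absorbed in the $\widetilde{\mathcal{O}}$ notation.
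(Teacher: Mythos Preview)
Your proof is correct and follows the same core idea as the paper: wrap the preparation of $L_1$ and a single QRACM query to the unique-modulus list inside amplitude amplification, then bound the success probability by cases. The paper's own proof distinguishes \emph{three} regimes ($|L_2|<2^{cn}$, $2^{cn}\leq|L_2|<2^{cn}/p$, and $|L_2|\geq 2^{cn}/p$) and in the middle regime uses a \emph{nested} quantum search (an inner search over the $\approx|L_2|/2^{cn}$ modular candidates for each $\vec{e}_1$, wrapped in an outer search over $L_1$). Your flat, single-level amplitude amplification collapses the paper's cases~2 and~3 into one, and indeed the nested structure is not needed to reach the stated bound: once the query returns the uniform superposition over matching $\vec{e}_2$ in $\bigO{1}$ time, the good-pair fraction is exactly $p$ and one application of Lemma~\ref{lemma:grover} suffices. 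So your route is slightly more economical, while the paper's nesting makes the role of the inner search over $L_2$ more explicit; both rely on the same heuristics (Heuristic~\ref{heur1} for the filter probability and concentration of list sizes) and yield the identical complexity.
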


Notice that this is also the time complexity to produce a single random element of $L^f$. If we want to produce and store the whole list $L^f$, it suffices to multiply this complexity by the number of elements in $L^f$ (\emph{i.e.} $p|L_1||L_2|/2^{cn}$). We would obtain: $\bigO{ t_{L_1} \sqrt{p} \max \left( |L_1| \sqrt{\frac{|L_2|}{2^{cn}}}  , \frac{|L_1| |L_2|}{2^{cn}} \right) } \enspace.$

\begin{proof}
Since $L_2$ is stored in a unique modulus list, all its elements have distinct moduli. Note that the \emph{expected} sizes of $L$ and $L^f$ follow from Heuristic~\ref{heur1}. Although the number of iterations of quantum search should depend on the \emph{real} sizes of these lists, the concentration around the average is so high (given by Chernoff bounds) that the error remains negligible if we run the search with the expected number of iterations. We separate three cases.
\begin{itemize}
\item If $|L_2| < 2^{cn}$, then we have no choice but to make a quantum search on elements of $L_1$ that match the modular constraint and pass the filtering step, in time: $\bigO{ t_{L_1} \sqrt{ \frac{2^{cn}}{L_2 p} } }$.

\item If $|L_2| > 2^{cn}$ but $|L_2| < 2^{cn} / p$, an element of $L_1$ will always pass the modular constraint, with more than one candidate, but in general all these candidates will be filtered out. Given an element of $L_1$, producing the superposition of these candidates is done in time $1$, so finding the one that passes the filter, if there is one, takes time $\sqrt{|L_2|/2^{cn}}$. Next, we wrap this in a quantum search to find the ``good'' elements of $L_1$ (passing the two conditions), with $ \bigO{\sqrt{ 2^{cn} / p L_2}}$ iterations. The total time is:
$$\bigO{ \sqrt{ \frac{2^{cn}}{L_2 p} } \times \left(  \sqrt{|L_2|/2^{cn}} \times t_{L_1} \right) =  \frac{t_{L_1}}{\sqrt{p}} }\enspace. $$

\item If $|L_2| > 2^{cn}/ p$, an element of $L_1$ yields on average more than one filtered candidate. Producing the superposition of the modular candidates is done in time $\bigO{1}$ thanks to the data structure, then finding the superposition of filtered candidates requires $1/\sqrt{p}$ iterations. The total time is: $\bigO{ t_{L_1} / \sqrt{p} }$.
\end{itemize}
The total time in all cases is: $\bigO{ \frac{t_{L_1}}{\sqrt{p}} \max( \sqrt{2^{cn} / |L_2|} , 1 ) }$. Note that classically, the coupon collector problem would have added a polynomial factor, but this is not the case here thanks to QRACM (Remark~\ref{remark:coupon-coll}). \qed
\end{proof}


In the QRACM model, we have the following corollary for merging and filtering two lists of equal size. This result will be helpful in Section~\ref{subsection:hgj-quantum-second} and~\ref{section:new-qw}.

\begin{corollary}\label{lemma:symmetric-merging}
Consider two lists $L_1, L_2$ of size $|L_1| = |L_2| = |L|$ exponential in $n$. We merge $L_1$ and $L_2$ with a modular condition of $cn$ bits, and filter with a probability $p$. Assume that $2^{cn} < |L|$. Then $L^f$ can be written down in quantum time: $\bigO{ \sqrt{p} \frac{|L|^2}{2^{cn}} } $.
\end{corollary}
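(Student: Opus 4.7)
The plan is to obtain the corollary as a direct application of Lemma~\ref{lemma:quantum-matching-filtering}, specialized to the symmetric case $|L_1|=|L_2|=|L|$ with $|L|>2^{cn}$.

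First I would set up the two lists in the QRACM data structure of Definition~\ref{def:unique-mod}: $L_2$ is inserted in a unique modulus list (cost $\widetilde{\mathcal{O}}(|L|)$ qRAM operations), and $L_1$ likewise, so that the uniform superposition over $L_1$ can be produced with a constant number of qRAM gates. This gives a unitary $U$ producing the uniform superposition of $L_1$ in time $t_{L_1}=\bigO{1}$, which is the input assumed by Lemma~\ref{lemma:quantum-matching-filtering}.

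Next, I invoke Lemma~\ref{lemma:quantum-matching-filtering}. The hypothesis $2^{cn}<|L|=|L_2|$ places us in the regime where $\max(\sqrt{2^{cn}/|L_2|},1)=1$, so a single uniformly random element of $L^f$ can be produced in time $\bigO{1/\sqrt{p}}$. To write down the whole filtered list, I repeat this quantum sampling routine for the expected number of filtered elements, $|L^f|= p|L_1||L_2|/2^{cn}=p|L|^2/2^{cn}$, giving total time
\[
\bigO{\frac{1}{\sqrt{p}}\cdot\frac{p|L|^{2}}{2^{cn}}}=\bigO{\sqrt{p}\,\frac{|L|^{2}}{2^{cn}}}.
\]
This is precisely the claimed bound, and it dominates the $\bigO{|L|}$ setup cost under the implicit assumption $|L^f|\gtrsim |L|$ made throughout the merging-tree applications (otherwise one simply replaces the bound by $\max(|L|,\sqrt{p}|L|^2/2^{cn})$, which does not affect the asymptotic analyses that invoke the corollary).

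The only delicate point—and the one I would be careful to justify explicitly—is avoiding a polynomial blow-up from the coupon collector effect when collecting all $|L^f|$ distinct elements by repeated quantum sampling. As noted in Remark~\ref{remark:coupon-coll}, in the QRACM model this is handled by modifying the testing predicate inside the amplitude amplification to reject elements already written down in the growing output list; since membership can be checked in $\bigO{1}$ thanks to the unique modulus structure, each of the $|L^f|$ searches still costs $\bigO{1/\sqrt{p}}$, and no extra logarithmic factor appears. Finally, the concentration of the actual sizes of $L$ and $L^f$ around their expectations (Chernoff) makes the use of the expected value $p|L|^2/2^{cn}$ as the iteration count safe, which closes the argument. \qed
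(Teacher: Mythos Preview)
Your proposal is correct and follows exactly the paper's approach: apply Lemma~\ref{lemma:quantum-matching-filtering} with $t_{L_1}=\bigO{1}$ (a QRACM query), use $2^{cn}<|L_2|$ to reduce the per-element cost to $\bigO{1/\sqrt{p}}$, and multiply by $|L^f|=p|L|^2/2^{cn}$. The paper's proof is a two-line version of what you wrote; your added remarks on setup cost, the coupon-collector issue (handled via Remark~\ref{remark:coupon-coll}), and concentration are all consistent with the surrounding text and simply make explicit what the paper leaves implicit.
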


\begin{proof}
We do a quantum search to find each element of $L^f$. We have $t_{L_1} = \bigO{1}$ since it is a mere QRACM query, and we use Lemma~\ref{lemma:quantum-matching-filtering}. \qed
\end{proof}

\subsection{Revisiting HGJ}
\label{subsection:hgj-quantum-first}

We now introduce our new algorithm for subset-sum in the QRACM model.

Our starting point is the HGJ algorithm. Similarly to~\cite{DBLP:journals/iacr/Naya-PlasenciaS19}, we use a merging tree in which the lists at a given level may have different sizes. Classically, this does not improve the time complexity. However, quantumly, we will use quantum filtering. Since our algorithm does not require to write data in superposition, only to read from classical registers with quantum random access, we require only QRACM instead of QRAQM.

In the following, we consider that all lists, except $L_0^3, L_0^2, L_0^1, L^0$, are built with classical merges. The final list $L^0$, containing (expectedly) a single element, and a branch leading to it, are part of a nested quantum search. Each list $L_0^3, L_0^2, L_0^1, L^0$ corresponds either to a search space, the solutions of a search, or both. We represent this situation on Fig.~\ref{fig:hgj-quantum}. Our procedure runs as follows:
\begin{enumerate}
\item (Classical step): build the \emph{intermediate lists} $L_1^3, L_1^2, L_1^1$ and store them using a \emph{unique modulus list} data structure (Definition~\ref{def:unique-mod}).
\item (Quantum step): do a quantum search on $L_0^3$. To test a vector $\vec{e} \in L_0^3$:
\begin{itemize}
\item Find $\vec{e}_3 \in L_1^3$ such that $\vec{e} + \vec{e}_3$ passes the $c^2_0n$-bit modular constraint (assume that there is at most one such solution). There is no filtering here.
\item Find $\vec{e}_2 \in L_1^2$ such that $(\vec{e} + \vec{e}_3) + \vec{e}_2$ passes the additional $(c^1-c^2_0)n$-bit constraint.
\item If it also passes the filtering step, find $\vec{e}_1 \in L_1^1$ such that $(\vec{e} + \vec{e}_3 + \vec{e}_2) + \vec{e}_1$ is a solution to the knapsack problem (and passes the filter).
\end{itemize}
\end{enumerate}

Structural constraints are imposed on the tree, in order to guarantee that there exists a knapsack solution. The only difference between the quantum and classical settings is in the optimization goal: the final time complexity.

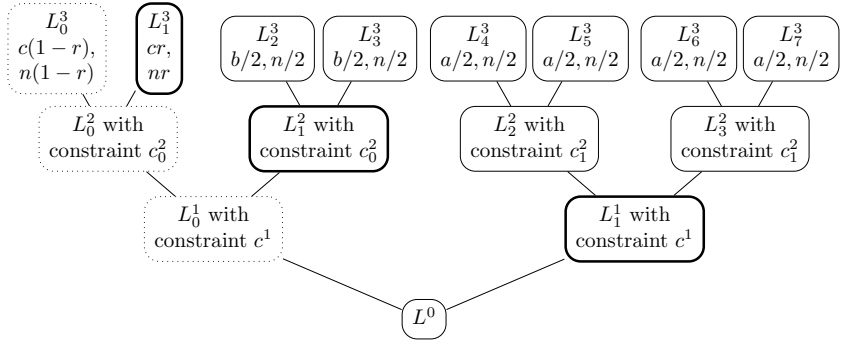
\begin{figure}[htb]
\centering
\scalebox{0.8}{\begin{tikzpicture}[grow=up,nodes={draw,rectangle,rounded corners=.25cm,->}, level 1/.style={sibling distance=70mm}, level 2/.style={sibling distance=35mm}, level 3/.style={sibling distance=17mm}]
\node{ \begin{tabular}{c} $L^0$ \end{tabular}}
    child { node[very thick] {\begin{tabular}{c} $L_1^1$ with \\ constraint $c^1$ \end{tabular} }
    	child { node {\begin{tabular}{c} $L_3^2 $ with \\ constraint $c_1^2$ \end{tabular} }
    		child { node {\begin{tabular}{c} $L_7^3 $ \\ $a/2, n/2$\end{tabular} }}
			child { node {\begin{tabular}{c} $L_6^3 $ \\ $a/2, n/2$\end{tabular} }}}
    	child { node {\begin{tabular}{c} $L_2^2 $ with \\ constraint $c_1^2$ \end{tabular} }
			child { node {\begin{tabular}{c} $L_5^3 $ \\ $a/2, n/2$\end{tabular} }}
			child { node {\begin{tabular}{c} $L_4^3 $ \\ $a/2, n/2$\end{tabular} }} }}
    child { node[dotted] {\begin{tabular}{c} $L_0^1$ with \\ constraint $c^1$ \end{tabular} }
    	child { node[very thick] {\begin{tabular}{c} $L_1^2 $ with \\ constraint $c_0^2$ \end{tabular} }
    		child { node {\begin{tabular}{c} $L_3^3 $\\ $b/2, n/2$\end{tabular} }}
			child { node {\begin{tabular}{c} $L_2^3 $\\ $b/2, n/2$ \end{tabular} }} 
			 }
    	child { node[dotted] {\begin{tabular}{c} $L_0^2 $ with \\ constraint $c_0^2$ \end{tabular} }
			child { node[very thick]         {\begin{tabular}{c} $L_1^3 $\\ $cr$, \\ $n r$ \end{tabular} }}			
			child { node[dotted] {\begin{tabular}{c} $L_0^3 $\\ $c(1-r)$, \\ $n(1-r)$ \end{tabular} }} 
			 }};
\end{tikzpicture}}
\caption{Quantum HGJ algorithm. Dotted lists are search spaces (they are not stored). Bold lists are stored in QRACM. In Section~\ref{subsection:hgj-quantum-second}, $L_2^2$ and $L_3^2$ are also stored in QRACM.}
\label{fig:hgj-quantum}
\end{figure}

\paragraph{Structural Constraints.}
We now introduce the variables and the structural constraints that determine the shape of the tree in Fig.~\ref{fig:hgj-quantum}. The asymmetry happens both in the weights at level 0 and at the constraints at level 1 and 2. We write $\ell_i^j = (\log_2 | L_i^j|) / n$. With the lists built classically, we expect a symmetry to be respected, so we have: $\ell_2^3 = \ell_3^3$, $\ell_4^3 = \ell_5^3 = \ell_6^3 = \ell_7^3$, $\ell_2^2 = \ell_3^2$. We also tweak the left-right split at level 0: lists from $L_2^3$ to $L_7^3$ have a standard balanced left-right split; however, we introduce a parameter $r$ that determines the proportion of positions set to zero in list $L_0^3$: in $L_0^3$, the vectors weigh $cn(1-r)$ on a support of size $n(1-r)$, instead of $cn/2$ on a support of size $n/2$. In total we have $c + b + 2a = \frac{1}{2}$, as the weight of the solution is supposed to be exactly $n/2$.

Then we note that:
\begin{itemize}
\item The lists at level 3 have a maximal size depending on the corresponding weight of their vectors:
$$ \ell_0^3 \leq \entr{c} (1-r),~~ \ell_1^3 \leq \entr{c} r,~~ \ell_2^3 = \ell_3^3 \leq \entr{b}/2,~~ \ell_4^3 \leq \entr{a}/2 $$
\item The lists at level 2 cannot contain more representations than the filtered list of all subknapsacks of corresponding weight:
$$ \ell_0^2 \leq \entr{c} - c_0^2,~~ \ell_1^2 \leq \entr{b} - c_0^2,~~ \ell_2^2 = \ell_3^2 \leq \entr{a} - c_1^2 $$
\item Same at levels 1 and 0: ~~ $ \ell_0^1 \leq \entr{c + b} - c^1,~~ \ell_1^1 \leq \entr{2a} - c^1 $
\item The merging at level 2 is exact (there is no filtering):
$$ \ell_0^2 = \ell_0^3 + \ell_1^3 - c_0^2,~~ \ell_1^2 = \ell_2^3 + \ell_3^3 - c_0^2,~~ \ell_2^2 = \ell_4^3 + \ell_5^3 - c_1^2, ~~\ell_3^2 = \ell_6^3 + \ell_7^3 - c_1^2, $$
\item At level 1, with a constraint $c^1 \geq c_0^2, c_1^2$ that subsumes the previous ones:
$$ \ell_0^1 = \ell_0^2 + \ell_1^2 - c^1 + c_0^2 + \pfilterhgj{b}{c}~,~~~~ \ell_1^1 = \ell_2^2 + \ell_3^2 - c^1 + c_1^2 + \pfilterhgj{a}{a}$$
\item And finally at level 0: ~~ $ \ell^0 = 0 = \ell_0^1 + \ell_1^1 - (1-c^1) + \pfilterhgj{b +c}{2a} $
\end{itemize}

\paragraph{Classical Optimization.}
All the previous constraints depend on the problem, not on the computation model. Now we can get to the time complexity in the classical setting, that we want to minimize:
\begin{multline*}
\max \big( \ell_4^3, \ell_2^3, \ell_1^3, \ell_2^3 + \ell_3^3 - c_0^2, \ell_4^3 + \ell_5^3 - c_1^2, \ell_2^2 + \ell_3^2 - c^1 + c_1^2, \\
\ell_0^3 + \max( \ell_1^3 - c_0^2, 0 ) + \max( \ell_1^2 - c^1 + c_0^2 ,0 )+\max( \pfilterhgj{b}{c}+\ell^1_1-(1-c^1), 0 ) \big) \enspace.
\end{multline*}
The last term corresponds to the exhaustive search on $\ell_0^3$. In order to keep the same freedom as before, it is possible that an element of $L_0^3$ matches against \emph{several} elements of $L_1^3$, all of which yield a potential solution that has to be matched against $L_1^2$, \emph{etc.} Hence for each element of $L_0^3$, we find the expected $\max( \ell_1^3 - c_0^2, 0 )$ candidates matching the constraint $c_0^1$. For each of these candidates, we find the expected $\max( \ell_1^2 - c^1 + c_0^2 ,0 )$ candidates matching the constraint $c^1$. For each of these candidates, if it passes the filter, we search for a collision in $L_1^1$; this explains the $\max( \pfilterhgj{b}{c}+\ell^1_1-(1-c^1), 0 ) $ term. In the end, we check if the final candidates pass the filter on the last level. 

We verified that optimizing the classical time under our constraints gives the time complexity of HGJ.

\paragraph{Quantum Optimization.}
The time complexity for producing the intermediate lists is unchanged. The only difference is the way we find the element in $L_0^3$ that will lead to a solution, which is a nested sequence of quantum searches.
\begin{itemize}
\item We can produce the superposition of all elements in $L_0^2$ in time
$$ t_{2} = \frac{1}{2} \max( c_0^2 - \ell_1^3, 0 ) $$
\item By Lemma~\ref{lemma:quantum-matching-filtering}, we can produce the superposition of all elements in $L_0^1$ in time
$$ t_2 - \frac{1}{2} \pfilterhgj{b}{c} + \frac{1}{2} \max \left( c^1 - c_0^2 - \ell_1^2, 0 \right)  $$
\item Finally, we expect that there are $ \left( \ell_0^2 + \ell_1^2 - c^1 + c_0^2 + \pfilterhgj{b}{c} \right)$ elements in $L_0^1$, which gives the number of iterations of the quantum search.
\end{itemize}
 The time of this search is:
$$ \frac{1}{2} \left( \ell_0^1 +  \max( c_0^2 - \ell_1^3, 0 ) - \pfilterhgj{b}{c} + \max \left( c^1 - c_0^2 - \ell_1^2, 0 \right) \right)  $$
and the total time complexity is:
\begin{multline*}
\max \big( \ell_4^3, \ell_2^3, \ell_1^3, \ell_2^3 + \ell_3^3 - c_0^2, \ell_4^3 + \ell_5^3 - c_1^2, \ell_2^2 + \ell_3^2 - c^1 + c_1^2, \\
\frac{1}{2} \left( \ell_0^1 +  \max( c_0^2 - \ell_1^3, 0 ) - \pfilterhgj{b}{c} + \max \left( c^1 - c_0^2 - \ell_1^2, 0 \right) \right) \big)
\end{multline*}

We obtain a quantum time complexity exponent of $0.2374$ with this method (the detailed parameters are given in Table~\ref{table:quantum-hgj-parameters}).

\subsection{Improvement via Quantum Filtering}
\label{subsection:hgj-quantum-second}

Let us keep the tree structure of Figure~\ref{fig:hgj-quantum} and its structural constraints. The final quantum search step is already made efficient with respect to the filtering of representations, as we only pay half of the filtering term $\pfilterhgj{b}{c}$. However, we can look towards the \emph{intermediate lists} in the tree, \emph{i.e.} $L_1^3, L_1^2, L_1^1$. The merging at the first level is exact: due to the left-right split, there is no filtering of representations, hence the complexity is determined by the size of the output list. However, the construction of $L_1^1$ contains a filtering step. Thus, we can use Corollary~\ref{lemma:symmetric-merging} to produce the elements of $L_1^1$ faster and reduce the time complexity from: $\ell_2^2 + \ell_3^2 - c^1 + c_1^2$ to: $\ell_2^2 + \ell_3^2 - c^1 + c_1^2 + \frac{1}{2} \pfilterhgj{a}{a}$. By optimizing with this time complexity, we obtain a time exponent $0.2356$ (the detailed parameters are given in Table~\ref{table:quantum-hgj-parameters}). The corresponding memory is $0.2356$ (given by the list $L_1^3$).


\begin{table}
\centering
\small
\caption{Optimization results for the quantum asymmetric HGJ algorithm (in $\log_2$ and relative to $n$), rounded to four digits. The time complexity is an upper bound.}
\label{table:quantum-hgj-parameters}

\begin{tabular}{lcccccccccc}
\toprule
Variant &									  Time   & $a$    & $b$    & $c$    & $\ell^3_0$ & $\ell^3_1$ & $\ell^3_2$ & $\ell^3_4$ & $\ell^2_0$ & $\ell^1_0$ \\
\midrule
Classical 									& 0.3370 & 0.1249 & 0.11   & 0.1401 & 0.3026    &  0.2267 & 0.25   & 0.2598 & 0.3369 & 0.3114 \\
Section~\ref{subsection:hgj-quantum-first}  & 0.2374 & 0.0951 & 0.0951 & 0.2146 & 0.4621    &  0.2367 & 0.2267 & 0.2267 & 0.4746 & 0.4395 \\
Section~\ref{subsection:hgj-quantum-second} & 0.2356 & 0.0969 & 0.0952 & 0.2110 & 0.4691    &  0.2356 & 0.2267 & 0.2296 & 0.4695 & 0.4368 \\
\bottomrule
\end{tabular}

\end{table}


\begin{remark}[More improvements]
We have tried increasing the tree depth or changing the tree structure, but it does not seem to bring any improvement. In theory, we could allow for more general representations involving ``-1'' and ``2''. However, computing the filtering probability, when merging two lists of subknapsacks in $D^n[\alpha, \beta, \gamma]$ \emph{with different distributions} becomes much more technical. We managed to compute it for $D^n[\alpha, \beta]$, but the number of parameters was too high for our numerical optimizer, which failed to converge.
\end{remark}

\subsection{Quantum Time-Memory Tradeoff}

In the original HGJ algorithm, the lists at level 3 contain full distributions $D^{n/2}[0, 1/8]$. By reducing their sizes to a smaller exponential, one can still run the merging steps, but the final list $L^0$ is of expected size exponentially small in $n$. Hence, one must redo the tree many times.
This general time-memory tradeoff is outlined in~\cite{DBLP:conf/eurocrypt/Howgrave-GrahamJ10} and is also reminiscent of Schroeppel and Shamir's algorithm~\cite{DBLP:journals/siamcomp/SchroeppelS81}, which can actually be seen as repeating $2^{n/4}$ times a merge of lists of size $2^{n/4}$, that yields $2^{-n/4}$ solutions on average. 



\paragraph{Asymmetric Tradeoff.}
The tradeoff that we propose is adapted to the QRACM model. It consists in increasing the asymmetry of the tree: we reduce the sizes of the intermediate lists $L_1^3, L_1^2, L_1^1$ in order to use less memory; this in turn increases the size of $L^3_0, L^2_0$ and $L^1_0$ in order to ensure that a solution exists. We find that this tradeoff is close to the time-memory product curve $T M = 2^{n/2}$, and actually slightly better (the optimal point when $m = 0.2356$ has $T M = 2^{0.4712 n}$). This is shown on Figure~\ref{fig:tradeoff}. At $m = 0$, we start at $2^{n/2}$, where $L_0^3$ contains all vectors of Hamming weight $n/2$.


\begin{fact}
For any memory constraint $m \leq 0.2356$ (in $\log_2$ and proportion of $n$), the optimal time complexity in the quantum asymmetric HGJ algorithm of Section~\ref{subsection:hgj-quantum-second} is lower than $\bigOt{2^{n/2 - m}}$.
\end{fact}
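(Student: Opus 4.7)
The plan is to exhibit, for each memory budget $m \in [0, 0.2356]$, a parameter set for the algorithm of Section~\ref{subsection:hgj-quantum-second} whose quantum time exponent is at most $1/2 - m$. Rather than re-deriving an optimum from scratch, I would describe an explicit one-parameter family interpolating between the two known endpoints: $(m,t) = (0,\, 1/2)$, corresponding to exhaustive search with $L_0^3 = D^n[0, 1/2]$ and the rest of the tree trivial, and $(m,t) = (0.2356,\, 0.2356)$, corresponding to the parameters of Table~\ref{table:quantum-hgj-parameters}.

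First, I fix $m$ and add the constraints $\ell_1^3, \ell_1^2, \ell_1^1, \ell_2^2, \ell_3^2 \le m$ to the nonlinear program of Section~\ref{subsection:hgj-quantum-second}; the lists $L_0^3, L_0^2, L_0^1, L^0$ are never written down, so the total memory is then bounded by $2^{mn}$. To compensate for shrunken intermediate lists while preserving $\ell^0 = 0$, I enlarge the search space $L_0^3$ by adjusting the asymmetry parameter $r$: a smaller $r$ allows $\ell_0^3$ to reach $h(c)(1-r)$ with a larger support, essentially trading storable mass in $L_1^3$ for searchable mass in $L_0^3$. The structural constraints chain through the tree, so once $r$, $c$ and the $c_i^j$ are re-tuned, all saturation inequalities remain feasible for each $m$ in the range.

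Second, I track the effect on the quantum time. The classical construction of the intermediate lists costs at most $m + \tfrac{1}{2}\pfilterhgj{a}{a}$ by Corollary~\ref{lemma:symmetric-merging}, which is below $1/2 - m$ throughout the relevant range. The dominant term is the nested quantum search
\[
\tfrac{1}{2}\bigl(\ell_0^1 + \max(c_0^2 - \ell_1^3, 0) - \pfilterhgj{b}{c} + \max(c^1 - c_0^2 - \ell_1^2, 0)\bigr).
\]
A direct computation shows that shrinking each intermediate list by $\delta$ and enlarging $\ell_0^3$ by the amount forced by $\ell^0 = 0$ increases this search cost by exactly $\delta$, so the sum $t + m$ is invariant along the family to first order. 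Since the endpoint $m = 0.2356$ already gives $t + m = 0.4712 < 1/2$, the strict inequality propagates. I would then confirm the curve by running the provided optimizer on a grid of values of $m$ and checking $t + m < 1/2$ at each point; the main obstacle is that the program is nonlinear in roughly a dozen variables, so a closed-form proof of optimality is out of reach and, as noted in the remark on numeric optimizations, the justification of the strict inequality is inherently numerical rather than analytic.
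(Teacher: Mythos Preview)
The paper does not give an analytical proof of this Fact at all: it is stated as a numerical observation, supported only by Figure~\ref{fig:tradeoff} and Table~\ref{table:hgj-tradeoffs}, obtained by re-running the optimizer of Section~\ref{subsection:hgj-quantum-second} under the memory constraint $\max(\ell_1^3,\ell_1^2,\ell_1^1)\le m$ on a grid of values of $m$. Your final paragraph (``confirm the curve by running the provided optimizer on a grid of values of $m$ and checking $t+m<1/2$ at each point'') is therefore exactly the paper's justification, and your closing remark that the inequality is ``inherently numerical rather than analytic'' matches the authors' own stance (cf.\ the Remark on numeric optimizations).

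Your analytical interlude, however, does not stand on its own. The claim that shrinking each stored list by $\delta$ increases the nested-search term by \emph{exactly} $\delta$ assumes that both $\max$ terms are active and that the weights $a,b,c$, the constraints $c_0^2,c^1$, and the filtering term $\pfilterhgj{b}{c}$ are held fixed while $r$ alone absorbs the change; in the actual optimum these parameters all move with $m$, so the cancellation is not exact. More seriously, ``$t+m$ is invariant to first order at the endpoint $m=0.2356$'' does not imply ``$t+m<1/2$ on the whole interval $[0,0.2356]$'': a function with zero derivative at one end can still rise above a given threshold elsewhere. So the propagation step is a non sequitur, and the argument only becomes valid once you invoke the grid of numerical optimizations---at which point the first-order reasoning is superfluous. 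In short: drop the analytical part, keep the numerics, and you have the paper's proof.
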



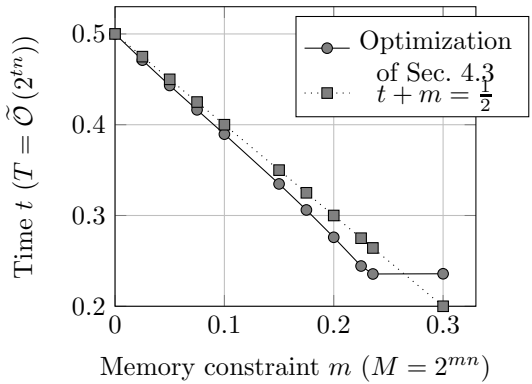
\begin{figure}[htb]
\centering
\begin{tikzpicture}
\begin{axis}[
scale=0.7,
legend pos=outer north east,
xlabel={Memory constraint $m$ ($M = 2^{m n}$)},
ylabel={Time $t$ ($T = \bigOt{2^{t n}}$)},
ymin=0.2,
xmin=0,
xmajorgrids,
ymajorgrids,
legend pos=outer north east,
cycle list name=my black white,
legend style={cells={align=left},name=legend,at={(0.5,0.8)},anchor=west}
]
\addplot coordinates {
(0, 0.5) (0.025, 0.47112909279) (0.05, 0.443214240601) (0.075, 0.416386741403) (0.1, 0.389598290015) (0.15, 0.334770105658) (0.175, 0.306104394719) (0.2, 0.275984845837) (0.225, 0.244335325227) (0.2358, 0.235565150451) (0.3, 0.235795710644)
};
\addlegendentry{\begin{tabular}{c} Optimization \\ of Sec.~\ref{subsection:hgj-quantum-second} \end{tabular}}
\addplot coordinates {
(0, 0.5) (0.025, 0.475) (0.05, 0.45) (0.075, 0.425) (0.1, 0.4) (0.15, 0.35) (0.175, 0.325) (0.2, 0.3) (0.225, 0.275) (0.2358, 0.2642) (0.3, 0.2)
};
\addlegendentry{$t + m = \frac{1}{2}$}
\end{axis}
\end{tikzpicture}
\caption{Quantum time-memory tradeoff of the asymmetric HGJ algorithm}
\label{fig:tradeoff}
\end{figure}
\paragraph{Improving the QRACM usage.}
In trying to reduce the quantum or quantum-accessible hardware used by our algorithm, it makes sense to draw a line between QRACM and classical RAM, \emph{i.e.} between the part of the memory that is actually accessed quantumly, and the memory that is used only classically. We now try to enforce the constraint only on the QRACM, using possibly more RAM. In this context, we cannot produce the list $L_1^1$ via quantum filtering. The memory constraint on lists $L_1^3, L_1^2, L_1^1$ still holds; however, we can increase the size of lists $L_4^3, L_5^3, L_6^3, L_7^3, L_2^2, L_3^2$.

\begin{fact}
For any QRACM constraint $m \leq 0.2356$, the optimal time complexity obtained by using more RAM is always smaller than the best optimization of Section~\ref{subsection:hgj-quantum-second}.
\end{fact}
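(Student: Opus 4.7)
My approach is to write down the RAM-relaxed optimization program explicitly and show that its optimum strictly improves on the Section~\ref{subsection:hgj-quantum-second} curve, by combining a local parameter perturbation with a numerical check. The RAM-relaxed program inherits all structural constraints of Section~\ref{subsection:hgj-quantum-second} (weights $c+b+2a=\tfrac12$, representation ceilings, merge and filtering identities, left-right splits), with two changes. First, the QRACM bound $m$ is imposed only on $\ell_1^3, \ell_1^2, \ell_1^1$, so that $\ell_4^3,\ldots,\ell_7^3,\ell_2^2,\ell_3^2$ are bounded solely by their representation ceilings. Second, since Corollary~\ref{lemma:symmetric-merging} requires $L_2^2, L_3^2$ in QRACM, it is no longer applicable, and the cost of building $L_1^1$ reverts from $\ell_2^2+\ell_3^2 - c^1 + c_1^2 + \tfrac12\pfilterhgj{a}{a}$ to the classical merge-and-filter cost $\ell_2^2+\ell_3^2 - c^1 + c_1^2$.

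At the Section~\ref{subsection:hgj-quantum-second} optimum, the bound $\ell_2^2 = \ell_3^2 \leq m$ is active. The key perturbation in the relaxed program is to enlarge $\ell_2^2 = \ell_3^2$ by $\Delta > 0$ (now admissible), to enlarge $\ell_4^3,\ldots,\ell_7^3$ and the internal constraint $c_1^2$ consistently so as to keep the $\ell_1^1$-identity satisfied, and simultaneously to increase $c^1$. Via the identity $\ell_0^1 = \ell_0^2 + \ell_1^2 - c^1 + c_0^2 + \pfilterhgj{b}{c}$, increasing $c^1$ decreases $\ell_0^1$, which in turn decreases the dominant quantum-search cost $\tfrac12 (\ell_0^1 + \max(c_0^2-\ell_1^3,0) - \pfilterhgj{b}{c} + \max(c^1 - c_0^2 - \ell_1^2, 0))$. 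This saving can be tuned to outweigh the $\tfrac12\pfilterhgj{a}{a}$ loss at level 1, yielding a strict decrease of the overall time while still respecting the QRACM ceiling on $\ell_1^3,\ell_1^2,\ell_1^1$.

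The main obstacle is that the relaxed program is nonlinear and nonconvex, with many interacting constraints: several representation ceilings may be active at the Section~\ref{subsection:hgj-quantum-second} optimum, blocking some coordinates of the naive perturbation, and the bottleneck of the time expression may shift from the quantum search to another term once $\Delta$ becomes too large. I would therefore complement the perturbation argument with a direct numerical optimization of the RAM-relaxed program for $m \in [0, 0.2356]$, using the code at \url{https://github.com/xbonnetain/optimization-subset-sum} from multiple starting points, and verify that the resulting tradeoff curve lies strictly below the Section~\ref{subsection:hgj-quantum-second} curve of Figure~\ref{fig:tradeoff}; the perturbation above then certifies that the strict gap is genuine and not a numerical artefact.
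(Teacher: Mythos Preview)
The paper does not prove this statement analytically; it is asserted as a numerical observation, with Table~\ref{table:hgj-tradeoffs} as the only supporting evidence. Your proposal is therefore aligned with the paper's approach in that it too ultimately appeals to numerical optimization of the RAM-relaxed program, though you add a local perturbation heuristic that the paper does not attempt.

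One caution about your perturbation argument, and about the boundary case. At $m = 0.2356$ the Section~\ref{subsection:hgj-quantum-second} optimum is already unconstrained: the memory $0.2356$ is attained only by $L_1^3$, while for instance $\ell_4^3 \approx 0.2296 < m$ (Table~\ref{table:quantum-hgj-parameters}). So at that point, relaxing the bound on $L_4^3,\ldots,L_7^3,L_2^2,L_3^2$ gains nothing, yet you still pay the $-\tfrac12\pfilterhgj{a}{a}$ penalty from losing quantum filtering on $L_1^1$. Consistently, Table~\ref{table:hgj-tradeoffs} shows the ``more RAM'' variant reaching only $0.2373 > 0.2356$ at $m = 0.30$. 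Hence your key premise---that ``the bound $\ell_2^2 = \ell_3^2 \leq m$ is active at the Section~\ref{subsection:hgj-quantum-second} optimum''---fails near $m = 0.2356$, and the perturbation direction you describe cannot produce a strict improvement there. Both your argument and the Fact itself should really be read for $m$ strictly below the threshold where the memory constraint starts to bite on the classical-side lists; for such $m$ your outline is reasonable and matches what the paper's numerics confirm.
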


The difference remains only marginal, as can be seen in Table~\ref{table:hgj-tradeoffs}, but it shows a tradeoff between quantum and classical resources.

\begin{table}[htbp]
\centering
\caption{Time-memory tradeoffs (QRACM) for three variants of our asymmetric HGJ algorithm, obtained by numerical optimization, and rounded upwards. The last variant uses more classical RAM than the QRACM constraint.}
\label{table:hgj-tradeoffs}

\begin{tabular}{ccccccc}
\toprule
QRACM & \multicolumn{2}{c}{ Section~\ref{subsection:hgj-quantum-first} } & \multicolumn{2}{c}{ Section~\ref{subsection:hgj-quantum-second} } & \multicolumn{2}{c}{With more RAM} \\
bound & Time & Memory & Time & Memory & Time & Memory \\
\midrule
0.0500 & 0.4433 & 0.0501 & 0.4433 & 0.0501 & 0.4412 & 0.0650\\
0.1000 & 0.3896 & 0.1000 & 0.3896 & 0.1000 & 0.3860 & 0.1259\\
0.1500 & 0.3348 & 0.1501 & 0.3348 & 0.1501 & 0.3301 & 0.1894\\
0.3000 & 0.2374 & 0.2373 & 0.2356 & 0.2356 & 0.2373 & 0.2373\\
\bottomrule
\end{tabular}
\end{table}

\section{New Algorithms Based on Quantum Walks}
\label{section:new-qw}


In this section, we improve the algorithm by Helm and May~\cite{DBLP:conf/tqc/HelmM18} based on BCJ and the MNRS quantum walk framework. Our algorithm is a quantum walk on a product Johnson graph, as in Section~\ref{subsection:subsumqw}. There are two new ideas involved.


\subsection{Asymmetric 5th level}

In our new algorithm, we can afford one more level than BCJ. We then have a 6-level merging tree, with levels numbered 5 down to 0. Lists at level $i$ all have the same size $\ell_i$, \emph{except at level 5}. Recall that the merging tree, and all its lists, is the additional data structure attached to a node in the Johnson graph. In the original algorithm of~\cite{DBLP:conf/tqc/HelmM18}, there are 5 levels, and a node is a collection of $16$ lists, each list being a subset of size $\ell_4$ among the $g(1/16+\alpha_3,\alpha_3)/2$ vectors having the right distribution.



In our new algorithm, at level 5, we separate the lists into ``left'' lists of size $\ell_5^l$ and ``right'' lists of size $\ell_5^r$. The quantum walk will only be performed on the left lists, while the right ones are full enumerations. Each list at level 4 is obtained by merging a ``left'' and a ``right'' list. The left-right-split at level 5 is then asymmetric: vectors in one of the left lists $L_5^l$ are sampled from $D^{\eta n}[\alpha_4, 1/32, \gamma_4] \times \{0^{(1 -\eta) n}\}$ and the right lists $L_5^r$ contain \emph{all} the vectors from $\{0^{\eta n}\} \times D^{(1-\eta) n}[\alpha_4, 1/32, \gamma_4]$. This yields a new constraint: $\ell_5^r = f(1/32+\alpha_4-2\gamma_4, \alpha_4, \gamma_4) (1 - \eta)$.



While this asymmetry does not bring any advantage classically, it helps in reducing the update time. We enforce the constraint $\ell_5^r = c_4$, so that for each element of $L_5^l$, there is on average one matching element in $L_5^r$. So updating the list $L_4$ at level 4 is done on average time 1. Then we also have $\ell_4 = \ell_5^l$.


With this construction, $\ell_5^r$ and $\ell_5^l$ are actually unneeded parameters. We only need the constraints $c_4 (= \ell_5^r) = f(1/32+\alpha_4-2\gamma_4, \alpha_4, \gamma_4) (1 - \eta)$ and $\ell_4 (= \ell_5^l) \leq f(1/32+\alpha_4-2\gamma, \alpha_4, \gamma_4) \eta$. The total setup time is now:
\begin{multline*}
\mathsf{S} = \max \bigg( \undertext[1]{Lv. 5 and 4}{ c_4, \ell_4 },~\undertext{Level 3}{2\ell_4  - (c_3 - c_4)},~\undertext{Level 2}{2\ell_3  - (c_2 - c_3)},~\undertext{Level 1}{2\ell_2 - (c_1 - c_2)}, \\
\undertext{Level 0}{\ell_1 + \max(\ell_1 - (1-c_1), 0) } \bigg)
\end{multline*}
and the expected update time for level 5 (inserting a new element in a list $L_5^l$ at the bottom of the tree) and at level 4 (inserting a new element in $L_4$) is 1. The spectral gap of the graph is $\delta = - \ell_5^l $
and the proportion of marked vertices is $\epsilon = -\ell_0$.


\paragraph{Saturation Constraints.}
In the quantum walk, we have $\ell_0 < 0$, since we expect only some proportion of the nodes to be marked (to contain a solution). This proportion is hence $\ell_0$. The saturation constraints are modified as follows:
\begin{align*}
\begin{array}{ll}
\ell_5^l \leq \frac{\ell_0}{16} + f(\frac{1}{32} + \alpha_4 - 2\gamma_4, \alpha_4, \gamma_4) \eta, & \ell_4 \leq \frac{\ell_0}{16} + f(\frac{1}{32} + \alpha_4 - 2\gamma_4, \alpha_4, \gamma_4) - c_4\\
\ell_3 \leq \frac{\ell_0}{8} + f(\frac{1}{16} + \alpha_3 - 2\gamma_3, \alpha_3, \gamma_3) - c_3,  & \ell_2 \leq \frac{\ell_0}{4} + f(\frac{1}{8} + \alpha_2 - 2\gamma_2, \alpha_2, \gamma_2) - c_2 \\
\ell_1 \leq \frac{\ell_0}{2} + f(1/4 + \alpha_1 - 2\gamma_1, \alpha_1, \gamma_1) - c_1 & \\
\end{array}
\end{align*}

Indeed, the \emph{classical} walk will go through a total of $-\ell_0$ trees before finding a solution. Hence, it needs to go through $-\ell_0/16$ different lists at level 5 (and 4), which is why we need to introduce $\ell_0$ in the saturation constraint: there must be enough elements, not only in $L_5^l$, but in the whole search space that will be spanned by the walk. These constraints ensure the existence of marked vertices in the walk.

%
%

\subsection{Better Setup and Updates using quantum search}\label{subsection:update-qsearch}

Along the lines of Lemma~\ref{lemma:quantum-matching-filtering} and corollary~\ref{lemma:symmetric-merging}, we now show how to use a quantum search to speed up the \textsf{Setup} and \textsf{Update} steps in the quantum walk. As the structure of the graph is unchanged, we still have $\epsilon = - \ell_0$ and a spectral gap $\delta = - \ell_5^l$.

\paragraph{Setup.} 
Let $p_i, (1 \leq i \leq 3)$ be the filtering probabilities at level $i$, \emph{i.e.} the (logarithms of the) probabilities that an element that satisfies the modulo condition resp. at level $i$ also has the desired distribution of $0$s, $1$s, $-1$s and $2$s, and appears in list $L_i$. Notice that $p_i \leq 0$. Due to the left-right split, there is no filtering at level 4.

We use quantum filtering (Corollary~\ref{lemma:symmetric-merging}) to speed up the computation of lists at levels 3, 2 and 1 in the setup, reducing in general a time $2 \ell - c$ to $2 \ell -c + \pf/2$.
It does not apply for level 0, since $L_0$ has a negative expected size. At this level, we will simply perform a quantum search over $L_1$. If there is too much constraint, \emph{i.e.} $(1-c_1) > \ell_1$, then for a given element in $L_1$, there is on average less than one modular candidate. If $(1-c_1) < \ell_1$, there is on average more than one (although less than one with the filter) and we have to do another quantum search on them all. This is why the setup time at level 0, in full generality, becomes $(\ell_1 + \max(\ell_1 - (1-c_1), 0))/2$. The setup time can thus be improved to:
\begin{multline*}
\mathsf{S} = \max \bigg( \undertext[1]{Lv. 5 and 4}{c_4, \ell_4}, \undertext{Level 3}{2\ell_4 - (c_3- c_4) + p_3/2}, \undertext{Level 2}{2\ell_3 - (c_2- c_3) + p_2/2}, \\
\undertext{Level 1}{2\ell_2 - (c_1 - c_2) + p_1/2}, \undertext{Level 0}{ (\ell_1 + \max(\ell_1 - (1 - c_1), 0))/2 } \bigg) \enspace.
\end{multline*}


\paragraph{Update.} Our update will also use a quantum search. First of all, recall that the updates of levels 5 and 4 are performed in (expected) time 1. Having added an element in $L_4$, we need to update the upper level. There are on average $\ell_4 - (c_3 - c_4)$ candidates satisfying the modular condition. To avoid a blowup in the time complexity, we forbid to have more than one element inserted in $L_3$ on average, which means: $\ell_4 - (c_3 - c_4) + p_3 \leq 0 \iff \ell_3 \leq \ell_4$. We then find this element, if it exists, with a quantum search among the $\ell_4 - (c_3 - c_4)$ candidates. 

Similarly, as at most one element is updated in $L_3$, we can move on to the upper levels 2, 1 and 0 and use the same argument. We forbid to have more than one element inserted in $L_2$ on average: $\ell_3 - (c_2 - c_3) + p_2 \leq 0 \iff \ell_2 \leq \ell_3$, and in $L_1$: $\ell_1 \leq \ell_2$. At level 0, a quantum search may not be needed, hence a time $\max(\ell_1 - (1 - c_1), 0) / 2$\XB{Pas clair.}. The expected update time becomes:
\begin{multline*}
 \mathsf{U} =  \max \bigg( 0, \undertext{Level 3}{(\ell_4 - (c_3 - c_4))/2}, \undertext{Level 2}{(\ell_3 - (c_2 - c_3))/2},\\
\undertext{Level 1}{(\ell_2 - (c_1 - c_2))/2},~\undertext{Level 0}{(\ell_1 - (1-c_1))/2} \bigg) \enspace.
\end{multline*}


\subsection{Parameters}\label{subsection:new-quantum-bcj}
Using the following parameters, we found an algorithm that runs in time $\bigOt{2^{0.2156n}}$:
\begin{gather*}
\ell_0 = -0.1916, \ell_1 = 0.1996, \ell_2 = 0.2030, \ell_3 = 0.2110, \ell_4 (=\ell_5^l) = 0.2110 \\
c_1 = 0.6190, c2 = 0.4445, c3 =0.2506, c_4 (=\ell_5^r) = 0.0487 \\
\alpha_1 = 0.0176, \alpha_2 = 0.0153, \alpha_3 = 0.0131, \alpha_4 = 0.0087 \\
\gamma_1 = 0.0019, \gamma_2 = \gamma_3 = \gamma_4 = 0, \eta = 0.8448
\end{gather*}


There are many different parameters that achieve the same time. The above set achieves the lowest memory that we found, at $\bigOt{2^{0.2110n}}$. Note that time and memory complexities are different in this quantum walk, contrary to previous works, since the update procedure has now a (small) exponential cost.

\begin{remark}[Time-memory tradeoffs]
Quantum walks have a natural time-memory tradeoff which consists in reducing the vertex size. Smaller vertices have a smaller chance of being marked, and the walk goes on for a longer time. This is also applicable to our algorithms, but requires a re-optimization with a memory constraint.
\end{remark}

\section{Mitigating Quantum Walk Heuristics for Subset-Sum}
\label{section:fix-qw}

In this section, we provide a modified quantum walk \textsc{NEW-QW} for any quantum walk subset-sum algorithm \textsc{QW}, including \cite{DBLP:conf/pqcrypto/BernsteinJLM13,DBLP:conf/tqc/HelmM18} and ours, that will no longer rely on Heuristic~\ref{heuristic:helm-may}. In \textsc{NEW-QW}, the Johnson graph is the same, but the vertex data structure and the update procedure are different (Section~\ref{subsection:vertex-data-struct}). It allows us to guarantee the update time, at the expense of losing some marked vertices. In Section~\ref{subsection:fraction-marked}, we will show that most marked vertices in \textsc{QW} remain marked.


\subsection{New Data Structure for Storing Lists}\label{subsection:new-data-struct}

The main requirement of the vertex data structure is to store lists of subknapsacks with modular constraints in QRAQM. For each list, we will use two data structures. The first one is the combination of a hash table and a skip list given in~\cite{DBLP:journals/siamcomp/Ambainis07} (abbreviated \emph{skip list} below) and the second one is a \emph{Bucket-modulus list} data structure, adapted from Definition~\ref{def:unique-mod}, that we define below.

\paragraph{Hash Table and Skip List.}
We use the data structure of~\cite{DBLP:journals/siamcomp/Ambainis07} to store lists of entries $(\vec{e}, \vec{e} \cdot \vec{a})$, sorted by knapsack value $\vec{e} \cdot \vec{a}$. The data structure for $M$ entries, that we denote $\mathcal{SL}(M)$, uses $\bigOt{M}$ qRAM memory cells and supports the following operations: inserting an entry in the list, deleting an entry from the list and producing the uniform superposition of entries in the list. All these operations require time $\mathsf{polylog}(M)$.

We resort to this data structure because the proposal of ``radix trees'' in~\cite{DBLP:conf/pqcrypto/BernsteinJLM13} is less detailed. It is defined relatively to a choice of $\mathsf{polylog}(M) = \poly(n)$ hash functions selected from a family of independent hash functions of the entries (we refer to~\cite{DBLP:journals/siamcomp/Ambainis07} for more details). For a given choice of hash functions, the insertion or deletion operations can fail. Thus, the data structure is equipped with a superposition of such choices. Instead of storing $\mathcal{SL}(M)$, we store: $\sum_h \ket{h} \ket{\mathcal{SL}_h(M)}$ where $\mathcal{SL}_h$ is the data structure flavored with the choice of hash functions $h$. Insertions and deletions are performed depending on $h$. This allows for a globally negligible error: if sufficiently many hash functions are used, the insertion and deletion of \emph{any element} add a global error vector of amplitude $o(2^{-n})$ \emph{regardless of the current state of the data}. The standard ``hybrid argument'' from~\cite{DBLP:journals/siamcomp/BennettBBV97} and~\cite[Lemma 5]{DBLP:journals/siamcomp/Ambainis07} can then be used in the context of an MNRS quantum walk. 

\begin{proposition}[\cite{DBLP:journals/siamcomp/Ambainis07}, Lemma 5, adapted]
Consider an MNRS quantum walk with a ``perfect'' (theoretical) update unitary $U$, managing data structures, and an ``imperfect'' update unitary $U'$ such that, for any basis state $\ket{x}$:
$$ U' \ket{x} = U \ket{x} + \ket{\delta_x}  $$
where $\ket{\delta_x}$ is an error vector of amplitude bounded by $o(2^{-n})$ \emph{for any $x$}. Then running the walk with $U'$ instead of $U$, after $T$ steps, the final ``imperfect'' state $\ket{\psi'}$ deviates from the ``perfect'' state $\ket{\psi}$ by: $\| \ket{\psi'} - \ket{\psi} \| \leq o(2^{-n} T) $.
\end{proposition}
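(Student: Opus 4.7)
The plan is to use a standard hybrid argument in the style of~\cite{DBLP:journals/siamcomp/BennettBBV97}. I write the perfect walk's state after $T$ steps as $\ket{\psi_T} = W_T W_{T-1} \cdots W_1 \ket{\psi_0}$, where each $W_t$ denotes a complete walk step (diffusion, checking, update), with $U$ appearing inside the update. Similarly, the imperfect evolution produces $\ket{\psi'_T} = W'_T \cdots W'_1 \ket{\psi_0}$, where each $W'_t$ is identical to $W_t$ except that it calls $U'$ in place of $U$. Since the rest of the walk step is shared between $W_t$ and $W'_t$, the only source of deviation is the replacement of $U$ by $U'$.

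Next I would apply the telescoping trick
\[
\ket{\psi'_T} - \ket{\psi_T} = \sum_{t=1}^{T} W_T \cdots W_{t+1} \left(W'_t - W_t\right) W'_{t-1} \cdots W'_1 \ket{\psi_0},
\]
and, using that the $W_k$ are unitary and hence norm-preserving, the triangle inequality yields
\[
\| \ket{\psi'_T} - \ket{\psi_T} \| \leq \sum_{t=1}^{T} \| (W'_t - W_t)\ket{\phi_{t-1}} \|,
\]
where $\ket{\phi_{t-1}} = W'_{t-1} \cdots W'_1 \ket{\psi_0}$ is the state reached by the imperfect walk just before step $t$. Each term isolates the single-step deviation caused by swapping $U$ for $U'$ on whichever superposition the imperfect walk has reached at that point.

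The main step is then to bound $\|(W'_t - W_t)\ket{\phi_{t-1}}\| \leq o(2^{-n})$ uniformly in $t$. Writing $\ket{\phi_{t-1}} = \sum_x \alpha_x \ket{x}$ in the computational basis, the difference reduces, after peeling off the common surrounding layers, to $\sum_x \alpha_x \ket{\delta_x}$ on the register where $U$ acts. The hypothesis supplies $\|\ket{\delta_x}\| \leq o(2^{-n})$ for every basis state $\ket{x}$, so the task is to promote this pointwise statement into a uniform (operator-norm) bound on $U' - U$, and then inherit it for $W'_t - W_t$ by unitarity of the remaining factors. Granted that, summing over the $T$ hybrids gives $\| \ket{\psi'_T} - \ket{\psi_T} \| \leq T \cdot o(2^{-n}) = o(2^{-n} T)$, as claimed.

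The hard part is exactly this last promotion: a naive bound of the form $\sum_x |\alpha_x|\, o(2^{-n})$ loses a factor $\sqrt{\dim}$ and would wreck the estimate. Overcoming it is where the specific data structure from Section~\ref{subsection:new-data-struct} (following~\cite[Lemma~5]{DBLP:journals/siamcomp/Ambainis07}) enters: by superposing over a polynomial number of independent hash choices $h$, the per-input failure probability is driven below $o(2^{-n})$, and the error vectors $\ket{\delta_x}$ live in the disjoint hash-choice register, so they are essentially diagonal in $x$ and cannot coherently conspire across different $\alpha_x$. Given this structural fact, the triangle-inequality sum goes through and the proposition follows.
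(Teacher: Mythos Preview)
The paper does not supply its own proof of this proposition: it is stated as an adaptation of \cite[Lemma~5]{DBLP:journals/siamcomp/Ambainis07}, and the surrounding text simply invokes ``the standard `hybrid argument' from~\cite{DBLP:journals/siamcomp/BennettBBV97}''. Your telescoping decomposition is precisely that standard argument, so at the level of overall strategy you are aligned with what the paper intends.

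You are also right to isolate the one nontrivial point: a per-basis-state bound $\|\ket{\delta_x}\| = o(2^{-n})$ does not by itself give $\|U'-U\| = o(2^{-n})$ (a matrix with all columns of norm $\epsilon$ can still have operator norm $\epsilon\sqrt{\dim}$), so the proposition as literally worded leans on extra structure. Your sketch of the resolution is in the right direction but can be sharpened. In Ambainis' construction the relevant basis states are $\ket{h}\ket{x}$, and $U'$ agrees \emph{exactly} with $U$ on $\ket{h}\ket{x}$ unless $h$ is a bad hash choice for the current operation on $x$. Writing $P$ for the projector onto bad pairs, one has $(U'-U)=(U'-U)P$ and hence $\|(U'-U)\ket{\phi}\|\le 2\|P\ket{\phi}\|$. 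The point is then not that the $\ket{\delta_x}$ are ``diagonal'', but that the \emph{perfect} walk keeps the hash register in uniform superposition within each data sector, so $\|P\ket{\phi}\|^2$ equals the fraction of bad $h$, which is $o(2^{-n})$. To exploit this cleanly you want the telescoping in the other order,
\[
\ket{\psi'_T}-\ket{\psi_T}=\sum_{t=1}^{T} W'_T\cdots W'_{t+1}(W'_t-W_t)\,W_{t-1}\cdots W_1\ket{\psi_0},
\]
so that $(W'_t-W_t)$ acts on the \emph{perfect} intermediate state (where the hash register is guaranteed uniform), while the outer $W'_k$ are unitary and hence norm-preserving. With your ordering the intermediate state is the imperfect one, and its hash register need not be uniform, so the bound on $\|P\ket{\phi_{t-1}}\|$ is not immediate. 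This is a minor reshuffling, not a flaw in the approach.
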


This holds as a general principle: in the update unitary, any perfect procedure can be replaced by an imperfect one as long as its error is negligible (with respect to the total number of updates) \emph{and data-independent}. In contrast, the problem with Heuristic~\ref{heuristic:helm-may} is that a generic constant-time update induces data-dependent errors (bad cases) that do not seem easy to overcome.


\paragraph{Bucket-modulus List.}
Let $B = \poly(n)$ be a ``bucket size'' that will be chosen later. The bucket-modulus list is a tool for making our update time data-independent: it limits the number of vectors that can have a given modulus (where moduli are of the same order as the list size).

\begin{definition}[Bucket-modulus list]\label{def:bucket-mod}
A Bucket-modulus list $\mathcal{BL}(B, M)$ is a qRAM data structure that stores at most $B \times M$ entries $(\vec{e}, \vec{e} \cdot \vec{a})$, with at most $B$ entries sharing the same modulus $\vec{e} \cdot \vec{a} \mod M$. Thus, $\mathcal{BL}(B, M)$ contains $M$ ``buckets''. Buckets are indexed by moduli, and kept sorted. It supports the following operations:
\begin{itemize}
\item Insertion: insert $(\vec{e}, \vec{e} \cdot \vec{a})$. If the bucket at index $\vec{e} \cdot \vec{a} \mod M$ contains $B$ elements, \emph{empty the bucket}. Otherwise, sort it using a simple sorting circuit.
\item Deletion: remove an entry from the corresponding bucket.
\item Query in superposition: similar as in Definition~\ref{def:unique-mod}.
\end{itemize}
\end{definition}

In our new quantum walks, each list will be stored in a skip list $\mathcal{SL}(M)$ associated with a bucket-modulus $\mathcal{BL}(B, M)$. Each time we insert or delete an element from $\mathcal{SL}(M)$, we update the bucket-modulus list accordingly, according to the following rules.

Upon deletion of an element $\vec{e}$ in $\mathcal{SL}(M)$, let $\vec{e} \cdot \vec{a} = T \mod{M}$, there are three cases for $\mathcal{BL}(B, M)$:
\begin{itemize}
\item If $|\{ \vec{e'} \in \mathcal{SL}(M), \vec{e'} \cdot \vec{a} = T \}| > B+1$, then bucket number $T$ in $\mathcal{BL}(B,M)$ stays empty;
\item If $|\{ \vec{e'} \in \mathcal{SL}(M), \vec{e'} \cdot \vec{a} = T \}| = B + 1$, then removing $\vec{e}$ makes the number of elements reach the bound $B$, so we add them all in the bucket at index $T$;
\item If $|\{ \vec{e'} \in \mathcal{SL}(M), \vec{e'} \cdot \vec{a} = T \}| \leq B$, then we remove $\vec{e}$ from its bucket.
\end{itemize}

Upon insertion of an element $\vec{e}$ in $\mathcal{SL}(M)$, there are also three cases for $\mathcal{BL}(B, M)$:
\begin{itemize}
\item If $|\{ \vec{e'} \in \mathcal{SL}(M), \vec{e'} \cdot \vec{a} = T \}| = B$, then we empty the bucket at index $T$;
\item If $|\{ \vec{e'} \in \mathcal{SL}(M), \vec{e'} \cdot \vec{a} = T \}| < B$, then we add $\vec{e}$ to the bucket at index $T$ in $\mathcal{BL}(B,M)$;
\item If $|\{ \vec{e'} \in \mathcal{SL}(M), \vec{e'} \cdot \vec{a} = T \}| > B$, then the bucket is empty and remains empty.
\end{itemize}

In all cases, there are at most $B$ insertions or deletions in a single bucket. Note that $\mathcal{BL}(B,M) \subseteq \mathcal{SL}(M)$ but that some elements of $\mathcal{SL}(M)$ will be dropped. 

\begin{remark}
The mapping from a skip list of size $M$ (considered as perfect), which does not ``forget'' any of its elements, to a corresponding bucket-modulus list with $M$ buckets, which forgets some of the previous elements, is deterministic. Given a skip list $L$, a corresponding bucket modulus list $L'$ can be obtained by inserting all elements of $L$ into an empty bucket modulus list.
\end{remark}

\subsection{New Data Structure for Vertices}
\label{subsection:vertex-data-struct}

The algorithms that we consider use multiple levels of merging. However, we will focus only on a single level. Our arguments can be generalized to any constant number of merges (with an increase in the polynomial factors involved). Recall that the product Johnson graph on which we run the quantum walk is unchanged, only the data structure is adapted.



In the following, we will consider the merging of two lists $L_l$ and $L_r$ of subknapsacks of respective sizes $\ell_l$ and $\ell_r$, with a modular constraint $c$ and a filtering probability $\pf$. The merged list is denoted $L^c= L_l \bowtie_c L_r$ and the filtered list is denoted $L^f$. We assume that pairs $(\vec{e_1}, \vec{e_2})$ in $L^c$ must satisfy $(\vec{e_1} + \vec{e_2}) \cdot \vec{a} = 0 \mod{2^{cn}}$ (the generalization to any value modulo any moduli is straightforward).

On the positive side, our new data structure can be updated, \emph{by design}, with a fixed time that is data-independent. On the negative side, we will not build the complete list $L^f$, and miss some of the solutions. As we drop a fraction of the vectors, some nodes that were previously marked will potentially appear unmarked, but this fraction is polynomial at most. We defer a formal proof of this fact to Section~\ref{subsection:fraction-marked} and focus on the runtime.

We will focus on the case where $\ell_l = \ell_r$ and either $L_l$ or $L_r$ are updated, which happens at all levels in our quantum walk, except the first level. Because there is no filtering at the first level, it is actually much simpler to study with the same arguments. In previous quantum walks, we had $\ell^c =  2\ell - c \leq \ell$, \emph{i.e.} $\ell \leq c$; now we will have $2\ell - c \geq \ell$ and $2 \ell - c + \pf \leq \ell$.

%
%

Recall that our heuristic time complexity analysis assumes an update time $(\ell-c)/2$. Indeed, the update of an element in $L_l$ or $L_r$ modifies on average $(\ell-c)$ elements in $L_l \bowtie_c L_r$, among which we expect at most one filtered pair $(\vec{e_1}, \vec{e_2})$ (by the inequality $2 \ell - c + \pf \leq \ell$). We find this solution with a quantum search. In the following, we modify the data structure of vertices in order to guarantee the best update time possible, up to additional polynomial factors. We will see however that it does not reach $(\ell-c)/2$. We now define our intermediate lists and sublists, before giving the update procedure and its time complexity.


\paragraph{Definitions.}
Both lists $L_l, L_r$ are of size $M \simeq 2^{\ell n}$. We store them in skip lists. In both $L_r$ and $L_l$, for each $T \leq M$, we expect on average only one element $\vec{e}$ such that $\vec{e} \cdot \vec{a}  = T \mod{M}$. We introduce two \emph{Bucket-modulus lists} (Definition~\ref{def:bucket-mod}) $L_l'(B,M)$ and $L_r'(B,M)$ that we will write as $L_l'$ and $L_r'$ for simplicity, indexed by $\vec{e} \cdot \vec{a} \mod{M}$, with an arbitrary bound $B = \poly(n)$ for the bucket sizes. They are attached to $L_l$ and $L_r$ as detailed in Section~\ref{subsection:new-data-struct}. When an element in $L_l$ or $L_r$ is modified, they are modified accordingly.

In $L_l'$ and $L_r'$, we consider the sublists of subknapsacks having the same modulo $C \mod{2^{cn}}$, and we denote by $L_{l,C}'$ and $L_{r,C}'$ these sublists. They can be easily considered separately since the vectors are sorted by knapsack weight. By design of the bucket-modulus lists, $L_{l,C}'$ and $L_{r,C}'$ both have size at most $B 2^{(\ell-c)n}$. We have:
$$ L_l' \bowtie_c L_r' = \bigcup_{0 \leq C \leq 2^{cn} - 1} L_{l,C}' \times L_{r,C}' \enspace.$$

Next, we have a case disjunction to make. The most complicated case is when $2\ell - 2c + \pf > 0$, that is, each product $L_{l,C}' \times L_{r,C}'$ for a given $C$ yields more than one filtered pair on average. In that case, we define sublists $L_{l,C,i}'$ of $L_{l,C}'$ and sublists $L_{r,C,j}'$ of $L_{r,C}'$ using a new arbitrary modular constraint, so that each of these sublists is of size $-\pf/2$ (at most). There are $\ell-c + \pf/2$ sublists (exactly). The rationale of this cut is that a product $L_{l,C,i}' \times L_{r,C,j}'$ for a given $i,j$ now yields on average a single filtered pair (or less). When $2\ell - 2c + \pf \leq 0$, we don't perform this last cut and consider the product $L_{l,C}' \times L_{r,C}'$ immediately. By a slight abuse of notation, we denote: $(L_{l,C,i}' \times L_{r,C,j}')^f$ the set of filtered pairs from $L_{l,C,i}' \times L_{r,C,j}'$, and we have:
$$ L^f =  \bigcup_{0 \leq C \leq 2^{cn} - 1} \bigcup_{i,j}(L_{l,C,i}' \times L_{r,C,j}')^f \enspace. $$


\begin{algorithm}[htb]
\caption{Update algorithm: given $L_l, L_r$ of size $\ell$, we insert or delete an element in $L_l$ and update the filtered list $L^f$ accordingly. We focus here on the case $2\ell - 2c + \pf > 0$.}\label{algo:update}
\begin{algorithmic}[1]
\Statex \textbf{Data: } skip lists for $L_l, L_r, L^f$, bucket-modulus lists $L_l', L_r'$ 
\State \Comment{The bucket-modulus list for $L^f$ will be updated later}
\Statex \textbf{Input: } an insertion / deletion instruction for $L_l$
\Statex \textbf{Output: } updates $L_l, L_l', L^f$ accordingly
\State Insert or delete in $L_l$ \Comment{only one element to update}
\State Update the bucket-modulus structure $L_l'$ \Comment{at most $B$ elements to update}

\For{each element $\vec{e}$ to insert / delete in $L_l'$} \Comment{$B = \poly(n)$ iterations}
\State Select its corresponding sublist $L_{l,C,i}'$
\State Let $L_{l,C,i}'' = L_{l,C,i}' \cup \{ \vec{e} \}$ or $L_{l,C,i}' \backslash \{\vec{e}\}$
\For{each sublist $L_{r,C,j}'$}\Comment{$\ell-c + \pf/2$ iterations}
\State Estimate $s = |(L_{l,C,i}' \times L_{r,C,j}')^f|$ \Comment{time $\bigOt{B \times 2^{- \pf n/2}}$}
\State Estimate $s' = |(L_{l,C,i}'' \times L_{r,C,j}')^f|$ \Comment{time $\bigOt{B \times 2^{- \pf n/2}}$}
\Statex \Comment{In the case of an insertion, $s' \geq s$ and $s' \leq s$ for a deletion}
\State \textbf{if} $s > B$ and $s' \leq B$
\Statex \Comment{The removal of $\vec{e}$ makes the number of filtered pairs acceptable}
\State ~~ ~~ \textbf{then} $L^f \leftarrow L^f \cup (L_{l,C,i}'' \times L_{r,C,j}')^f$
\State \textbf{if} $s > B$ and $s' > B$
\State ~~ ~~ \textbf{then} do nothing
\State \textbf{if} $s \leq B$ and $s' > B$ 
\Statex \Comment{The insertion of $\vec{e}$ overflows the filtered pairs}
\State ~~ ~~ \textbf{then} remove all $(L_{l,C,i}' \times L_{r,C,j}')^f$ from $L^f$
\State \textbf{if} $s \leq B$ and $s' \leq B$
\State ~~ ~~ \textbf{then} update $L^f$ with the (at most) $B$ new or removed pairs
\EndFor
\EndFor

\end{algorithmic}
\end{algorithm}



\paragraph{Algorithm and Complexity.}
Algorithm~\ref{algo:update} details our update procedure. We now compute its time complexity and explain why it remains data-independent. Recall that we want to avoid the ``bad cases'' where an update goes on for too long: this is the case where an update in $L_l$ (or $L_r$) creates too many updates in $L^f$. In Algorithm~\ref{algo:update}, we avoid this by deliberately limiting the number of elements that can be updated. We can see that $L^f$ will be smaller than the ``perfect'' one for two reasons: $\bullet$~the bucket-modulus data structure loses some vectors, since the buckets are dropped when they overflow. $\bullet$~filtered pairs are lost. Indeed, the algorithm ensures that in $L^f$, at most $B$ solutions $\vec{e}_l + \vec{e}_r$ come from a cross-product $L_{l,C,i}' \times L_{r,C,j}'$.

This makes the update procedure \emph{history-independent} and its time complexity \emph{data-independent}. Indeed:

\begin{lemma}
The state of the data structures $L_l, L_r, L^f$ after Algorithm~\ref{algo:update} depends only on $L_l, L_r, L^f$ before and on the element that was inserted / deleted.
\end{lemma}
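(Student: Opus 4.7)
My plan is to establish history-independence by exhibiting an invariant that characterizes $L^f$ as a deterministic function of $L_l$ and $L_r$, and then checking that Algorithm~\ref{algo:update} preserves it. Concretely, I would prove by induction on the number of updates that the following invariant always holds before and after an update:
\[
L^f \;=\; \bigcup_{\substack{(C,i,j)\,:\\ |(L_{l,C,i}' \times L_{r,C,j}')^f| \leq B}} (L_{l,C,i}' \times L_{r,C,j}')^f,
\]
where $L_l'$ and $L_r'$ denote the bucket-modulus lists attached to $L_l$ and $L_r$. Since the earlier remark in Section~\ref{subsection:new-data-struct} shows that the map from a skip list to its bucket-modulus counterpart is deterministic, it suffices to prove that this invariant is preserved; once that is done, the post-update triple $(L_l, L_r, L^f)$ becomes a deterministic function of the pre-update triple and the single insertion/deletion, which is exactly the lemma.

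The first step is the trivial part: the skip list $L_l$ receives one insertion/deletion, so its state post-update clearly depends only on the pre-update $L_l$ and the element; $L_r$ is untouched; and $L_l'$ is a deterministic image of $L_l$ by the rules of Section~\ref{subsection:new-data-struct}. Hence only the evolution of $L^f$ needs inspection. I would also observe that the modified element $\vec{e}$ lies in a unique sublist $L_{l,C,i}'$ (determined by its modulus $\vec{e}\cdot\vec{a} \bmod 2^{cn}$ and by the secondary modular split defining the sub-index $i$), so the only cross-products $(L_{l,C,i}'\times L_{r,C,j}')^f$ whose filtered content can change are those for this fixed $(C,i)$ and arbitrary $j$. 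This justifies why Algorithm~\ref{algo:update} only iterates over the $\ell - c + \pf/2$ values of $j$.

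The main step is the case analysis on the four branches of the inner loop, writing $L_{l,C,i}''$ for the sublist after the modification, $s = |(L_{l,C,i}' \times L_{r,C,j}')^f|$ and $s' = |(L_{l,C,i}'' \times L_{r,C,j}')^f|$. By the induction hypothesis, $(L_{l,C,i}' \times L_{r,C,j}')^f \subseteq L^f$ iff $s \leq B$. I would check each branch:
\begin{itemize}
\item $s>B$ and $s'>B$: the invariant excludes the cross-product both before and after, and the algorithm does nothing;
\item $s>B$ and $s'\leq B$: the invariant starts excluding and ends including, and the algorithm inserts $(L_{l,C,i}''\times L_{r,C,j}')^f$;
\item $s\leq B$ and $s'>B$: symmetric, the algorithm removes $(L_{l,C,i}'\times L_{r,C,j}')^f$;
\item $s\leq B$ and $s'\leq B$: the cross-product is in $L^f$ before and after, and only the at most $B$ filtered pairs involving $\vec{e}$ differ, which the algorithm updates individually.
\end{itemize}
The simpler regime $2\ell - 2c + \pf \leq 0$ is handled identically by collapsing the $(i,j)$ split.

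The step I expect to be the most delicate is verifying that no cross-product outside the fixed $(C,i)$-slice is affected, because the bucket-modulus rules may \emph{empty} or \emph{refill} the bucket at index $T = \vec{e}\cdot\vec{a} \bmod M$, potentially altering several sublists at once. I would argue that even in this case, all such induced modifications are a deterministic function of the pre-update $L_l$ and of $\vec{e}$ (by the bucket rules of Section~\ref{subsection:new-data-struct}), so they can be absorbed into the same inductive step by iterating the case analysis over the at most $B = \poly(n)$ elements whose membership in $L_l'$ is toggled; this is precisely the outer \textbf{for} loop of Algorithm~\ref{algo:update}. Once this bookkeeping is in place, the invariant is preserved and the lemma follows.
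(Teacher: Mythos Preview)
Your proposal is correct and, in fact, considerably more detailed than what the paper provides: the paper simply states ``We omit a formal proof, as it follows from our definition of the bucket-modulus list and of Algorithm~\ref{algo:update}.'' Your invariant-based argument is precisely the unfolding of those definitions that the paper leaves implicit, and your four-branch case analysis matches the algorithm line by line. One small observation that could streamline your ``most delicate'' step: in the regime under discussion ($\ell \geq c$ and the sub-index $i$ determined by $(\ell + \pf/2)n < \ell n$ bits of modulus), all elements sharing a bucket in $L_l'$ automatically lie in the \emph{same} $(C,i)$-slice, so the outer \textbf{for}-loop never touches more than one cross-product family per bucket event; this removes any worry about interactions between slices.
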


We omit a formal proof, as it follows from our definition of the bucket-modulus list and of Algorithm~\ref{algo:update}.

\begin{lemma}
With a good choice of $B$, Algorithm~\ref{algo:update} runs with a data-independent error in $o(2^n)$. The time complexity of Algorithm~\ref{algo:update} is $\bigOt{2^{(\ell-c)n}}$ and an update modifies $\bigOt{2^{\max(\ell-c + \pf/2, 0)n}}$ elements in the filtered list $L^f$ at the next level (respectively $\ell-c$ and $\max(\ell-c + \pf/2, 0)$ in log scale).
\end{lemma}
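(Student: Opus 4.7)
The plan is to prove three things in turn: (i) the post-update state of $(L_l, L_r, L_l', L_r', L^f)$ is a deterministic function of their pre-update state and of the inserted or deleted element, so that Algorithm~\ref{algo:update} defines a well-posed update unitary; (ii) its implementation via skip lists introduces only a data-independent quantum error of amplitude $o(2^{-n})$ per call, from which the $o(2^n)$ bound over any polynomial number of walk steps follows by the hybrid argument; (iii) its running time is $\widetilde{\mathcal{O}}(2^{(\ell-c)n})$ and it modifies at most $\widetilde{\mathcal{O}}(2^{\max(\ell-c+\pf/2,0)n})$ entries of $L^f$, both data-independently.

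Point (i) reduces to checking that every branch of Algorithm~\ref{algo:update} depends only on the data itself. The bucket-modulus update rules from Section~\ref{subsection:new-data-struct} are deterministic; the four cases at lines 9--17 depend on the integers $s, s'$, which are themselves deterministic functions of $L_{l,C,i}'$ and $L_{r,C,j}'$; and the skip-list insertion/deletion is deterministic given the choice of hash family. For (ii), each elementary skip-list operation carries failure amplitude $o(2^{-n})$ once the superposition over hash choices is taken, as already recorded in the hybrid proposition above. Computing $s$ and $s'$, and writing into $L^f$, amount to $\poly(n)$ such operations plus Grover-type searches over polynomial-size sublist products. Choosing $B$ as a sufficiently large polynomial guarantees that the per-call error remains in $o(2^{-n})$, so that the MNRS walk is run with negligible global deviation.

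For point (iii), I would tally the cost line by line in the main case $2\ell-2c+\pf>0$. Lines 2--3 cost $\poly(n)$. The outer loop has $B = \poly(n)$ iterations. Each inner iteration ranges over the $2^{(\ell-c+\pf/2)n}$ sublists $L_{r,C,j}'$ and performs two size-estimations of cost $\widetilde{\mathcal{O}}(B\cdot 2^{-\pf n/2})$ (Grover search over $L_{l,C,i}' \times L_{r,C,j}'$, of size at most $B^2 \cdot 2^{-\pf n}$), followed by writing at most $B$ pairs into $L^f$. The products telescope to $\widetilde{\mathcal{O}}(2^{(\ell-c+\pf/2)n}\cdot 2^{-\pf n/2}) = \widetilde{\mathcal{O}}(2^{(\ell-c)n})$ per outer iteration, hence the same total time. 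The number of pairs touched in $L^f$ is at most $B$ per sublist, times $2^{(\ell-c+\pf/2)n}$ sublists, times $\poly(n)$ outer iterations, giving $\widetilde{\mathcal{O}}(2^{(\ell-c+\pf/2)n})$. The degenerate case $2\ell-2c+\pf\le 0$ is handled without the inner subdivision and yields the same $\widetilde{\mathcal{O}}(2^{(\ell-c)n})$ time bound but only $\widetilde{\mathcal{O}}(1)$ touched pairs, matching the $\max(\cdot,0)$ in the statement.

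The main delicate step I expect is the precision of the estimates $s$ and $s'$ at lines 9--10: for history-independence, the branching must produce the same control value (hence the same post-state) whenever $(L_{l,C,i}', L_{r,C,j}')$ coincide. Because $B = \poly(n)$ and both $s, s'$ are bounded by $B^2 \cdot 2^{-\pf n}$, it suffices to compute $s, s'$ \emph{exactly} by quantum counting, or by classical enumeration of the (polynomially many in the worst case) filtered candidates, and then to control the update of $L^f$ by these classical values; this fits within the announced time budget and removes any ambiguity in the branching. Checking that this realization keeps the per-call error in $o(2^{-n})$ and really depends only on $(L_{l,C,i}', L_{r,C,j}')$ is the technical point I would treat carefully in the full proof.
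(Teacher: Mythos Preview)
Your overall structure and time accounting match the paper's proof: line-by-line cost analysis, $B=\poly(n)$ outer iterations, $2^{(\ell-c+\pf/2)n}$ inner iterations, estimation cost $\widetilde{\mathcal{O}}(B\cdot 2^{-\pf n/2})$, total $\widetilde{\mathcal{O}}(2^{(\ell-c)n})$, and $B\cdot 2^{(\ell-c+\pf/2)n}$ modifications passed to $L^f$. The identification of the two data-independent error sources (skip lists and the size-estimation procedure) is also what the paper does. Note that point (i), history-independence, is actually the content of the \emph{previous} lemma in the paper, not this one; but merging the two is harmless.

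There is, however, a genuine gap in your handling of lines 9--10. You twice describe the sublist products as having ``polynomial size'' and the filtered candidates as ``polynomially many in the worst case''; neither is true. Each $L_{l,C,i}'\times L_{r,C,j}'$ has size roughly $2^{-\pf n}$, which is exponential, and the number of filtered pairs in it can in principle be anything up to that. Consequently, your two proposed fixes both fail: classical enumeration of the product costs $2^{-\pf n}$ per inner iteration and would give total time $2^{(\ell-c-\pf/2)n}$, strictly worse than claimed; and ``exact'' quantum counting does not exist in this generality---the Brassard--H\o yer--Tapp procedure yields only an approximate count, so you cannot reversibly branch on the exact comparison $s\lessgtr B$ without introducing data-dependent error precisely at the boundary. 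The paper resolves this with a dedicated reversible \emph{threshold} test (its Appendix~C): one runs $\widetilde{\mathcal{O}}(B)$ independent Grover searches over the product, with iteration counts sweeping a geometric grid, and uses a coupon-collector argument to decide ``$|G|\geq B$ or not'' with failure probability $o(2^{-n})$, all in time $\widetilde{\mathcal{O}}(B\cdot 2^{-\pf n/2})$. This threshold test, not an exact count, is what makes the branching well-defined, reversible, and within budget; you should replace your proposed counting step with it. Also, $B$ is chosen as $\mathcal{O}(n)$, and its role is to control the loss of marked vertices in Section~\ref{subsection:fraction-marked}, not to drive the per-call update error below $o(2^{-n})$.
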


\begin{proof}
We check step by step the time complexity of Algorithm~\ref{algo:update}:
\begin{itemize}
\item Insertion and deletion from the skip list for $L_l$ is done in $\poly(n)$, with a global error that can be omitted.
\item The bucket-modulus list $L_l'$ is updated in time $\bigO{B} = \poly(n)$ without errors. At most $B$ elements must be inserted or removed.
\item For each insertion or removal in $L_l'$, we select the corresponding sublist $L_{l,C,i}'$ (or simply $L_{l,C}'$ if $2\ell - 2c + \pf \leq 0$). We look at the sublists $L_{r,C,j}'$ and we estimate the number of filtered pairs in the products $L_{l,C,i}' \times L_{r,C,j}'$ (of size $-\pf$), checking whether it is smaller or bigger than $B$. We explain in \ifeprint Appendix~C\else \cite[Appendix~C]{DBLP:journals/iacr/BonnetainBSS20} \fi how to do that reversibly in time $\bigOt{B \times 2^{-\pf n/2}}$ ($-\pf/2$ in log scale).
There are $\ell-c + \pf/2$ classical iterations, thus the total time is $\ell-c$.
\item Depending whether we have found more or less than $B$ filtered pairs, we will have to remove or to add all of them in $L^f$. This means that $B \times 2^{(\ell-c + \pf/2)n}$ insertion or deletion instructions will be passed over to $L^f$.
\end{itemize}

There are two sources of data-independent errors: first, the skip list data structure (see Section~\ref{subsection:new-data-struct}). Second, the procedure of \ifeprint Appendix~C\else \cite[Appendix~C]{DBLP:journals/iacr/BonnetainBSS20}\fi. Both can be made exponentially small at the price of a polynomial overhead. Note that $B$ will be set in order to get a sufficiently small probability of error (see the next section), and can be a global $\bigO{n}$. However, the polynomial overhead of our update unitary grows with the number of levels. \qed
\end{proof}

\subsection{Fraction of Marked Vertices}
\label{subsection:fraction-marked}

Now that we have computed the update time of \textsc{NEW-QW}, it remains to compute its fraction $\epsilon_{new}$ of marked vertices. We will show that $\epsilon_{new}= \epsilon \left(1 - \frac{1}{\poly(n)} \right)$ with overwhelming probability on the random subset-sum instance, where $\epsilon$ is the previous fraction in \textsc{QW}.

Consider a marked vertex in \textsc{QW}. There is a path in the data structure leading to the solution, hence a constant number of subknapsacks $\vec{e_1}, \ldots, \vec{e_t}$ such that the vertex will remain marked \emph{if and only} if none of them is ``accidentally'' discarded by our new data structure. Thus, if $G$ is the graph of the walk, we want to upper bound:
$$ \Pr_{v \in G} \left( \begin{matrix}
\text{$v$ is marked in \textsc{QW} and} \\ \text{not marked in \textsc{NEW-QW}}
\end{matrix} \right) \leq \sum_{ \vec{e_i}, 1 \leq i \leq t} \Pr_{v \in G} \left(  \begin{matrix}
\vec{e_i} \in v \text{ in \textsc{QW}} \\
\vec{e_i} \notin v \text{ in \textsc{NEW-QW}} \\
\end{matrix} \right) \enspace. $$

We focus on some level in the tree, on a list $L$ of average size $2^{\ell n}$, and on a single vector $\vec{e_0}$ that must appear in $L$. Subknapsacks in $L$ are taken from $\mathcal{B}\subseteq D^n[\alpha,\beta, \gamma]$. We study the event that $\vec{e_0}$ is accidentally discarded from $L$. This can happen for two reasons:
\begin{itemize}
\item we have $|\{\vec{e} \in L,  \vec{e}\dotprod \vec{a} = \vec{e_0}\dotprod \vec{a} \mod{2^{\ell n}}\}| > B$: the vector is dropped at the bucket-modulus level;
\item at the next level, there are more than $B$ pairs from some product of lists $L_{l,C,i}' \times L_{r,C,j}'$ to which the vector $\vec{e_0}$ belongs, that will pass the filter.
\end{itemize}

We remark the following to make our computations easier.

\begin{fact}
We can replace the $L$ from our new data structure \textsc{NEW-QW} by a list of exact size $2^{\ell n}$, which is a sublist from the list $L$ in \textsc{QW}.
\end{fact}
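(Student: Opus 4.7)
\medskip

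The plan is to justify the fact via standard concentration arguments, so that the subsequent probability calculations can be carried out against a deterministic list size rather than a random one. First I would recall that in \textsc{QW}, every list $L$ in the tree is obtained by repeated merge-and-filter operations starting from uniformly sampled subknapsacks, and its size is a sum of (essentially) independent Bernoulli variables, with mean $2^{\ell n}$. A Chernoff bound gives $\Pr[\,|\,|L| - 2^{\ell n}|\geq 2^{\ell n}/n\,] \leq e^{-\Omega(2^{\ell n}/n^{2})}$, which is doubly-exponentially small in $n$. Taking a union bound over the (polynomially many) lists across all vertices visited during the walk, and over the polynomially many levels of the tree, one sees that with probability $1 - \negl(n)$ over the choice of the random instance $\vec a$ and the internal randomness, \emph{every} list in \textsc{QW} has size at least $2^{\ell n}$.

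Next I would observe that, by construction of the bucket-modulus list in Section~\ref{subsection:new-data-struct} and of Algorithm~\ref{algo:update}, the list maintained in \textsc{NEW-QW} is obtained from the corresponding list in \textsc{QW} by a purely deterministic truncation: one inserts all elements of the \textsc{QW} list into an empty bucket-modulus data structure, and only those that are not dropped survive. Hence the \textsc{NEW-QW} list is always a sublist of the \textsc{QW} list. On the conditioning event above, the \textsc{QW} list has at least $2^{\ell n}$ elements, so one can select a canonical sublist $\widetilde{L} \subseteq L_{\textsc{QW}}$ of size exactly $2^{\ell n}$ (e.g.\ by taking the $2^{\ell n}$ elements with smallest knapsack value).

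I would then argue that performing the subsequent analysis against $\widetilde{L}$ only over-counts the event of interest, namely ``$\vec e_0$ is accidentally discarded.'' Indeed, bucket overflow and cross-product overflow are both monotone in the number of candidates in the relevant sublists, and the true \textsc{NEW-QW} list is contained in $\widetilde{L}$ up to the $\negl(n)$-probability failure event. So an upper bound on the discard probability computed against $\widetilde{L}$ yields an upper bound (up to an additive $\negl(n)$ term) on the discard probability in the actual walk.

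The only subtle point — and the step I would treat most carefully — is ensuring that the concentration and union bound really apply uniformly to every list encountered in the walk, including intermediate sublists $L'_{l,C}$ and $L'_{l,C,i}$ at sub-exponential granularity. Here I would rely on the standard subset-sum heuristic (Heuristic~\ref{heur1}) that validates the expected sizes of these sublists and gives enough independence for Chernoff to apply at each granularity; then a single union bound over the $\poly(n)$ many relevant sublists finishes the argument. With this in hand, the replacement $L \leftsquigarrow \widetilde{L}$ is justified and all further probability computations may assume a list of deterministic size $2^{\ell n}$.
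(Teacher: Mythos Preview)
Your core idea --- monotonicity of bucket and filter overflow in the list, combined with $L_{\textsc{NEW-QW}} \subseteq L_{\textsc{QW}}$ --- is exactly the paper's justification. The paper's entire argument for this Fact is the one-line observation that removing a vector not on the solution path from a list cannot increase the probability of overflow at the next level; hence replacing the (smaller) \textsc{NEW-QW} list by any superset of it taken from $L_{\textsc{QW}}$, in particular one of size $2^{\ell n}$, is pessimistic in the right direction. The paper does not spell out any Chernoff step; your concentration argument is extra rigor the paper leaves implicit.

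Two small imprecisions in your write-up are worth tightening. First, your canonical choice of $\widetilde{L}$ as ``the $2^{\ell n}$ elements with smallest knapsack value'' does \emph{not} ensure $L_{\textsc{NEW-QW}} \subseteq \widetilde{L}$, so the monotonicity comparison as you phrase it does not literally apply. The clean fix is to pick any $\widetilde{L}$ with $L_{\textsc{NEW-QW}} \subseteq \widetilde{L} \subseteq L_{\textsc{QW}}$ and $|\widetilde{L}| = 2^{\ell n}$, which is possible once $|L_{\textsc{NEW-QW}}| \leq 2^{\ell n} \leq |L_{\textsc{QW}}|$ --- and that is what your Chernoff bound (almost) delivers. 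Second, your union bound is over ``polynomially many lists across all vertices visited during the walk'', but the walk visits exponentially many vertices. This is harmless: the Fact is used to bound the fraction of marked vertices, i.e.\ a probability over a \emph{single} random vertex, so no union bound over vertices is needed; and even if you insisted on one, your doubly-exponential Chernoff tail would absorb it.
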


At successive levels, our new data structure discards more and more vectors. Hence, the actual lists are smaller than in \textsc{QW}. However, removing a vector $\vec{e}$ from a list, if it does not unmark the vertex, does not increase the probability of unmarking it at the next level, since $\vec{e}$ does not belong to the unique solution.

\begin{fact}
When a vertex in \textsc{NEW-QW} is sampled uniformly at random, given a list $L$ at some merging level, we can assume that the elements of $L$ are sampled uniformly at random from their distribution $\mathcal{B}$ (with a modular constraint).
\end{fact}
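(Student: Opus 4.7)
The plan is to derive the Fact from the construction of \textsc{NEW-QW} together with Heuristic~\ref{heur1}. I would first unwind what a uniformly random vertex of the product Johnson graph looks like at the base of the merging tree: by definition, each level-$5$ list is a uniformly random subset of prescribed size drawn from its respective ground set (either $D^{\eta n}[\alpha_4, 1/32, \gamma_4] \times \{0^{(1-\eta)n}\}$ or its right counterpart). Consequently, conditioned on the list sizes, the elements of each level-$5$ list are exchangeable and each one is marginally uniform over the ground set.

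Next, I would propagate uniformity level by level up the merging tree. A higher-level list $L$ is a deterministic function of the level-$5$ lists, obtained by iterated merge-and-filter with some modular constraint and target distribution $\mathcal{B} \subseteq D^n[\alpha, \beta, \gamma]$. Applying Heuristic~\ref{heur1} inductively, the marginal distribution of a fixed element of $L$ is uniform over $\mathcal{B}$ subject to the modular constraint imposed at that level. The additional bookkeeping introduced by \textsc{NEW-QW}, namely the pairing of a skip list with a bucket-modulus list and the modified update procedure of Algorithm~\ref{algo:update}, does not alter this marginal: by the first Fact of Section~\ref{subsection:fraction-marked} we may assume that $L$ is a sublist of exact size $2^{\ell n}$ of the list that \textsc{QW} would have constructed, so the marginals of the surviving elements are inherited unchanged. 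Concentration of list sizes around their expected values, obtained by a Chernoff bound and a union bound over the polynomially many lists in the tree, ensures that the error introduced by fixing the sizes is negligible.

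The main obstacle I expect is the absence of genuine independence between distinct elements of $L$: they descend from the same bottom-level subsets and therefore inherit correlations through the merging tree. However, the use of this Fact in Section~\ref{subsection:fraction-marked} only requires the marginal distribution of each candidate $\vec{e_0}$ taken separately, plus the joint distribution of the at most $B = \poly(n)$ other elements that can share a bucket with it. For these constant-arity marginals, Heuristic~\ref{heur1} again justifies treating the elements as independent uniform samples from $\mathcal{B}$ subject to the modular constraint, up to polynomial factors that are absorbed in the final bound on the fraction of vertices that are marked in \textsc{QW} but not in \textsc{NEW-QW}.
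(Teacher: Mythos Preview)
Your proposal is correct and follows essentially the same approach as the paper: both ground the Fact in Heuristic~\ref{heur1}, starting from the uniform sampling of base-level lists and propagating through the merging tree. The paper's justification is terser and singles out one technical point you leave implicit: at the base level the lists are sampled \emph{without} replacement, but since the probability of a collision is exponentially small, one may pass to a ``completed'' graph $G'$ in which lists are sampled \emph{with} replacement, and then Heuristic~\ref{heur1} applies directly.
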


This fact translates Heuristic~\ref{heur1} as a global property of the Johnson graph. At the first level, nodes contain lists of exponential size which are sampled without replacement. However, when sampling with replacement, the probability of collisions is exponentially low. Thus, we can replace $\Pr_{v \in G}$ by $\Pr_{v \in G'}$ where $G'$ is a ``completed'' graph containing all lists sampled uniformly at random with replacement. This adds only a negligible number of vertices and does not impact the probability of being discarded.


\paragraph{Number of Vectors Having the Same Modulus.}
Let $N \simeq 2^n$ and $M$ be a divisor of $N$. 
Given a particular $\vec{e}_0\in\mathcal{B}$ and a vector $\vec{a}\in\mathbb{Z}_{N}^n$, 
$$ \text{For $\vec{e}\in \mathcal{B}$, define } X_\vec{e}(a)= \begin{cases}
  1 & \text{ if } \vec{e}\dotprod \vec{a} = \vec{e}_0 \dotprod \vec{a} \pmod M\\
  0 & \text{ otherwise}
\end{cases}$$

We prove the following Lemma in \ifeprint Appendix~\ref{appendix:probas} \else the full version of the paper~\cite{DBLP:journals/iacr/BonnetainBSS20}\fi.

\begin{lemma}\label{lemma:prob}
If $|\mathcal{B}|\gg M \simeq |L|$, then for a $1-\negl(n)$ proportion of $\vec{a} \in \mathbb{Z}_{N}^n$, and with an appropriate $B = \bigO{n}$:
\begin{equation}
\label{eq:prob}
\prob[\vec{e}_1, \cdots, \vec{e}_{|L|} \sim Unif(\mathcal{B})]{\sum_{i=1}^{|L|} X_{\vec{e}_i} (\vec{a})< B -1} > 1 - \frac{1}{\poly(n)}
\end{equation}
\end{lemma}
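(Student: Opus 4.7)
The plan is to fix $\vec{e}_0$ and the instance $\vec{a}$, and write $S(\vec{a}) = \sum_{i=1}^{|L|} X_{\vec{e}_i}(\vec{a})$. Conditional on $\vec{a}$, $S(\vec{a})$ is a sum of $|L|$ i.i.d.\ Bernoulli random variables with parameter $q(\vec{a}) := \Pr_{\vec{e}\sim\mathrm{Unif}(\mathcal{B})}[\vec{e}\dotprod\vec{a} \equiv \vec{e}_0\dotprod\vec{a} \pmod M]$. The proof then splits into two parts: (i) show that $q(\vec{a}) \leq C/M$ for some constant $C$ with probability $1 - \negl(n)$ over the choice of $\vec{a}$; (ii) given this, apply a Chernoff bound to conclude that $S(\vec{a}) < B-1$ with probability $1 - 1/\poly(n)$ for a well-chosen $B = \bigO{n}$.

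For part (i), I would first compute
\[
\mathbb{E}_{\vec{a}}[q(\vec{a})] = \frac{1}{|\mathcal{B}|}\sum_{\vec{e}\in\mathcal{B}} \Pr_{\vec{a}}\bigl[(\vec{e}-\vec{e}_0)\dotprod\vec{a} \equiv 0 \pmod{M}\bigr].
\]
The term $\vec{e}=\vec{e}_0$ contributes $1/|\mathcal{B}|$. For $\vec{e}\neq\vec{e}_0$, the difference $\vec{v}=\vec{e}-\vec{e}_0$ is a nonzero vector with entries in $\{-4,\dots,4\}$, and a routine counting argument shows that the number of $\vec{a}\in\mathbb{Z}_N^n$ with $\vec{v}\dotprod\vec{a}\equiv 0\pmod M$ equals $N^n \gcd(\vec{v},M)/M$; since $M$ is much larger than the coefficients of $\vec{v}$, this gcd is at most a small constant on average, giving $\mathbb{E}_{\vec{a}}[q(\vec{a})] = \bigO{1/M}$ because $|\mathcal{B}| \gg M$. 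To upgrade this to a statement that holds for all but a negligible fraction of $\vec{a}$, I would compute the second moment $\mathbb{E}_{\vec{a}}[q(\vec{a})^2]$ using the pairwise structure $\vec{a} \mapsto (\vec{e}\dotprod\vec{a},\vec{e}'\dotprod\vec{a}) \bmod M$, which is essentially a universal hash family. Chebyshev's inequality (or equivalently the leftover hash lemma) then yields $\Pr_{\vec{a}}[q(\vec{a}) > C/M] \leq \negl(n)$ for a suitable constant $C$.

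For part (ii), once $q(\vec{a}) \leq C/M$ holds, the conditional mean satisfies $\mathbb{E}[S(\vec{a})\mid\vec{a}] = |L|\,q(\vec{a}) = \bigO{1}$ since $|L|\simeq M$. A standard Chernoff tail bound for a sum of i.i.d.\ Bernoullis gives
\[
\Pr\bigl[S(\vec{a}) \geq B-1 \,\bigm|\, \vec{a}\bigr] \leq \exp\bigl(-\Omega(B)\bigr)
\]
for $B$ larger than $e\cdot\mathbb{E}[S(\vec{a})]$. Taking $B = \Theta(n)$ makes this probability $2^{-\Omega(n)}$, which is much stronger than the $1/\poly(n)$ needed. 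Combining with the bound from (i) via a union bound over the two failure modes proves~\eqref{eq:prob}.

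The main obstacle I anticipate is step (i): the expectation bound is straightforward, but converting it into a statement that holds with overwhelming probability over $\vec{a}$ is more delicate because $\mathbb{Z}_N$ is not a field and differences $\vec{e}-\vec{e}_0$ may share small common factors with $M$. A naive Markov bound on $q(\vec{a})$ only controls a constant fraction of $\vec{a}$, which is insufficient; one genuinely needs a second-moment (pairwise-independence) estimate, and care must be taken to bound the number of ``degenerate'' pairs $(\vec{e},\vec{e}')\in\mathcal{B}^2$ whose joint hash collides more often than independence would predict. Once this regularity step is in place, the concentration argument in part (ii) is routine.
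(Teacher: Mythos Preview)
Your proposal is correct and follows essentially the same route as the paper. The paper works with $Y(\vec{a}) = |\{\vec{e}\in\mathcal{B}:\vec{e}\dotprod\vec{a}=\vec{e}_0\dotprod\vec{a}\pmod M\}| = |\mathcal{B}|\cdot q(\vec{a})$, computes $\Exp{\vec{a}}{Y(\vec{a})}$ and $\Exp{\vec{a}}{Y(\vec{a})^2}$ via explicit character sums, bounds the number of ``degenerate'' triples $(\lambda,\mu,\vec{f})$ with $\lambda\vec{e}+\mu\vec{f}=(\lambda+\mu)\vec{e}_0\pmod M$ by $3M$ (exactly the pairwise-collision control you anticipate), applies Chebyshev to get $Y(\vec{a})\leq 2|\mathcal{B}|/M$ for all but a negligible fraction of $\vec{a}$, and then finishes with the same Chernoff bound on the i.i.d.\ Bernoulli sum that you describe.
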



%


For the number of filtered pairs, we use the fact that the vectors at each level are sampled uniformly at random from their distribution. If this is the case, then a Chernoff bound (similar to the proof of Lemma~\ref{lemma:prob}) limits the deviation of the number of filtered pairs in $L_{l,C,i}' \times L_{r,C,j}'$ from its expectation (which is 1 by construction): the probability that there are more than $B + 1$ pairs is smaller than $e^{- (B+1)/3}$. By taking a sufficiently big $B = \bigO{n}$, we can take a union bound over all products of lists $L_{l,C,i}' \times L_{r,C,j}'$ in which $\vec{e_0}$ intervenes. We also take a union bound over the intermediate subknapsacks that we are considering. The loss of vertices remains inverse polynomial.

\subsection{Time Complexities without Heuristic~2}
\label{subsection:time-without-heuristics}

\newcommand{\maxz}{\widehat{ \mathrm{max}}}

Previous quantum subset-sum algorithms~\cite{DBLP:conf/pqcrypto/BernsteinJLM13,DBLP:conf/tqc/HelmM18} have the same time complexities without Heuristic~\ref{heuristic:helm-may}, as they fall in parameter ranges where the bucket-modulus data structure is enough. However, this is not the case of our new quantum walk. We keep the same set of constraints and optimize with a new update time. Although using the extended $\zomd$ representations brings an improvement, neither do the fifth level, nor the left-right split. This simplifies our constraints. Let $\maxz(\cdot) = \max(\cdot,0)$. The guaranteed update time becomes:
\begin{multline*}
\mathsf{U} = \maxz \bigg( \undertext{Level 2}{\ell_3 - (c_2 - c_3)}, \undertext{Number of elements to update at level 1}{\maxz(\ell_3 - (c_2 - c_3) + \frac{p_2}{2})} + \maxz(\ell_2 - (c_1-c_2)) ,\\
\underbrace{ \frac{1}{2} \left( \maxz \!\left( \ell_3\! - \!(c_2 \!-\! c_3) \!+ \!\frac{p_2}{2}\right) \! + \maxz \!\left(\ell_2 \!- \!(c_1 - c_2) + \frac{p_1}{2}\right)\! + \maxz(\ell_1 - (1-c_1)) \right) }_{\mbox{Final quantum search among all updated elements}} \bigg)
\end{multline*}

We obtain the time exponent $0.2182$ (rounded upwards) with the following parameters (rounded). The memory exponent is $0.2182$ as well.
\begin{gather*}
\ell_0 = -0.2021, \ell_1 = 0.1883, \ell_2 = 0.2102, \ell_3 = 0.2182, \ell_4 = 0.2182  \\
c_3 = 0.2182, c_2 = 0.4283, c_1 = 0.6305, p_0 = -0.2093, p_1 = -0.0298, p_2 = -0.0160 \\
\alpha_1 = 0.0172, \alpha_2 = 0.0145, \alpha_3 = 0.0107, \gamma_1 = 0.0020
\end{gather*}


\section{Conclusion}
\label{section:conclusion}

In this paper, we proposed improved classical and quantum heuristic algorithms for subset-sum, building upon several new ideas. First, we used extended representations ($\zomd$) to improve the current best classical and quantum algorithms. In the quantum setting, we showed how to use a quantum search to speed up the process of \emph{filtering} representations, leading to an overall improvement on existing work. We built an ``asymmetric HGJ'' algorithm that uses a nested quantum search, leading to the first quantum speedup on subset-sum in the model of \emph{classical memory with quantum random access}. By combining all our ideas, we obtained the best quantum walk algorithm for subset-sum in the MNRS framework. Although its complexity still relies on Heuristic~\ref{heuristic:helm-may}, we showed how to partially overcome it and obtained the first quantum walk that requires only the classical subset-sum heuristic, and the best to date for this problem. 

\subsubsection*{Open Questions.}
We leave as open the possibility to use representations with ``-1''s (or even ``2''s) in a quantum asymmetric merging tree, as in Section~\ref{subsection:hgj-quantum-second}. Another question is how to bridge the gap between heuristic and non-heuristic quantum walk complexities. In our work, the use of an improved vertex data structure seems to encounter a limitation, and we may need a more generic result on quantum walks, similar to~\cite{DBLP:journals/mst/Ambainis10}. Finally, it would be of interest to study representations with a larger set of integers.


\XB{Vérifier si ça ne donne pas trop l'impression qu'on a eu la flemme de finir le boulot.}


\subsubsection*{Acknowledgments.}
The authors want to thank André Chailloux, Stacey Jeffery, Antoine Joux, Frédéric Magniez, Alexander May, Amaury Pouly, Nicolas Sendrier for helpful discussions and comments. Thanks to Zhenzhen Bao and the anonymous CRYPTO and ASIACRYPT referees for their detailed comments. This project has received funding from the European Research Council (ERC) under the European Union’s Horizon 2020 research and innovation programme (grant agreement no. 714294 - acronym QUASYModo). Research also supported in part by the ERA-NET Cofund in Quantum Technologies project QuantAlgo and the French ANR Blanc project RDAM.


\bibliography{biblio}
\bibliographystyle{splncs03}

\newpage
\section*{\huge{Appendices}}
\appendix

\section{Bijection between Representations and Integers}
\label{appendix:rpbijection}

It is well known
that there exists a bijection between $\left[0, {n \choose m} \right[$ and $n$-bit vectors of Hamming weight $m$, and this bijection can be computed in polynomial time in $n$~\cite{ch72}. In our case, $m = \beta n$ and such vectors are subknapsacks from $D^n[0, \beta]$. If $i_1, \ldots i_{m}$ are the bit-positions of the $m$ ``1'' in this vector, we map it to the $m$-tuple of integers: $(i_1, \ldots i_{m})$, and define the bijection as:
\begin{align*}
\phi~: (i_1, \ldots i_{m}) \mapsto {i_{m} - 1 \choose {m}} + \ldots + {i_1 - 1 \choose {1}}
\end{align*}
where ${i \choose {j}}$ has supposedly been precomputed for all $i \leq n, j \leq m$. In order to compute the inverse $\phi^{-1}$, we find for each $j \leq t$ the unique integer $i_j$ such that ${i_{j} - 1 \choose {m}} \leq x \leq {i_{j} \choose {m}}$.
We can generalize this to an arbitrary number of nonzero symbols (3 in our paper: ``1'', ``-1'' and ``2''), that we denote $1, 2, \ldots t$.  Let $k_1, \ldots k_t$ be the counts of each symbol in the vector $\vec{v}$. We map it to a tuple of tuples: $ (i^1_1, \ldots i^1_{k_1}), \ldots$, $(i^t_1, \ldots i^t_{k_t}) $
where the first vector represents the positions of the ``1'' among the $n$ bit positions, the second vector represents the positions of the ``2'' \emph{after having removed the ``1''}, and so on. Consequently, we have $0 \leq i^1_{j} \leq n-1$, $0 \leq i^2_{j} \leq n-1 - k_1$, \emph{etc}.

Next, we map each of these $t$ tuples individually to an integer, as was done above: $\phi^j (i^j_1, \ldots i^j_{k_j}) = x^j$ where $0 \leq x^j \leq {n - k_1 \ldots -k_{j-1} \choose k_j }$. Finally, we compute:
\begin{align*}
\phi(\vec{v}) &= x^1 + {n \choose k_1} x^2 + {n \choose k_1, k_2} x^3 + \ldots + {n \choose k_1, \ldots k_t} x^t \\
&= x^1 + {n \choose k_1} \left( x^2 + {n - k_1 \choose k_2} \left( x^3 + \ldots + {n - k_1 \ldots -k_{t-2} \choose k_{t-1} } x^t\right) \ldots \right) \enspace.
\end{align*}

$$  $$
By the bounds on the $x^j$, we remark that $0 \leq \phi(\vec{v}) \leq {n \choose k_1, \ldots k_t } - 1$. Furthermore, we observe that one can easily retrieve the $x^j$ by successive euclidean divisions, and use the bijections $\phi^j$ to finish the computation.


\begin{lemma}\label{lemma:representation-bijection}
Let $n, k_1, \ldots k_t$ be $t +1$ integers such that $k_1 + \ldots + k_t \leq n$. There exists a quantum unitary, realized with $poly(n)$ gates, that on input a number $j$ in $\left[0, {n \choose {k_1, \ldots k_t}} \right[$, writes on its output register the $j$-th vector $\phi^{-1}(j) \in \{0, \ldots, t\}^n$ having, for each $1 \leq i \leq t$, exactly $k_i$ occurrences of the symbol ``$i$''. There exists another unitary which writes, on input $\vec{v}$, the integer value $\phi(\vec{v})$.
\end{lemma}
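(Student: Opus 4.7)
The plan is to reduce the lemma to two claims: first, that $\phi$ and $\phi^{-1}$ admit classical algorithms running in $\operatorname{poly}(n)$ elementary arithmetic operations; second, that any such classical algorithm lifts to a reversible quantum circuit of size $\operatorname{poly}(n)$ by standard means (Bennett's reversibilization, computing into ancillae and uncomputing garbage). Since the construction of $\phi$ is already spelled out explicitly in the preceding paragraphs, the main work is to verify the polynomial-time bound for $\phi^{-1}$ and then invoke the reversibilization.

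For the forward direction $\phi$, I would check step by step: one first splits $\vec{v}$ into the $t$ tuples $(i^j_1,\ldots,i^j_{k_j})$ of positions (reading $\vec{v}$ left to right and maintaining a counter of already-assigned positions, so the positions of ``$2$'' are re-indexed after deleting the ``$1$''s, etc.); then one evaluates each $\phi^j(i^j_1,\ldots,i^j_{k_j}) = \sum_{r=1}^{k_j}\binom{i^j_r-1}{r}$; finally one combines them with the Horner-like formula given in the appendix. All binomials $\binom{i}{k}$ with $i\le n$, $k\le n$ can be precomputed in a table of size $\operatorname{poly}(n)$, and each arithmetic operation is on integers of $\operatorname{poly}(n)$ bits, so the total cost is $\operatorname{poly}(n)$.

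For the inverse, I would recover the $x^j$ from $j$ by successive Euclidean divisions against the precomputed multinomials, which is polynomial. Then for each $x^j$ I invert $\phi^j$ in the standard way for the combinatorial number system: find the largest $i$ with $\binom{i}{k_j}\le x^j$, subtract, and recurse on the lower indices. The largest such $i$ can be located by binary search in the precomputed binomial table, costing $O(\log n)$ table lookups per step and $O(k_j)$ steps per tuple, hence $\operatorname{poly}(n)$ in total. Finally, re-interleaving the $t$ tuples into a single vector in $\{0,\ldots,t\}^n$ is a linear scan that places symbols according to the running counts of previously placed symbols. All integers manipulated have $\operatorname{poly}(n)$ bits, so both $\phi$ and $\phi^{-1}$ are in classical $\operatorname{poly}(n)$.

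The quantum step is then standard: an in-place reversible implementation of $\ket{j}\ket{0}\mapsto\ket{j}\ket{\phi^{-1}(j)}$ (and similarly for $\phi$) is obtained by computing $\phi^{-1}(j)$ into a fresh ancilla register using the classical algorithm above (made reversible by writing every intermediate value), copying the output, and uncomputing the intermediate registers, yielding a $\operatorname{poly}(n)$-size Clifford$+T$ circuit. I do not expect any serious obstacle: all subroutines (binary search in a precomputed table, integer addition, multiplication, Euclidean division) have well-known reversible implementations in $\operatorname{poly}(n)$ gates; the only point worth care is to make sure the precomputed table of binomials is held in a classical register accessed reversibly as a read-only resource, which is acceptable in our gate model. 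The bound $0\le\phi(\vec v)<\binom{n}{k_1,\ldots,k_t}$ already noted in the appendix, together with injectivity of each $\phi^j$, guarantees that $\phi$ is a bijection, so the constructed unitary is indeed well-defined on the stated domain.
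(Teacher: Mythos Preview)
Your proposal is correct and follows essentially the same approach as the paper. The paper does not give a separate proof of this lemma; it simply states the lemma after the preceding paragraphs have explicitly described the bijection $\phi$, noted that the $x^j$ can be recovered by successive Euclidean divisions and the $\phi^j$ inverted by searching in the precomputed binomial table, and observed the bound $0\le\phi(\vec v)\le\binom{n}{k_1,\ldots,k_t}-1$. Your write-up fills in the details the paper leaves implicit (in particular the Bennett reversibilization step for the quantum unitary), but the underlying argument is the same.
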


Using this unitary in combination with a Quantum Fourier Transform, we can, for example, easily produce superpositions of subsets of $D^n[\alpha, \beta, \gamma]$, by taking arbitrary integer intervals.


\section{Filtering Probabilities}
\label{appendix:filter}

We give below the filtering probabilities for representations that use ``-1'' and ``2''. The principle is similar to Lemma~\ref{lemma:hgj-filter}, but the details are more technical. Notice that both of these lemmas assume symmetric input distributions, and thus are less general than Lemma~\ref{lemma:hgj-filter}.

\begin{lemma}[Filtering BCJ-style representations]
\label{lemma:bcj-filter}
Let $\vec{e_1}, \vec{e_2} \in D^n[\alpha, \beta]$ and $\gamma \leq 2 \alpha$. Then the logarithm of the probability that $\vec{e_1} + \vec{e_2} \in D^n[\gamma, 2\beta]$ is:
 \begin{multline*}
 \pfilterbcj{\alpha}{\beta}{\gamma} = \bin{\beta+\alpha}{\alpha-\gamma/2} + \bin{\alpha}{\alpha-\gamma/2} \\+ \trin{1-\beta-2\alpha}{\gamma/2}{\beta+\gamma/2} - \trin{1}{\beta+\alpha}{\alpha} 
 \end{multline*}
 \end{lemma}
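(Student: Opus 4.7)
By the symmetry of the uniform distribution on $D^n[\alpha,\beta]$, fix an arbitrary $\vec{e}_1 \in D^n[\alpha,\beta]$ and compute the probability, over a uniform $\vec{e}_2 \in D^n[\alpha,\beta]$, that the sum lies in $D^n[\gamma, 2\beta]$. I will classify the $n$ positions by their pattern $(\vec{e}_{1,i}, \vec{e}_{2,i})$. Since the target has no ``$\pm 2$'' entry, a valid pair never uses the patterns $(1,1)$ or $(-1,-1)$. Thus, on the $(\alpha+\beta)n$ positions where $\vec{e}_{1,i}=1$, we must have $\vec{e}_{2,i} \in \{-1,0\}$; on the $\alpha n$ positions where $\vec{e}_{1,i}=-1$, $\vec{e}_{2,i}\in\{0,1\}$; and on the $(1-2\alpha-\beta)n$ positions where $\vec{e}_{1,i}=0$, all three values are allowed.

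Writing $k_1$ and $k_2$ for the numbers of cancellations $(1,-1)$ and $(-1,1)$, and $m_1, m_2$ for the numbers of $(0,1)$ and $(0,-1)$ patterns, the distribution constraints on $\vec{e}_2$ (exactly $(\alpha+\beta)n$ ones, $\alpha n$ minus-ones) and on the sum (exactly $(\gamma+2\beta)n$ ones, $\gamma n$ minus-ones) reduce, after elementary bookkeeping, to a single linear constraint $k_1+k_2 = (2\alpha-\gamma)n$, with the number of ``$(0,0)$'' positions automatically forced to be $(1-2\alpha-2\beta-\gamma)n$. The number of valid $\vec{e}_2$ is therefore
\begin{equation*}
\sum_{k_1+k_2=(2\alpha-\gamma)n}\binom{(\alpha+\beta)n}{k_1}\binom{\alpha n}{k_2}\binom{(1-2\alpha-\beta)n}{m_1,m_2,(1-2\alpha-2\beta-\gamma)n} \enspace .
\end{equation*}
The condition $\gamma \leq 2\alpha$ is exactly what makes this sum non-empty (the value $k_1+k_2$ must be non-negative and achievable).

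Asymptotically in $n$, the sum is dominated by its largest term. A short Lagrange computation on the one-parameter entropy function shows that the maximizer is the symmetric choice $k_1=k_2=(\alpha-\gamma/2)n$, which then forces $m_1=(\beta+\gamma/2)n$ and $m_2=(\gamma/2)n$. Plugging these values into the three binomial/multinomial factors and dividing by $|D^n[\alpha,\beta]| \simeq 2^{\trin{1}{\beta+\alpha}{\alpha}\,n}$ yields exactly the claimed expression $\pfilterbcj{\alpha}{\beta}{\gamma}$, after using the symmetries $\bin{x}{y}=\bin{x}{x-y}$ and $\trin{\omega}{a}{b}=\trin{\omega}{b}{a}$.

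The main obstacle is ruling out contributions from non-symmetric values of $k_1$: this is standard, as the log of each term is strictly concave in $k_1$ and polynomially many summands share the same $n$-exponential, giving only a $\poly(n)$ factor that is absorbed by $\widetilde{\mathcal O}$. The remainder is a routine verification that the entropy formula one obtains at the saddle point coincides with the closed form stated in the lemma.
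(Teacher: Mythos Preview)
Your proof is correct and follows essentially the same approach as the paper: fix $\vec{e}_1$, classify positions by the pair $(\vec{e}_{1,i},\vec{e}_{2,i})$, reduce the constraints to a single free parameter (your $k_1$, the paper's $x$), and identify the dominant term at the symmetric point $k_1=k_2=(\alpha-\gamma/2)n$ by a one-variable concavity/derivative argument. The only cosmetic difference is that you set up two cancellation counters $k_1,k_2$ with the constraint $k_1+k_2=(2\alpha-\gamma)n$, whereas the paper parameterizes directly by $x=k_1/n$ and substitutes $k_2=(2\alpha-\gamma)n-k_1$; the resulting counting formulas and the final expression coincide.
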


\begin{proof}
In order to estimate the success probability, we need to estimate the number of well-formed representations, and how they can be decomposed. Given a fixed vector $\vec{e_1} \in D$, we count the number of compatible $\vec{e_2}$ such that $xn$ positions with a -1 from $\vec{e_2}$ are cancelled by a 1 from $\vec{e_1}$. As $\vec{e_1+e_2} \in D^n[\gamma,2\beta]$, there are
${(\beta+\alpha)n \choose xn} {\alpha n \choose (2\alpha - \gamma -x) n)} { (1-\beta-2\alpha)n \choose (\alpha-x)n, (\beta+\gamma-\alpha+x)n }$ such vectors.

Taking the logarithm and the standard approximations, its derivative is $\log\left(\frac{\beta + \alpha -x}x \frac{2\alpha-\gamma -x}{\gamma-\alpha+x} \frac{\alpha-x}{\beta+\gamma-\alpha+x}\right)$. This term is strictly decreasing for $O< x < \gamma-\alpha$, and equals 0 for $x = \alpha-\gamma/2$. Hence, this is the maximum, which correspond to the balanced case. It is equal, up to a polynomial loss, to the total number of compatible vectors. Hence, the log of the number of compatible vectors is

$\bin{\beta+\alpha}{\alpha-\gamma/2} + \bin{\alpha}{\alpha-\gamma/2} + \trin{1-\beta-2\alpha}{\gamma/2}{\beta+\gamma/2}.$
As there are $\trin{1}{\beta+\alpha}{\alpha}$ vectors in $D^n[\alpha,\beta]$, the lemma holds. \qed
\end{proof}

\begin{lemma}[Filtering representations using ``2''s]
\label{lemma:better-filter}
Let $\vec{e_1}, \vec{e_2} \in D^n[\alpha_1, \beta, \gamma_1]$ and $\alpha_0, \gamma_0 \geq 0$. Let us define :
$\left\lbrace\begin{array}{l}
x_{min} = \max(0, \alpha_1+\beta - \frac{1-\alpha_0+\gamma_0}{2}, \gamma_1 - \gamma_0/2)\\
x_{max} = \min(\alpha_1-\alpha_0/2, \alpha_0/2 + \beta_1 - \gamma_0, \gamma_1)
\end{array}\right.$. If $x_{min} \leqslant x_{max}$, then the logarithm of the probability that $\vec{e_1} + \vec{e_2} \in D^n[\alpha_0, 2\beta, \gamma_0]$ is at least:
 \begin{multline*}
 \pfilternous{\alpha_0}{\beta}{\gamma_0}{\alpha_1}{\gamma_1} = \max\limits_{x \in [x_{min}, x_{max}]} \trin{\alpha_1}{x}{\alpha_0/2} + \\ \trin{\alpha_1+\beta-2\gamma_1}{\gamma_0-2\gamma_1+2x}{\beta-\gamma_0-x+\alpha_0/2} + \bin{\gamma_1}{x} + \\ \quadrin{1-\beta-2\alpha_1+\gamma1}{\gamma_1-x}{\alpha_0/2}{\beta-\gamma_0-x+\alpha_0/2} -\\ \quadrin{1}{\alpha_1}{\alpha_1+\beta-2\gamma_1}{\gamma_1}
 \end{multline*}
\end{lemma}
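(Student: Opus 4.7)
The plan is to follow the strategy of Lemma~\ref{lemma:bcj-filter}: fix $\vec{e_1} \in D^n[\alpha_1, \beta, \gamma_1]$ and enumerate the vectors $\vec{e_2}$ with the same distribution such that $\vec{e_1} + \vec{e_2}$ has the target distribution $D^n[\alpha_0, 2\beta, \gamma_0]$. The new difficulty, compared with the BCJ case, is that positions now carry four possible symbols, so pairings $(\vec{e_1}(i),\vec{e_2}(i))$ range over $16$ types. Four of these are forbidden because they would produce symbols outside $\zomd$, namely $(-1,-1)$, $(1,2)$, $(2,1)$ and $(2,2)$; the remaining $12$ pairings are subject to count constraints coming from the distributions of $\vec{e_1}$, of $\vec{e_2}$, and of $\vec{e_1}+\vec{e_2}$.

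My first step is to restrict to the balanced regime $N_{a,b} = N_{b,a}$, where $N_{a,b}$ denotes the number of positions with $\vec{e_1} = a$ and $\vec{e_2} = b$. Since the count of pairs in this regime is already a valid lower bound on the total, this restriction suffices for the ``at least'' in the statement. Under balance, the three distribution constraints leave a single scalar degree of freedom, which I parameterize by $xn = N_{-1,2}= N_{2,-1}$. All other entries are then forced: among the $\alpha_1 n$ ``$-1$''s of $\vec{e_1}$ we get $N_{-1,0}= (\alpha_0/2)n$ and $N_{-1,1} = (\alpha_1 - \alpha_0/2 - x)n$; among the $\gamma_1 n$ ``$2$''s we get $N_{2,0} = (\gamma_1 - x)n$; among the $(\alpha_1+\beta-2\gamma_1)n$ ``$1$''s, balance yields $N_{1,-1} = (\alpha_1 - \alpha_0/2 - x)n$, while the sum-constraint on ``$2$'' outputs ($N_{1,1} + 2N_{2,0} = \gamma_0 n$) gives $N_{1,1} = (\gamma_0-2\gamma_1+2x)n$ and hence $N_{1,0} = (\beta + \alpha_0/2 - \gamma_0 - x)n$; the ``$0$''s of $\vec{e_1}$ split correspondingly.

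Second, for fixed $\vec{e_1}$ and fixed $x$, the number of compatible $\vec{e_2}$ in this balanced slice factors as a product of four multinomial coefficients, one per symbol class of $\vec{e_1}$: a trinomial on the $\alpha_1 n$ ``$-1$''s with parts $(x,\alpha_0/2)$, a trinomial on the $(\alpha_1+\beta-2\gamma_1)n$ ``$1$''s with parts $(\gamma_0-2\gamma_1+2x,\,\beta+\alpha_0/2-\gamma_0-x)$, a binomial on the $\gamma_1 n$ ``$2$''s with parameter $x$, and a quadrinomial on the $(1-2\alpha_1-\beta+\gamma_1)n$ ``$0$''s with parts $(\gamma_1-x,\,\alpha_0/2,\,\beta+\alpha_0/2-\gamma_0-x)$. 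Taking $\log_2/n$ and applying the standard Stirling approximations gives exactly the four positive terms of $\pfilternous{\alpha_0}{\beta}{\gamma_0}{\alpha_1}{\gamma_1}$; dividing by $|D^n[\alpha_1,\beta,\gamma_1]|$, whose log is $\quadrin{1}{\alpha_1}{\alpha_1+\beta-2\gamma_1}{\gamma_1}$, supplies the negative normalizing term.

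Finally, the range $[x_{min},x_{max}]$ is determined by requiring each argument of the four multinomials to be nonnegative: the three upper bounds $x \leq \alpha_1 - \alpha_0/2$, $x \leq \alpha_0/2 + \beta - \gamma_0$, and $x \leq \gamma_1$ come from the ``$-1$'', ``$1$'' and ``$2$'' blocks respectively, while $x \geq \gamma_1 - \gamma_0/2$ ensures $N_{1,1} \geq 0$, and $x \geq \alpha_1 + \beta - (1-\alpha_0+\gamma_0)/2$ ensures that the three explicit parts of the ``$0$'' quadrinomial fit inside $(1-2\alpha_1-\beta+\gamma_1)n$ positions. Since the total number of valid pairs dominates the contribution of the balanced slice at any single $x$, maximizing over $x \in [x_{min},x_{max}]$ yields the claimed lower bound. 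The main obstacle is purely bookkeeping: enumerating the $12$ admissible pairings and checking that imposing balance collapses the three-dimensional solution space to a single scalar parameter; once this is done, the counting is a routine Stirling computation in the spirit of Lemmas~\ref{lemma:hgj-filter} and~\ref{lemma:bcj-filter}.
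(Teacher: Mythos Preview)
Your proposal is correct and follows essentially the same approach as the paper: restrict to the symmetric (balanced) pairings $N_{a,b}=N_{b,a}$, parameterize the remaining freedom by $x=N_{-1,2}$, factor the count of compatible $\vec{e_2}$ as a product of four multinomials over the symbol classes of $\vec{e_1}$, and divide by $|D^n[\alpha_1,\beta,\gamma_1]|$. Your derivation of the counts and of the admissible range $[x_{\min},x_{\max}]$ matches the paper's, and your observation that the balanced slice already yields a valid lower bound is exactly the justification the paper gives for the ``at least''.
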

\begin{proof}
To avoid the explosion of the number of variables, we restrain ourselves to the \textit{symmetric} cases (for which there are as much $1$s given by $0+1$ as $1$s given by $1+0$, etc.) Given some $\vec{e_1} \in D^n[\alpha_1, \beta, \gamma_1]$, we compute the number of compatible $\vec{e_2} \in D^n[\alpha_1, \beta, \gamma_1]$. These vectors can be sorted according to the number $x n$ of positions where a $-1$ from $\vec{e_1}$ cancels out a $2$ from $\vec{e_2}$. For $\vec{e_1} + \vec{e_2}$ to be in $D^n[\alpha_0, 2\beta, \gamma_0]$, we must have :
\begin{itemize}
\item $(-1) + (0) | (0) + (-1)$ : $\alpha_0 n/2$ times
\item $(-1) + (1) | (1) + (-1)$ : $(\alpha_1 - x - \alpha_0/2)n$ times
\item $(-1) + (2) | (2) + (-1)$ : $x n$ times
\item $(1) + (0) | (0) + (1)$ : $(\alpha_0/2 + \beta - \gamma_0 - x)n$ times
\item $(2) + (0) | (0) + (2)$ : $(\gamma_1 - x)n$ times
\item $(1) + (1)$ : $(\gamma_0 - 2\gamma_1 + 2 x)n$ times
\item $(0) + (0)$ : $(1 - \alpha_0 + \gamma_0 - 2\alpha_1 - 2\beta + 2x)n$ times (\emph{i.e.} the remaining)
\end{itemize}
Thus, in $\vec{e_1}$ (which contains $\alpha_1$ ``$-1$''s), $\alpha_0 n/2$ of the ``$-1$''s must match a ``$0$'', $(\alpha_1 - x - \alpha_0/2)n$ of the ``$-1$''s must match a ``$1$'' and the remaining $nx$ must match a ``$2$''. Therefore, there are ${\alpha_1 n \choose \alpha_0 n/2, x n}$ possible choices for the coordinates of $\vec{e_2}$ matching the ``$-1$''s of $\vec{e_1}$. Similarly, there are ${(\alpha_1+\beta-2\gamma_1) n \choose (\gamma_0-2\gamma_1+2x) n, (\beta-\gamma_0-x+\alpha_0/2) n}$ possibilities for the coordinates of $\vec{e_2}$ matching the ``$1$''s, ${\gamma_1 n \choose x n}$ for the coordinates matching the ``$2$''s, and ${(1-\beta-2\alpha_1+\gamma1) n \choose (\gamma_1-x) n, \alpha_0 n/2, (\beta-\gamma_0-x+\alpha_0/2)n}$ for the coordinates matching the ``$0$''s.

The total number of possibilities is :
\begin{multline*}
\sum\limits_x {\alpha_1 n \choose \alpha_0 n/2, x n} {(\alpha_1+\beta-2\gamma_1) n \choose (\gamma_0-2\gamma_1+2x) n, (\beta-\gamma_0-x+\alpha_0/2) n} \\ {\gamma_1 n \choose x n} {(1-\beta-2\alpha_1+\gamma1) n \choose (\gamma_1-x) n, \alpha_0 n/2, (\beta-\gamma_0-x+\alpha_0/2)n}
\end{multline*}

This quantity is defined only for $x_{min} \leqslant x \leqslant x_{max}$. If $x$ is outside of these bounds, one of these multinomial (at least) is zero and there is no compatible $\vec{e_2}$. As $x n$ must be an integer, there are only a linear number of possible choices for $x$. Therefore the number of all possible $\vec{e_2}$ is given, up to a polynomial factor, by the number of $\vec{e_2}$s for the best $x$.

In order to obtain a probability, we divide the number of compatible $\vec{e_2}$ by $n \choose \alpha_1 n, (\alpha_1 + \beta - 2\gamma_1)n, \gamma_1 n)$, which is the size of $D^n[\alpha_1, \beta, \gamma_1]$. We observe here that the logarithm of the probability that $\vec{e_2}$ is compatible with $\vec{e_1}$ is exactly $\pfilternous{\alpha_0}{\beta}{\gamma_0}{\alpha_1}{\gamma_1}$.

We only considered the symmetric cases, assuming that we can neglect the contribution of the asymmetric cases. If we cannot, it means that we underestimated the probability for $\vec{e_1}$ and $\vec{e_2}$ to be compatible, and we could improve further the parameters by taking into account the asymmetric cases as well. \qed
\end{proof}

\section{Estimating a Number of Solutions Reversibly}\label{appendix:nbsols}

We give a reversible procedure that, given a search space $X$ with good elements $G \subseteq X$, finds whether $|G| >  B$ using $\bigOt{B}$ independent Grover searches in $X$. This procedure uses a coupon collector instead of quantum counting~\cite{DBLP:conf/icalp/BrassardHT98}, because $B$ is considered to be a constant, and we are interested in a good success probability rather than a quadratic speedup.

\begin{lemma}
Let $X$ be a search space of exact size $2^{\alpha n}$ and $G \subseteq X$ be a ``good'' subspace of exact size unknown. There exists a quantum algorithm that given superposition access to $X$, finds whether $|G| \geq B$ or not in time $\bigOt{B \sqrt{X}}$ and with a negligible probability of error.
\end{lemma}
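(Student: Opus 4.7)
\medskip

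The plan is to use the BBHT variant of Grover search as a subroutine and combine it with a coupon-collector style argument to distinguish $|G|\ge B$ from $|G|<B$. Recall that BBHT, given only an oracle for membership in $G$, returns an (approximately) uniformly random element of $G$ in expected time $\bigO{\sqrt{|X|/|G|}}$, and can be made to terminate with reported ``failure'' in worst-case time $\bigOt{\sqrt{|X|}}$ even when $|G|=0$. By capping each BBHT invocation at a sufficiently large polynomial multiple of $\sqrt{|X|}$ iterations, each call runs in worst-case time $\bigOt{\sqrt{|X|}}$ and introduces only an exponentially small, data-independent error.

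The algorithm is then: run $K=\Theta(Bn)$ independent capped BBHT invocations, store each returned element in a list $F$ using the skip-list data structure of Section~\ref{subsection:new-data-struct} (discarding any element that fails the final test $x\in G$), and at the end output ``yes'' iff $F$ contains at least $B$ distinct elements. Correctness is immediate when $|G|<B$, since $F\subseteq G$ can contain at most $|G|<B$ distinct elements. When $|G|\ge B$, each BBHT call returns (up to negligible error) a uniform sample from $G$, and we need to show that $K$ such samples contain $B$ distinct values except with probability $e^{-\Omega(n)}$. The worst case is $|G|=B$, for which the classical coupon-collector tail bound $\Pr[T > B H_B + cB]\le e^{-c}$ gives the required negligible failure probability as soon as $K\ge B\log_2 B + \Theta(Bn)$; for $|G|>B$ the number of draws needed is stochastically smaller.

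The total runtime is $K\cdot \bigOt{\sqrt{|X|}} = \bigOt{B\sqrt{|X|}}$, as claimed. The whole procedure is a composition of unitaries (BBHT, membership test, insertion into the skip list, and a comparison of the distinct-count to $B$), so it is reversible; any randomness internal to BBHT is written into an auxiliary register that is kept rather than uncomputed, which preserves unitarity on the enlarged space.

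The main obstacle is not the combinatorics but the bookkeeping: one must verify that all sources of error — the capped BBHT, the mild non-uniformity of its output, and the coupon-collector tail — are simultaneously $e^{-\Omega(n)}$ and independent of the data on which the procedure is called, so that they can be absorbed as an $o(2^{-n})$ data-independent perturbation and composed with the hybrid-argument machinery used in Section~\ref{subsection:new-data-struct}. Once each contribution is shown to be exponentially small with an appropriate $\poly(n)$ blowup in the per-call cost (hidden in the $\widetilde{\mathcal{O}}$), the result follows.
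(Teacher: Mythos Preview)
Your proposal is correct and takes essentially the same approach as the paper: run $\Theta(Bn)$ independent Grover-type searches, collect the outputs coherently, answer ``yes'' iff at least $B$ distinct solutions appear, and bound the failure probability in the case $|G|\ge B$ via a coupon-collector tail. The only difference is packaging: you invoke capped BBHT as a black box, whereas the paper unpacks it by explicitly running searches at $c=\bigO{\log\sqrt{|X|}}$ exponentially spaced iteration counts (one of which is guaranteed to be within a constant factor of the ideal) with $c'B$ copies each, and is slightly more explicit about the compute--copy--uncompute step that yields a clean output qubit with $o(2^{-n})$ data-independent error.
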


\begin{proof}
Although $|G|$ is not known, we use the idea (see \emph{e.g.}~\cite{brassard2002quantum}) that  we can perform quantum searches with an approximate number of iterations and still obtain solutions with a good probability.


More precisely, there exists a number $t$, depending on $|G|$ (unknown) and $|X|$ (known) such that after $t$ iterations, the state will be exactly the uniform superposition of $G$. This ideal $t$ is not an integer; it is between $1$ and $\lceil\frac{\pi}{4} \sqrt{|X|} \rceil$ (we assume that there is at least a solution, otherwise we will also detect this). Since we don't know $t$, we will instead approximate it by a $t'$ such that $\frac{t}{2} \leq t' \leq \frac{3t}{2}$. If we perform $t'$ iterations with such a good $t'$, we will fall on a state with a constant global amplitude $b$ (roughly $\frac{1}{\sqrt{2}}$) on elements of $G$. Thus, we perform many different searches with different iteration numbers, ranging from $0, 1$ to $\lceil\frac{\pi}{4} \sqrt{|X|} \rceil$, and increasing exponentially. This ensures us that regardless of the value of $t$, one of these numbers will be an approximation sufficient for us.

Since we want the algorithm to work with a global error negligible and independent of $|G|$, we set $c = \bigO{\ln \sqrt{|X|}  }$ the number of different searches and perform $c' B$ copies of each. Thus, we have a total of $c c' B$ independent states $\ket{\psi_i}$ for $1 \leq i \leq c$. One of these packets approximates the good $t$ at best, but we don't know which one. We see the state $\bigotimes_i (\ket{\psi_i}^{\otimes cB})$ as a superposition over tuples of $X^{c c' B}$:
$$ \sum_{x_1, \ldots x_{c c'B}} \alpha_{x_1, \ldots x_{c c'B}}\ket{x_1} \ldots \ket{x_{c c'B}} \enspace. $$

%

We check whether the tuple $(x_1, \ldots x_{c c' B})$ contains $\geq B$ distinct solutions and we put the result in a qubit:
$$ \sum_{x_1, \ldots x_{ c c'B}} \alpha_{x_1, \ldots x_{c c'B}}\ket{x_1} \ldots \ket{x_{c c'B}} \ket{ (x_1, \ldots x_{c c' B} \mbox{ contains $\geq B$ distinct solutions} )} \enspace. $$

%
%
%
%

If $|G| < B$, then this qubit is always $0$: we can immediately uncompute the quantum searches and we have obtained the result. If $|G| \geq B$, then some of these tuples contain $B$ distinct solutions, but not all. We must ensure that their proportion is overwhelming, so that after uncomputing, the algorithm actually adds an error vector of negligible amplitude. To do that, we will only focus on the block of $c' B$ states that corresponds to the good $t$, since for them, we have a lower bound on the probability of finding a solution. The other states, that we dismiss, can only improve our success in finding $B$ distinct solutions. So we now focus on $c'B$-tuples only.

Let us consider $|G| = B$ which is the worst case.  First, we will look at the proportion of $c'B$-tuples that contain $(1- c'')c' b B$ solutions: this is the probability to succeed at least $(1- c'')c' b B$ times after $c' B$ independent trials of probability $b$ each, and it is higher than $1 - \exp(-c'' c' B b / 2)$ by a Chernoff bound. Next, assuming that there are $(1- c'')c' b B$ independent solutions, we check the probability that they span all the $B$ distinct solutions that there are in total. This is related to the coupon collector problem. The probability to miss at least a coupon among $B$ after $c^{(3)} B \ln B$ trials is lower than $B^{-c^{(3)} + 1}$. Thus, we may take $c^{(3)} = \bigO{n}$, $c'' = \frac{1}{2}$ and $c' = \bigO{n}$ for a total probability of failure in $o(2^{-n})$. \qed
\end{proof}

\section{Computing the Fraction of Marked Vertices}
\label{appendix:probas}

This section contains a proof for Lemma~\ref{lemma:prob} that was omitted from the main body of the paper. Recall that we defined:
$$X_\vec{e}(\vec{a})= \begin{cases}
  1 & \text{ if } \vec{e}\dotprod \vec{a} = \vec{e}_0 \dotprod \vec{a} \pmod M\\
  0 & \text{ otherwise}
\end{cases}$$

We will first prove a result on the average number of vectors having the same modulus as $\vec{e}_0$, then we will use this in a Chernoff bound. Define
\[
    Y(\mathcal{B},\vec{e}_0;\vec{a})=
        \card{\set{\vec{e}\in\mathcal{B}, \vec{a}\dotprod\vec{e}=\vec{a}\dotprod\vec{e}_0\pmod{M}}}.
\]
where $M$ divides $N \simeq 2^n$. For simplicity, we write $Y(\vec{a})$ for $Y(\mathcal{B},\vec{e}_0;\vec{a})$ in the following. We are interested in $Y(\vec{a})$ as a random variable when $\vec{a}$ is drawn uniformly from $\mathbb{Z}_{N}^n$. 

%

\begin{lemma}\label{lemma:yabound}
If $|\mathcal{B}|\gg M$, then with probability $1-\negl(n)$, 
\begin{equation}
\label{eq:1}
Y(\vec{a})\leq 2\Exp{\vec{a}}{Y(\vec{a})}\sim 2\cdot \frac{|\mathcal{B}|}{M}
\end{equation}
\end{lemma}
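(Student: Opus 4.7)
The plan is to establish the bound via a second-moment argument (Chebyshev's inequality), exploiting the fact that the indicator variables $X_\vec{e}(\vec{a})$ are pairwise (nearly) independent when $\vec{a}$ is drawn uniformly from $\mathbb{Z}_N^n$, and that $\mathbb{E}[Y(\vec{a})]$ is exponentially large under the assumption $|\mathcal{B}|\gg M$.

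First, I would compute $\mathbb{E}_{\vec{a}}[Y(\vec{a})]$ by linearity of expectation. For each fixed $\vec{e}\in\mathcal{B}$ with $\vec{e}\neq\vec{e}_0$, the event $X_\vec{e}(\vec{a})=1$ is equivalent to $\vec{a}\dotprod(\vec{e}-\vec{e}_0)\equiv 0\pmod{M}$, a single linear constraint on $\vec{a}$. Since $M\mid N$ and $\vec{a}$ is uniform on $\mathbb{Z}_N^n$, for all but a negligible fraction of ``degenerate'' difference vectors $\vec{e}-\vec{e}_0$ (those whose coordinates share a common factor with $M$), this probability is exactly $1/M$; the degenerate contribution is negligible relative to $|\mathcal{B}|/M$. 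Summing over $\vec{e}\in\mathcal{B}$ yields $\mathbb{E}_{\vec{a}}[Y(\vec{a})]\sim |\mathcal{B}|/M$, which is the quantity appearing on the right-hand side of (\ref{eq:1}).

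Next, I would bound $\operatorname{Var}_{\vec{a}}[Y(\vec{a})]=\sum_{\vec{e}_1,\vec{e}_2\in\mathcal{B}}\bigl(\mathbb{E}[X_{\vec{e}_1}X_{\vec{e}_2}]-\mathbb{E}[X_{\vec{e}_1}]\mathbb{E}[X_{\vec{e}_2}]\bigr)$. Whenever $(\vec{e}_1-\vec{e}_0)$ and $(\vec{e}_2-\vec{e}_0)$ are $\mathbb{Z}_M$-linearly independent, the joint distribution of the two linear forms evaluated at $\vec{a}$ is uniform on $\mathbb{Z}_M^2$, so the corresponding indicator variables are exactly independent and contribute $0$ to the sum. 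The pairs where the two differences are dependent modulo $M$ form a subset of $\mathcal{B}\times\mathcal{B}$ of size at most $|\mathcal{B}|\cdot M^{o(1)}$, and each such pair contributes at most $1/M$ to the sum. Hence $\operatorname{Var}_{\vec{a}}[Y(\vec{a})]\lesssim |\mathcal{B}|/M=\mathbb{E}_{\vec{a}}[Y(\vec{a})]$, up to a polynomial factor.

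Finally, I would apply Chebyshev's inequality:
\[
    \Pr_{\vec{a}}\bigl[Y(\vec{a})>2\mathbb{E}[Y(\vec{a})]\bigr]\leq \frac{\operatorname{Var}[Y]}{\mathbb{E}[Y]^2}\lesssim \frac{M}{|\mathcal{B}|}\,.
\]
Under the hypothesis $|\mathcal{B}|\gg M$, understood exponentially in $n$ (both quantities are exponential in $n$ throughout the paper, and the gap between them is exponential as well), this probability is $\negl(n)$, establishing (\ref{eq:1}). The main obstacle is the careful handling of the ``degenerate'' differences $\vec{e}-\vec{e}_0$ or dependent pairs $(\vec{e}_1-\vec{e}_0,\vec{e}_2-\vec{e}_0)$ modulo $M$; these require a short combinatorial argument showing that such vectors are exponentially rare inside $\mathcal{B}$ (this is where genericity of the random instance $\vec{a}$ together with the structure of $D^n[\alpha,\beta,\gamma]$ is used), so that neither the mean nor the variance estimate is affected to leading order. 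If a stronger bound than $\negl(n)$ were needed, one could upgrade Chebyshev to a Chernoff bound by noting that $Y(\vec{a})$ is a sum of $\{0,1\}$-valued variables that are pairwise independent up to a negligible set, and invoke standard concentration for low-correlation sums; but Chebyshev already suffices for the statement as written.
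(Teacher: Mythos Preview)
Your approach is correct and essentially the same as the paper's: both compute $\mathbb{E}_{\vec{a}}[Y]$ and $\operatorname{Var}_{\vec{a}}[Y]$ and conclude via Chebyshev. The paper carries out both moment computations through explicit character sums, writing $X_\vec{e}(\vec{a})=\tfrac{1}{M}\sum_{\lambda=0}^{M-1}\mathcal{E}(\lambda\,\vec{a}\cdot(\vec{e}-\vec{e}_0))$; this gives $\mathbb{E}[X_\vec{e}]=1/M$ directly for every $\vec{e}\neq\vec{e}_0$ (so your concern about ``degenerate'' differences in the mean is unnecessary), and reduces the covariance bound to the combinatorial count $\bigl|\{(\lambda,\mu,\vec{f}):\lambda\vec{e}+\mu\vec{f}=(\lambda+\mu)\vec{e}_0\pmod M\}\bigr|\leq 3M$, where the bounded-entry structure $\mathcal{B}\subseteq\{-1,0,1,2\}^n$ is used just as you anticipate, yielding $\operatorname{Var}[Y]\leq 3(|\mathcal{B}|-1)/M$.
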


\begin{proof}

Following \cite{NSS01}, for any $z\in\mathbb{C}$, define $\mathcal{E}(z)=\exp(2\pi iz/M)$. It satisfies the identity
\begin{equation}
\label{eq:exp}
\forall k\in \mathbb{N},
\sum_{\lambda=0}^{kM-1}\mathcal{E}(\lambda u)=
        \begin{cases}0&\text{if }u\neq 0\pmod{M}\\kM&\text{if }u=0\pmod{M}\end{cases}
\end{equation}
for any $u\in\mathbb{Z}$. We have
\[
	Y(\vec{a})=\sum_{\vec{e}\in\mathcal{B}}X_{\vec{e}}(\vec{a})
	\qquad\text{and}\qquad
    X_{\vec{e}}(\vec{a})=\frac{1}{M}\sum_{\lambda=0}^{M-1}
        \mathcal{E}(\lambda\vec{a}\dotprod(\vec{e}-\vec{e}_0)).
\]
Therefore for any $\vec{e}\neq\vec{e}_0$,
\begin{align*}
	\Exp{\vec{a}}{X_{\vec{e}}(\vec{a})}
	&=\frac{1}{N^{n}}\sum_{\vec{a}\in \mathbb{Z}_{N}^n} \frac{1}{M}
		\sum_{\lambda=0}^{M-1}
		\mathcal{E}(\lambda\vec{a}\dotprod(\vec{e}-\vec{e}_0))\\
	&=\frac{1}{N^{n}}\sum_{\vec{a}\in \mathbb{Z}_{N}^n} \frac{1}{M}
	+\frac{1}{N^{n}}\sum_{\vec{a}\in \mathbb{Z}_{N}^n}\frac{1}{M}
		\sum_{\lambda=1}^{M-1}
		\mathcal{E}(\lambda\vec{a}\dotprod(\vec{e}-\vec{e}_0))\\
	&=\frac{1}{M}+\frac{1}{MN^{n}}\sum_{\lambda=1}^{M-1}
		\sum_{\vec{a}\in \mathbb{Z}_{N}^n}
	 	\mathcal{E}(\lambda\vec{a}\dotprod(\vec{e}-\vec{e}_0))\\
	&=\frac{1}{M}+\frac{1}{MN^{n}}\sum_{\lambda=1}^{M-1}
			\prod_{i=1}^n\sum_{a\in \mathbb{Z}_{N}}
			\mathcal{E}(\lambda a(\vec{e}^i-\vec{e}_0^i)))\\
&=\frac{1}{M}&&\text{(see below)}
\end{align*}
where in the last step, we used that if $\vec{e}\neq\vec{e}_0$,
there exists $i\in [1,n]$ such that $\vec{e}^i\neq\vec{e}_0^i$,
where $\vec{e}^i$ is the $i$th component of $\vec{e}$ and
hence of the $i^{th}$ sum in the product is zero by \eqref{eq:exp}
since $\lambda\neq 0\pmod M$. It follows by linearity that
\[
	\Exp{\vec{a}}{Y(\vec{a})}
		=\Exp{\vec{a}}{X_{\vec{e}}(\vec{a})}
			+\sum_{\vec{e}\in\mathcal{B}\setminus\{\vec{e}_0\}}
				\Exp{\vec{a}}{X_{\vec{e}}(\vec{a})}
		=1+\frac{|\mathcal{B}|-1}{M}
\]
since $\vec{e}_0\in\mathcal{B}$.
Similarly for any
$\vec{e},\vec{f}\in\mathcal{B}\setminus\{\vec{e}_0\}$,
\begin{align*}
	\Exp{\vec{a}}{X_{\vec{e}}(\vec{a})X_{\vec{f}}(\vec{a})}
	&=\frac{1}{N^{n}}\sum_{\vec{a}\in \mathbb{Z}_{N}^n}
		\left(\frac{1}{M}\sum_{\lambda=0}^{M-1}
			\mathcal{E}(\lambda\vec{a}\dotprod(\vec{e}-\vec{e}_0)
		)\right)
		\left(\frac{1}{M}\sum_{\mu=0}^{M-1}
			\mathcal{E}(\mu\vec{a}\dotprod(\vec{f}-\vec{e}_0)
		)\right)\\
	&=\frac{1}{M^2N^{n}}\sum_{\vec{a}\in \mathbb{Z}_{N}^n}
		\sum_{\lambda=0}^{M-1}\sum_{\mu=0}^{M-1}
			\mathcal{E}(\lambda\vec{a}\dotprod(\vec{e}-\vec{e}_0))
			\mathcal{E}(\mu\vec{a}\dotprod(\vec{f}-\vec{e}_0))\\
	&=\frac{1}{M^2N^{n}}\sum_{\vec{a}\in \mathbb{Z}_{N}^n}
		\sum_{\lambda=0}^{M-1}\sum_{\mu=0}^{M-1}
		\mathcal{E}(\vec{a}\dotprod(\lambda\vec{e}+\mu\vec{f}-(\lambda+\mu)\vec{e}_0))\\
	&=\frac{1}{M^2N^{n}}
		\sum_{\lambda=0}^{M-1}\sum_{\mu=0}^{M-1}
		\sum_{\vec{a}\in \mathbb{Z}_{N}^n}
		\mathcal{E}(\vec{a}\dotprod(\lambda\vec{e}+\mu\vec{f}-(\lambda+\mu)\vec{e}_0))\\
	&=\frac{1}{M^2}
		\sum_{\lambda=0}^{M-1}\sum_{\mu=0}^{M-1}
		\mathds{1}\big[
			\lambda\vec{e}+\mu\vec{f}=(\lambda+\mu)\vec{e}_0\mod{M}
			\big]
\end{align*}
by \eqref{eq:exp}.
If $\lambda=0$ then the equation $\lambda\vec{e}+\mu\vec{f}=(\lambda+\mu)\vec{e}_0\mod{M}$
becomes $\mu\vec{f}=\mu\vec{e}_0$ but since $\vec{f}\neq\vec{e}_0$, the only solution is $\mu=0$.
A symmetric reasoning shows that if $\mu=0$ then $\lambda=0$
is the only solution. Hence, given $\vec{e}\neq\vec{e}_0$, we have
\[
	\sum_{\vec{f}\in\mathcal{B}\setminus\{\vec{e}_0\}}
		\Exp{\vec{a}}{X_{\vec{e}}(\vec{a})X_{\vec{f}}(\vec{a})}
		=\frac{|\mathcal{B}|-1+|F_{\vec{e}}|}{M^2}
\]
where
\[
	F_{\vec{e}}=\left\{(\lambda,\mu,\vec{f})\in
		\{1,\dots,M-1\}^2\times\mathcal{B}\setminus\{\vec{e}_0\}:
		\lambda\vec{e}+\mu\vec{f}=(\lambda+\mu)\vec{e}_0\mod{M
		}\right\}.
\]
We now claim that this set is not too large.
Assume that $(\lambda,\mu,\vec{f})\in F_{\vec{e}}$, recall
that $\lambda,\mu\neq0$ and
since $\vec{e}\neq\vec{e}_0$ then
there exists $i$ such that $\vec{e}^i\neq\vec{e}_0^i$ so in 
particular
$\lambda(\vec{e}^i-\vec{e}_0^i)=\mu(\vec{f}^i-\vec{e}_0^i)$.
But recall that
$\vec{e},\vec{f},\vec{e}_0\in\mathcal{B}\subseteq\{-1,0,1,2\}^n$
hence $\vec{e}^i-\vec{e}_0^i\in\{-3,-2,-1,1,2,3\}$.
It follows that if we fix $\mu$ then there are at most $3$ possible
values\footnote{If we have, say, $3\lambda=x\pmod{M}$ then
$\lambda=x/3\pmod{M/3}$, which is only possible if $x$ and $M$
are divisible by $3$, and then $\lambda\in\{x/3,(x+M)/3,(x+2M)/3\}$.}
for $\lambda$. We note in passing that the constant $3$ is not
magical: if we had $\mathcal{B}\subseteq\{-a,\ldots,a\}^n$ then it be
bounded by $2a$. Now assume that
$(\lambda,\mu,\vec{f}),(\lambda,\mu,\vec{g})\in F_{\vec{e}}$
with $\vec{f}\neq\vec{g}$, then we must have
\begin{align*}
	\mu(\vec{f}-\vec{g})=0\mod M
	&\qquad\Rightarrow\qquad \vec{f}-\vec{g}=0\mod M/\mu\\
	&\qquad\Rightarrow\qquad\exists k\neq 0.
		\forall i, \vec{g}^i=\vec{f}^i+kM/\mu
\end{align*}
which is only possible if $\mu$ divides $M$. In particular, we must have
$M/\mu\geqslant2$, in other words all coordinates of $\vec{f}$
and $\vec{g}$ are at least at distance $2$. This is clearly impossible because $\mathcal{B}\subseteq D^n[\alpha,\beta,\gamma]$:
the distribution of ``-1'', ``0'', ``1'', ``2'' in one of $\vec{f}$ or
$\vec{g}$ would be wrong. In summary, we have that:
\begin{itemize}
	\item for every $\mu$, $\vec{f}$, there are at most
		3 possible values of $\lambda$ such that
		$(\lambda,\mu,\vec{f})\in F_{\vec{e}}$,
	\item for every $\lambda$ and $\mu$, there is at most one value
		of $\vec{f}$ such that $(\lambda,\mu,\vec{f})\in F_{\vec{e}}$.
\end{itemize}
It follows that $F_{\vec{e}}$ has size at most $3M$.
Then by linearity,
\begin{align*}
	\Exp{\vec{a}}{Y(\vec{a})^2}
		&=\sum_{\vec{e},\vec{f}\in\mathcal{B}}\Exp{\vec{a}}{
			X_{\vec{e}}(\vec{a})X_{\vec{f}}(\vec{a})}\\
		&=\sum_{\vec{f}\in\mathcal{B}}
			\Exp{\vec{a}}{X_{\vec{f}}(\vec{a})}
			+\sum_{\vec{e}\in\mathcal{B}\setminus\{\vec{e}_0\}}
			\Exp{\vec{a}}{X_{\vec{e}}(\vec{a})}
			+\sum_{\vec{e},\vec{f}\in\setminus\{\vec{e}_0\}}\Exp{\vec{a}}{
			X_{\vec{e}}(\vec{a})X_{\vec{f}}(\vec{a})}\\
		&\leqslant2\Exp{\vec{a}}{Y(\vec{a})}-1
			+(|\mathcal{B}|-1)\frac{|\mathcal{B}|-1+3M}{M^2}\\
		&\leqslant1+2\frac{|\mathcal{B}|-1}{M}
			+(|\mathcal{B}|-1)\frac{|\mathcal{B}|-1+3M}{M^2}\\
		&\leqslant
			\frac{M^2+(|\mathcal{B}|-1)(|\mathcal{B}|-1+5M)}{M^2}.
\end{align*}
Finally, we have
\begin{align*}
        \mathbb{V}_{\vec{a}}(Y(\vec{a}))
            &=\Exp{\vec{a}}{Y(\vec{a})^2}-\Exp{\vec{a}}{Y(\vec{a})}^2\\
            &\leqslant\frac{M^2+(|\mathcal{B}|-1)(|\mathcal{B}|-1+5M)
            	-(|\mathcal{B}|+M-1)^2}{M^2}\\
		&=\frac{3(|\mathcal{B}|-1)M}{M^2}\\
		&=\frac{3(|\mathcal{B}|-1)}{M}.
\end{align*}
Thus $\Exp{\vec{a}}{Y(\vec{a})}\approx \mathbb{V}_{\vec{a}}(Y(\vec{a}))$ when we look at their order of magnitude.

According to Tchebychev's inequality, $$\prob[\vec{a}]{|Y(\vec{a})-\Exp{\vec{a}}{Y(\vec{a})}|>\Exp{\vec{a}}{Y(\vec{a})}}\leq \frac{\mathbb{V}_{\vec{a}}(Y(\vec{a}))}{\Exp{\vec{a}}{Y(\vec{a})}^2} =\negl(n)$$
which completes the proof. \qed
\end{proof}

\setcounter{lemma}{6}
\begin{lemma}
If $|\mathcal{B}|\gg M \simeq |L|$, then for a $1-\negl(n)$ proportion of $\vec{a} \in \mathbb{Z}_{N}^n$, and with an appropriate $B = \bigO{n}$:
\begin{equation}
\prob[\vec{e}_1, \cdots, \vec{e}_{|L|} \sim Unif(\mathcal{B})]{\sum_{i=1}^{|L|} X_{\vec{e}_i} (\vec{a})< B -1} > 1 - \frac{1}{\poly(n)}
\end{equation}
\end{lemma}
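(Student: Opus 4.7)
\textbf{Proof proposal for Lemma~\ref{lemma:prob}.} The plan is to reduce the statement to a standard large-deviation bound (Chernoff) on a sum of i.i.d.\ Bernoulli random variables, using Lemma~\ref{lemma:yabound} as the input that tells us the success probability of each Bernoulli.

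First, fix $\vec{a}$ in the $1-\negl(n)$ proportion of $\mathbb{Z}_N^n$ on which Lemma~\ref{lemma:yabound} holds, i.e.\ $Y(\vec{a}) \leq 2\cdot|\mathcal{B}|/M$ up to the $+1$ term absorbed in the constant. For such an $\vec{a}$, and for $\vec{e}$ drawn uniformly from $\mathcal{B}$, we have
\[
    p := \Pr_{\vec{e}\sim \mathrm{Unif}(\mathcal{B})}[X_{\vec{e}}(\vec{a})=1] \;=\; \frac{Y(\vec{a})}{|\mathcal{B}|} \;\leq\; \frac{2}{M}.
\]
Since $\vec{e}_1,\dots,\vec{e}_{|L|}$ are drawn independently and uniformly from $\mathcal{B}$, the indicators $X_{\vec{e}_i}(\vec{a})$ are i.i.d.\ Bernoulli of parameter $p$, and their sum $Z := \sum_{i=1}^{|L|} X_{\vec{e}_i}(\vec{a})$ has expectation $\mu = |L|\cdot p \leq 2|L|/M$. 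Using the hypothesis $M \simeq |L|$, this expectation is a constant (at most $2$ up to $1+o(1)$).

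Next, apply a Chernoff bound in the large-deviation regime. For sums of independent Bernoullis and any $t \geq \mu$, one has
\[
    \Pr[Z \geq t] \;\leq\; \left(\frac{e\mu}{t}\right)^{t}.
\]
Setting $t = B-1$, and taking $B = c\log n$ for a sufficiently large constant $c$ (which is indeed $\bigO{n}$), the right-hand side is at most $(2e/(B-1))^{B-1} = n^{-\Omega(c \log\log n)}$, hence smaller than any fixed inverse polynomial in $n$. This gives
\[
    \Pr_{\vec{e}_1,\dots,\vec{e}_{|L|}}\!\bigl[Z < B-1\bigr] \;>\; 1 - \frac{1}{\poly(n)},
\]
which is exactly the statement of the lemma. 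The choice of $B$ is flexible: taking $B = \bigO{n}$ would even yield an exponentially small failure probability, which is what Section~\ref{subsection:fraction-marked} actually needs in order to take a union bound over all relevant subknapsacks and all intermediate levels.

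The only nontrivial step is bounding the per-sample probability $p$, but this is exactly what Lemma~\ref{lemma:yabound} provides; once this is in hand, the rest is a textbook Chernoff argument. The main technical caveat is that, strictly speaking, the $\vec{e}_i$ at the lowest level are sampled without replacement from $\mathcal{B}$; however, since $|L|\ll|\mathcal{B}|$, replacing sampling without replacement by sampling with replacement only strengthens the concentration (by the standard coupling argument for hypergeometric versus binomial tails), so the Chernoff bound applies unchanged. This completes the proof sketch. \qed
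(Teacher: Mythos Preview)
Your proposal is correct and follows essentially the same approach as the paper: fix $\vec{a}$ in the good set given by Lemma~\ref{lemma:yabound}, observe that the $X_{\vec{e}_i}(\vec{a})$ are i.i.d.\ Bernoulli with parameter $Y(\vec{a})/|\mathcal{B}|\leq 2/M$, and apply a Chernoff bound to the sum whose mean is $\bigO{1}$ since $M\simeq|L|$. The paper uses the multiplicative Chernoff form $e^{-\delta\mu/3}$ and takes $B$ linear in $n$, whereas you use the $(e\mu/t)^t$ form and note that $B=\Theta(\log n)$ already suffices for the $1/\poly(n)$ claim; your added remark on sampling without replacement is a welcome clarification that the paper handles elsewhere.
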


\begin{proof}

\begin{multline*}
\prob[\vec{e}_1, \cdots, \vec{e}_{|L|} \sim Unif(\mathcal{B})]{\sum_{i=1}^{|L|} X_{\vec{e}_i} (\vec{a})< B -1}  \\
=  \sum_{y=1}^{|\mathcal{B}|} \prob[\vec{e}_1, \cdots, \vec{e}_{|L|} \sim Unif(\mathcal{B})]{\sum_{i=1}^{|L|} X_{\vec{e}_i} (\vec{a})<B -1 | Y(\vec{a})=y} \prob{Y(\vec{a})=y}\\
\end{multline*}

Under the condition of $Y(\vec{a})=y$, for all $i \in [1,|L|]$, $X_{\vec{e}_i}(\vec{a})$ can be seen as a random variable following $\Ber(\frac{y}{|\mathcal{B}|})$, since $\vec{e}_i$s' are randomly chosen from $\mathcal{B}$. Here $\Ber(p)$ is a Bernoulli distribution of parameter $p$. 

Using equation (\ref{eq:1}), for a $1-\negl(n)$ portion of $\vec{a}\in \mathbb{Z}_{N}^n$, we have
\begin{align*}
&\prob[\vec{e}_1, \cdots, \vec{e}_{|L|} \sim Unif(\mathcal{B})]{\sum_{i=1}^{|L|} X_{\vec{e}_i} (\vec{a})<B -1}\\
& >  \sum_{y=1}^{2\cdot\frac{|\mathcal{B}|}{M}} \prob[\vec{e}_1, \cdots, \vec{e}_{|L|} \sim Unif(\mathcal{B})]{\sum_{i=1}^{|L|} X_{\vec{e}_i} (\vec{a})<B -1 | Y(\vec{a})=y} \prob{Y(\vec{a})=y}    \\
& =  (1-\negl(n))\sum_{y=1}^{2\cdot\frac{|\mathcal{B}|}{M}} \prob[\vec{e}_1, \cdots, \vec{e}_{|L|} \sim Unif(\mathcal{B})]{\sum_{i=1}^{|L|} X_{\vec{e}_i} (\vec{a})<B -1 | Y(\vec{a})=y} \\
& >(1-\negl(n))\prob[\vec{e}_1, \cdots, \vec{e}_{|L|} \sim Unif(\mathcal{B})]{\sum_{i=1}^{|L|} X_{\vec{e}_i} (\vec{a})<B-1| Y(\vec{a})= 2\cdot \frac{|\mathcal{B}|}{M}}\\
& = (1-\negl(n))\prob[X_{\vec{e}_i}\sim \Ber(2\cdot  \frac{|\mathcal{B}|}{M}\cdot \frac{1}{|\mathcal{B}|})=\Ber(\frac{2}{M})]{\sum_{i=1}^{|L|} X_{\vec{e}_i} (\vec{a})<B-1}
\end{align*}

Chernoff's inequality gives that for any $\delta \geq 1$:
$$ \prob{ \sum_{i=1}^{|L|} X_{\vec{e}_i} (\vec{a}) \geq (1+\delta) \frac{2|L|}{M} } \leq e^{ - \frac{\delta}{3} \frac{2|L|}{M} } \enspace.$$

Hence, when $M = |L|$, by taking $B$ linear in $n$, we obtain that the probability of being unmarked due to this $\vec{e_0}$ is less than $\frac{1}{\poly(n)}$. \qed
\end{proof}


%

\end{document}